\documentclass[11pt, letterpaper]{article}

\usepackage[utf8]{inputenc}
\usepackage[T2A,T1]{fontenc} 
\usepackage[english]{babel}   
\usepackage{amsmath, amssymb, amsthm}
\usepackage{geometry}
\usepackage{hyperref}
\usepackage{enumitem}
\usepackage{algorithm}
\usepackage{algpseudocode}
\usepackage{tikz}
\usetikzlibrary{decorations.pathreplacing}

\hypersetup{
    colorlinks=true,
    linkcolor=blue,
    citecolor=red,
    pdftitle={Limit on the computational power of C-random strings},
    pdfauthor={Anonymous Author(s)} 
}

\newtheorem{theorem}{Theorem}[section]
\newtheorem{lemma}[theorem]{Lemma}
\newtheorem{definition}[theorem]{Definition}
\newtheorem{proposition}[theorem]{Proposition}

\newtheorem{corollary}[theorem]{Corollary}

\newcommand{\Str}{\{0,1\}^*}
\newcommand{\N}{\mathbb{N}}
\renewcommand{\C}{\mathrm{C}}

\newcommand{\K}{\mathrm{K}}

\newcommand{\cnd}{\mid}
\newcommand{\PP}{\mathrm{P}}
\newcommand{\m}{\mathbf{m}}
\newcommand{\ES}{\mathrm{EXPSPACE}}

\title{Limit on the computational power of $\C$-random strings}
\author{Alexey Milovanov}
\date{} 

\begin{document}

\maketitle

\begin{abstract}
We construct a universal decompressor $U$ for plain Kolmogorov complexity $\C_U$ such that the Halting Problem cannot be decided by any polynomial-time oracle machine with access to the set of random strings $R_{\C_U} = \{x : \C_U(x) \ge |x|\}$. This result resolves a problem posed by Eric Allender in~\cite{ea} regarding the computational power of Kolmogorov complexity-based oracles.
\end{abstract}
\tableofcontents
\newpage

\section{Introduction}
Informally the Kolmogorov complexity of a string $x$ is defined as the minimal length of a program that outputs $x$ on the empty input. This definition requires refinement; there are various types of complexity (plain, prefix, and others), but even once the type is fixed, there remains a dependency on the choice of the programming language or decompressor. We provide all the definitions needed for this paper in Section~\ref{sec:def}; we also refer the reader to~\cite{LiVit,suv} for formal definitions and the main properties of Kolmogorov complexity.

Kolmogorov complexity provides a notion of an individual random string: a string $x$
is considered random if its complexity is close to its length.
Let $R$ denote the set of all such random strings. A natural question is:
how powerful is the oracle $R$? For example, what languages belong to $\PP^R$?

This and similar questions were investigated in~\cite{abk, abkmr, sh}. In these papers, some \emph{lower bounds} for $\PP^R$, $\PP/\mathrm{poly}^R$, and other complexity classes were established. These results are robust in the following sense: they hold for all reasonable definitions of $R$; in particular, it does not matter what type of Kolmogorov complexity we consider.

How are these non-trivial lower bounds for classes such as $\mathrm{BPP}^R$
(which equals $\PP^R$~\cite{abkmr}) obtained?
The key observation is that access to $R$ allows a machine to distinguish
true randomness from pseudorandomness, effectively breaking any computable
pseudorandom generator. Since this argument works for essentially any
natural definition of the set of random strings $R$, it suggests that the
computational power of $\PP^R$ should be invariant across standard
definitions of $R$.

However, proving this invariance—and in particular, establishing matching \emph{upper bounds}—turns out to be exceptionally difficult. The situation for upper bounds is highly sensitive to the specific definition of complexity. The results in~\cite{afg, crelm, m} establish limits on the computational power of $R$ specifically for \emph{prefix} complexity. For example, the results in~\cite{afg, crelm} together with the lower bound obtained in~\cite{sh} show that
\[ \text{EXP}^{\text{NP}} \subseteq \bigcap_U \PP^{ R_{\text{K}_U}} \subseteq \ES. \]
Here, the intersection is taken over all universal prefix-free decompressors $U$. While there is a significant gap between the lower and upper bounds for prefix complexity, it is at least known that the intersection does not contain any undecidable language. 

For \emph{plain} complexity, even this basic limitation remained unknown. Prior to our work, it was an open question whether the Halting Problem could be decided by a polynomial-time machine with access to the plain complexity oracle. This paper provides a definitive step towards clarifying the situation for plain complexity.

Let $\C_U(x)$ denote the plain Kolmogorov complexity of a string $x$ with respect to a universal decompressor $U$. Instead of working directly with the set of random strings, we consider the stronger cumulative complexity oracle:
\[ F_{\C_U} = \{ \langle x, k \rangle \in \Str \times \N : \C_U(x) \le k \}. \]

Let $H$ denote the Halting Problem:
\[ H := \{ x \mid \text{program } x \text{ halts on the empty input}\}. \]

Our main result is the following theorem.

\begin{theorem}
\label{main}
There exists a universal decompressor $U$ such that $H \notin \PP^{F_{\C_U}}$.
\end{theorem}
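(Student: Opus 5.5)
We build $U$ by a diagonalization against all polynomial-time oracle machines $M_1, M_2, \dots$, where $M_i$ runs in time $n^i+i$. The plan is the same as in the prefix-complexity upper bounds of~\cite{afg,crelm}, adapted to plain complexity: it is a game between ourselves, who construct $U$ by enumerating description requests, and each $M_i$, which wants $M_i^{F_{\C_U}} = H$.

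We start from a fixed optimal plain decompressor $V$. We require $\C_U(x) \le \C_V(x)+c$ for some constant $c$, which makes $U$ universal. Beyond this, we are allowed to add extra short descriptions, subject to a budget: at most $2^k$ descriptions of length $k$ for every $k$, obtained by reserving half of each length for $V$-copies and half for our own requests. So $\C_U$ is determined by $\C_V$ together with our enumerated requests. Each request only lowers complexities and must be made computably.

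Rather than compute the full truth, the construction works in a finite-information way. For requirement $i$ we pick a fresh large input length $n$ and a program $p$ of length $n$ whose halting status we do not yet commit to; for instance, we diagonalize against $H$ by building a halting problem instance $p$ that simulates our own construction. $M_i$ on $p$ queries at most $n^i$ pairs $\langle x,k\rangle$. The answers to queries with $k$ below roughly $2\log n^i$ or so can be handled by brute force. For larger $k$, the answer $\C_U(x)\le k$ depends on the limit behavior. We run a game on the finite graph of all strings of length at most $n^i$. We play ``lowering'' moves, which request short descriptions, and the universe plays $V$'s enumeration, which also lowers complexities. If we can force $M_i$'s answer on $p$ to be stable whatever $V$ does, then $p$ halts iff that answer is ``no'', and $M_i$ errs. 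Since $p$'s behavior is defined by simulating the construction, this is a recursion-theorem argument. The limit computation is the delicate part: $p$ must halt after a finite time.

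The crucial tool is a strategy-stealing or counting lemma for the finite game. We need a winning strategy for the decompressor side, so that $M_i$'s computation path converges after boundedly many changes, using only $O(n^{i}\cdot\text{poly})$ budget. That budget is a negligible fraction of $2^k$ when $k \ge 2i\log n$. This gives a computable strategy in the style of the EXPSPACE upper bound: the number of times $M_i$'s answer flips is bounded, and $p$ can wait for the last flip. That is not computable in general, so $p$ must instead halt at a flip we choose. We then respond by further lowering to flip $M_i$ back, and the budget argument bounds the total number of responses.

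I expect the main obstacle to be this last step. Plain complexity lacks the Kraft-inequality control that prefix complexity has, so the adversary ($V$) can lower many strings cheaply. We must show the decompressor player can always keep $M_i$'s oracle answers at a configuration of our choice despite $V$'s moves, within the $2^k$ per-length budget. This needs a careful potential/counting argument over lengths $k$ between $2i\log n$ and $n^i$. We must also handle interference among different requirements $i$ by choosing lengths $n$ spaced far apart (tower growth).
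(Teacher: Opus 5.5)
Your proposal has a genuine gap, and it sits exactly where you suspect. You never show that the decompressor side can win against $V$ in the plain setting, or even that there is a finite game with a computable winner that you could exploit.

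The paper says explicitly that in the naive version of your game it is unknown who wins. That version has per-level budgets, and the constructor is barred from levels $k\le\log m$ because those levels are shared by infinitely many requirements. The trouble is that $V$ alone controls the low-level answers and can use them to steer $M_i$'s queries, wasting your moves at higher levels. Two of your concrete claims fail for this reason.
\begin{enumerate}
\item Queries with $k\lesssim 2i\log n$ cannot be ``handled by brute force''. Their answers are fixed by $V$ only in the limit. In your recursion-theorem setup, $V$ may change them after $p$ has already halted, at levels where you may not respond, and $H(p)$ cannot be revised.
\item The $O(n^i\cdot\mathrm{poly})$ bound on your spending is unjustified. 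Each answer flips at most once, but the set of pairs $\langle x,k\rangle$ that $M_i$ might query along different computation paths is exponential. So the number of times $V$ can flip $M_i$ is bounded only by $V$'s own per-level budget of about $2^{k-c}$, which is comparable to yours. A per-length counting argument cannot close this, since budgets at different lengths are not interchangeable and you may have to respond at a lower level than the one $V$ used.
\end{enumerate}

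What is missing are the paper's devices for making the game tractable.
\begin{enumerate}
\item \textbf{Freezing the low levels.} Restart the game each time $V$ produces a new $w$ with $\C_V(w)\le n$. After the last such string $w_n$ appears, all answers at levels $\le m=n+d$ are final, so play happens only at levels $q>m$. An $H$-based version would need a fresh program per restart. The paper instead reduces $H$ to computing $N_m$, which is simply re-read at each restart.
\item \textbf{Capping $V$ after $w_n$.} By the Tail Upper Bound, $V$ produces at most $c_1 2^{\mathcal N-n}$ strings of complexity $\le\mathcal N=2^n$ after $w_n$.
\item \textbf{A common currency.} The Key Lemma gives a threshold $B=\mathcal A(n,w,x,q)$, computable from $w$. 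If $\C_V(x)\le q$, there are at least $B$ tuples $(x,q,w,s)$ of complexity $\le\mathcal N$. Conversely, $B$ such tuples of complexity $\le\mathcal N+\sqrt d$ force $\C_V(x)\le q+d/2$.
\end{enumerate}

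The third device converts a move at any level $q$ into spending at the single level $\mathcal N$, with equal budgets $c_1 2^{\mathcal N-n}$ for both players. The game becomes symmetric and finite, and its winner is computable. Strategy stealing handles the case where $V$'s side wins: one hidden $m$-simple string of length $>2^m$ shifts the target. The safety check $\C_V(x)\le q-d/2$ keeps your payoff directives within budget. Without mechanisms playing these roles, your outline names the obstacle but does not yet overcome it.
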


\begin{corollary}
\label{cor:random_set}
Let $R_{\C_U} = \{x \in \Str : \C_U(x) \ge |x|\}$ be the set of random strings. Then $H \notin \PP^{R_{\C_U}}$.
\end{corollary}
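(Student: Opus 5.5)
The corollary should follow from Theorem~\ref{main} by showing that any polynomial-time query to $R_{\C_U}$ can be answered by a single query to $F_{\C_U}$. Fix the universal decompressor $U$ given by Theorem~\ref{main}, and suppose for contradiction that $H \in \PP^{R_{\C_U}}$, witnessed by a polynomial-time oracle machine $M$. Build a new oracle machine $M'$ that simulates $M$ step by step. Whenever $M$ asks whether some string $x$ lies in $R_{\C_U}$, the machine $M'$ instead asks whether $\langle x, |x|-1\rangle \in F_{\C_U}$. It then returns the negated answer to the simulation of $M$.

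This works because, by definition, $x \in R_{\C_U}$ iff $\C_U(x) \ge |x|$. Since complexity values are integers, this is equivalent to $\C_U(x) > |x|-1$, that is, to $\langle x,|x|-1\rangle \notin F_{\C_U}$. For $x$ the empty string, $|x|-1=-1$ is not a natural number, so $M'$ does not query at all there. It simply answers ``yes,'' since $\C_U(\text{empty}) \ge 0$ always holds and the empty string is always in $R_{\C_U}$.

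The translation of each query is computable in polynomial time: $M'$ writes down $x$, computes $|x|-1$ in binary, and forms the pairing $\langle\cdot,\cdot\rangle$. The resulting queries have length polynomial in $|x|$, and $M$ makes only polynomially many queries. So $M'$ runs in polynomial time and decides $H$ with oracle $F_{\C_U}$. That gives $H \in \PP^{F_{\C_U}}$, contradicting Theorem~\ref{main}.

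There is no serious obstacle here; the reduction is a one-query truth-table (even many-one with negation) reduction from $R_{\C_U}$ to $F_{\C_U}$. The only points needing care are the empty-string edge case and checking that the pairing encoding used in the definition of $F_{\C_U}$ is polynomial-time computable, which holds for any standard encoding.
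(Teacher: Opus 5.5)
Your proposal is correct and is the argument the paper intends: since $x \in R_{\C_U}$ iff $\langle x, |x|-1\rangle \notin F_{\C_U}$, each query to $R_{\C_U}$ is answered by one negated query to $F_{\C_U}$, so $\PP^{R_{\C_U}} \subseteq \PP^{F_{\C_U}}$ and Theorem~\ref{main} gives the contradiction. The paper leaves this step implicit, only calling $F_{\C_U}$ the ``stronger'' oracle, and your handling of the empty string and of the polynomial-time query translation supplies the remaining details.
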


\subsection*{Related works}
In~\cite{m25}, it was shown that for \emph{some} universal $U$ the Halting Problem belongs to $\PP^{R_{\C_U}}$. Together with our main result, this implies that the computational power of $\PP^{R_{\C_U}}$ significantly depends on the choice of the universal machine $U$.

In~\cite{kum}, Kummer showed that for each universal machine $U$, there is a time-bounded disjunctive truth-table reduction from $H$ to $R_U$. That is, there is a computable function that takes an input $x$ and produces a list of strings, with the property that $x \in H$ if and only if at least one of the strings is in $R_U$. Therefore, the polynomial restriction in Theorem~\ref{main} is significant.

\subsection*{Roadmap}
In Section~\ref{sec:overview}, we present a comprehensive technical overview of our framework, including the underlying combinatorial game and the formal statement of the Key Lemma. In Section~\ref{sec:def}, we recall the definition of plain Kolmogorov complexity and its basic properties.  Section~\ref{sec:constr}  proves the main theorem. Section~\ref{sec:key} contains the formal proof of the Key Lemma. Finally, omitted proofs of auxiliary statements are provided in the Appendix.

\section{Framework of the Proof}
\label{sec:overview}

Our goal is to construct a universal machine $U$
such that $H \notin \PP^{F_{\C_U}}$.
We first show that if the Halting Problem were solvable
in polynomial time relative to $F_{\C_U}$,
then a certain counting problem would also be solvable in polynomial time.
 The core of our paper is then to construct a specific $U$ that forces any polynomial-time oracle machine to fail at this task.

\subsection{From the Halting Problem to Counting}

Let $U$ be an arbitrary universal decompressor. We define the following task relative to $U$.

\begin{definition}[\texttt{COUNT-SIMPLE}]
    Given input $1^m$, output the integer $N_m$ representing the count of all strings with complexity at most $m$ with respect to $U$:
    \[ N_m = \left| \{ x \in \Str : \C_U(x) \le m \} \right|. \]
\end{definition}

Since any string $x$ with $\C_U(x) \le m$ is generated by a program of length at most $m$, we have the natural bound $0 \le N_m < 2^{m+1}$. 
The connection between the Halting Problem and this counting task is captured by the following lemma.

\begin{lemma}[Implication of $H \in \PP^{F_{\C_U}}$]
\label{lem:impl}
    If $H \in \PP^{F_{\C_U}}$, then there is a polynomial-time machine $M$ with oracle $F_{\C_U}$ that on input $1^m$ outputs $N_m$ (i.e., solves \texttt{COUNT-SIMPLE}).
\end{lemma}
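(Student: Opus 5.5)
The plan is to recover $N_m$ by binary search, where each comparison ``$N_m \ge t$?'' is turned into a single membership query to $H$. The key observation is that the predicate $N_m \ge t$ is computably enumerable in $(m,t)$. Indeed, $U$ is a computable partial function, so we can dovetail the computations $U(p)$ over all programs $p$ with $|p| \le m$. Whenever some $U(p)$ halts with output $x$, we record $x$ in a list of distinct outputs. The search halts as soon as the list contains $t$ distinct strings. It halts if and only if at least $t$ strings have $\C_U(x) \le m$, i.e.\ if and only if $N_m \ge t$.

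Concretely, for each pair $(m,t)$ with $0 \le t \le 2^{m+1}$, I would define a program $p_{m,t}$ in the fixed programming language underlying $H$. It consists of a fixed code segment, implementing the dovetailing above with the description of $U$ hard-wired, together with the binary encodings of $m$ and $t$. Since $t < 2^{m+2}$, the length of $p_{m,t}$ is $O(m)$. Moreover, the map $(1^m, t) \mapsto p_{m,t}$ is computable in time polynomial in $m$: it only pastes the two numbers into a fixed template. By construction,
\[ p_{m,t} \in H \iff N_m \ge t. \]

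The machine $M$ then works as follows on input $1^m$. Using the bound $0 \le N_m < 2^{m+1}$, it runs binary search over $t \in [0, 2^{m+1}]$ for the largest $t$ with $N_m \ge t$. This takes $m+2$ comparisons. Each comparison is answered by running the assumed polynomial-time oracle machine for $H$ (with oracle $F_{\C_U}$) on the input $p_{m,t}$. That input has length $O(m)$, so each such run takes $\mathrm{poly}(m)$ steps. In total $M$ uses $O(m)\cdot\mathrm{poly}(m)$ time and outputs $N_m$ exactly, which is what Lemma~\ref{lem:impl} requires.

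The only step needing care, which I expect to be the main (mild) obstacle, is the encoding. We must check that $p_{m,t}$ really is a syntactically valid program in the language with respect to which $H$ is defined, that its length is linear in $m$, and that it is produced in polynomial time. This depends on a standard assumption about the programming language: it admits an $s$-$m$-$n$-type parametrization with only linear blow-up in the size of the embedded parameters. I would state this assumption explicitly. The remaining steps — correctness of the enumeration and the binary-search arithmetic — are routine.
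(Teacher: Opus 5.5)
Your proposal is correct and follows the same route as the paper: the predicate $N_m \ge t$ is computably enumerable, hence reducible to $H$, and binary search over $[0,2^{m+1}]$ recovers $N_m$ with $O(m)$ calls to the assumed $\PP^{F_{\C_U}}$ decider for $H$. You also check explicitly that the reduction $(1^m,t)\mapsto p_{m,t}$ is polynomial-time computable with output length $O(m)$, which the paper leaves implicit (it only asserts many-one reducibility to $H$) and which is exactly what makes the combined procedure run in polynomial time.
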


\textit{Proof sketch.} Consider the predicate $Q(m, k)$: ``Do there exist at least $k$ strings $x$ with $\C_U(x) \le m$?'' If $H \in \PP^{F_{\C_U}}$, a polynomial-time machine can decide $Q(m, k)$. Using a standard binary search, the machine can pinpoint the exact value of $N_m$ in $O(m)$ queries. The formal proof is deferred to the Appendix. \qed

\subsection{The Architecture of the Universal Machine $U$}
\label{subsec:framework}
We construct our universal machine $U$ by combining a fixed optimal decompressor $V$ with a constructed partial computable function $G$. This architecture also depends on a constant $d \ge 1$. Both $G$ and $d$ will be formally determined in subsequent sections.

\begin{definition}[Universal Machine Structure]
\label{def:ums}
Given a fixed optimal universal decompressor $V$, a partial computable function $G$, and a constant $d \ge 1$, we define the machine $U$ on input $w \in \Str$ as follows:
\begin{enumerate}
    \item \textbf{Simulation Mode:} If $w = 0^d p$ (a string of $d$ zeros followed by $p$), then $U(w) = V(p)$.
    \item \textbf{Diagonalization Mode:} If $w = 1p$, then $U(w) = G(p)$.
\end{enumerate}
\end{definition}

This ensures that $U$ is a universal decompressor,
since $\C_U(x) \le \C_V(x)+d$ for all $x$.
This bound holds regardless of how we define $G$, provided $d \ge 1$, which  guarantees that the domains of the two modes are disjoint. Our core strategy relies on defining $G$ to create ``short'' programs (via the $1p$ prefix) for specific target strings, effectively bypassing the $d$-bit overhead to manipulate the oracle queries.

\paragraph{The Diagonalization Goal.}
Let $\{M_1, M_2, \dots\}$ be an enumeration of all polynomial-time oracle Turing machines, where each machine appears infinitely often.
Our goal is to construct $G$ so that for every machine $M_l$ there exists a specific input length $m$ such that:
\[
    M_l^{F_{\C_U}}(1^m) \neq N_m.
\]

\paragraph{Working Zones and the Tower Sequence.}
To manage the complex dependencies between the oracle queries made by $M_l$ and our dynamic construction of $U$, we partition the string lengths into distinct working zones. We define a rapidly growing sequence of lengths (the tower sequence). Let $a_0 = 1$ and $a_{l+1} = 2^{a_l}$.

For each step $l \ge 1$, we define the target input length $m$ for the diagonalization against machine $M_l$ as:
\[ m = n + d, \quad \text{where } n = a_{2l}. \]

\textit{Notation remark.}
We distinguish between the parameter $n$ (tied to $V$) and $m$ (tied to $U$).
Since $U$ simulates $V$ with a $d$-bit overhead, any string with $\C_V(x) \le n$
satisfies $\C_U(x) \le n+d = m$.
Thus, controlling complexities relative to $V$ determines the set of
$m$-simple strings for $U$.

We associate a specific ``working zone'' of lengths with each machine $M_l$, consisting of strings with lengths strictly between $a_{2l-1}$ and $a_{2l+1}$. Within this zone, we define the partial function $G$ to ensure that the count produced by $M_l$ is incorrect.

The logic of the construction is as follows:
\begin{enumerate}
    \item The machine $M_l$, on input $1^m$, may query the oracle $F_{\C_U}$ about various strings.
    \item We do not control the oracle's answers regarding small complexity values $k \le \log n = a_{2l-1}$. 
    \item We explicitly define $G$ for strings in the current zone (lengths from $\log n + 1$ to $2^n$) to force the discrepancy $M_l^{F_{\C_U}}(1^m) \neq N_m$.
\end{enumerate}

Crucially, our construction must succeed regardless of the oracle's fixed answers for the ``small'' complexities. The separation of these zones is illustrated in Figure~\ref{fig:zones}.

\begin{figure}[ht]
    \centering
    \begin{tikzpicture}[scale=0.8]
        \coordinate (Apex) at (0, 7);       
        \coordinate (MidL) at (-2, 4);      
        \coordinate (MidR) at (2, 4);       
        \coordinate (BotL) at (-4.5, 0);    
        \coordinate (BotR) at (4.5, 0);     

        \filldraw[fill=gray!30, draw=black, thick] 
            (Apex) -- (MidL) -- (MidR) -- cycle;

        \node[align=center, font=\scriptsize] at (0, 5.2) {working zone\\for $l' < l$};

        \draw[thick] (MidL) -- (BotL);      
        \draw[thick] (MidR) -- (BotR);      
        \draw[thick] (BotL) -- (BotR);      

        \node[align=center, font=\large] at (0, 1.8) {working zone for $l$};

        \draw[dashed, thick] (BotL) -- (-5, -0.8);
        \draw[dashed, thick] (BotR) -- (5, -0.8);

        \draw[->] (-6, 7) -- (-6, -0.5) node[below] {Length};
        
        \draw (-6.2, 7) -- (-5.8, 7);
        \draw (-6.2, 4) -- (-5.8, 4);
        \draw (-6.2, 0) -- (-5.8, 0);

        \node[left] at (-6.4, 7) {$0$};
        \node[left] at (-6.4, 4) {$a_{2l-1} = \log n$};
        \node[left] at (-6.4, 0) {$a_{2l+1} = 2^{n}$};

        \draw[dotted, gray] (-5.8, 7) -- (Apex);
        \draw[dotted, gray] (-5.8, 4) -- (MidL);
        \draw[dotted, gray] (-5.8, 0) -- (BotL);

    \end{tikzpicture}
    \caption{Structure of the construction. The gray area represents the history fixed by lower levels, while the white area is the active zone for level $l$.}
    \label{fig:zones}
\end{figure}

\subsection{The Diagonalization Game: A Heuristic Model}
\label{subsec:simple_game}

To explain the intuition behind the construction of $G$,
we view the diagonalization as a game between two players: \textbf{Alice} (representing the constructed function $G$) and \textbf{Bob} (representing the fixed optimal decompressor $V$), with the machine $M_l$ acting as a referee. 

\textit{Remark.}
This game is a simplified heuristic model.
The formal proof uses a refined version,
but the present formulation captures the main difficulty
and the intuition behind our construction.

\paragraph{The Setup and Goals.} 
Fix a polynomial-time oracle machine $M$ and an input length $m$. The game is initialized by setting the complexity of every string $x$ to its length: $\C(x) = |x| + O(1)$. 
The machine $M$ attempts to compute the target count $N_m = |\{x \mid \C(x) \le m\}|$ by querying the oracle ``Is $\C(x) \le k$?''.
\begin{itemize}
    \item \textbf{Bob's Goal:} Ensure the machine outputs the correct count, $M(1^m) = N_m$.
    \item \textbf{Alice's Goal:} Force the machine to output an incorrect count, $M(1^m) \neq N_m$.
\end{itemize}

\paragraph{The Rules and Budgets.} 
The game proceeds in rounds. In each round, a player can choose a string $x$ and a value $k < |x|$ to \textbf{reduce} its complexity to $k$ (effectively declaring $\C(x) = k$). Players are constrained by budgets derived from standard Kolmogorov complexity bounds:
\begin{itemize}
\item \textbf{Bob's Budget:}
for each $k$, he may assign complexity $\le k$
to at most $2^{k-d}$ strings.

\item \textbf{Alice's Budget:}
she cannot act for $k\le \log m$.
For $k>\log m$ she may assign complexity $\le k$
to at most $2^{k-1}$ strings.
\end{itemize}

The game ends when a player whose goal is currently failing has no valid moves left. If Alice has a winning strategy against \emph{any} allowable strategy of Bob, we can construct the partial function $G$ to satisfy Theorem~\ref{main}.

\paragraph{The Obstacle.}
It is unknown whether Alice has a winning strategy
in this basic game.
The difficulty lies in the asymmetric treatment
of small complexities ($k\le\log m$):
Bob fully controls these queries and may use them
to determine the behavior of $M$, potentially 
neutralizing Alice's moves on longer strings.

\paragraph{The Resolution.} 
The crucial insight is that Bob is not an arbitrary adversary; he represents the optimal decompressor $V$. Therefore, the canonical enumeration of his domain satisfies strict structural properties. In the next subsection, we introduce these structural properties, which transform this basic game into a strictly constrained setting where Alice \emph{does} have a provable winning strategy.

\subsection{Structural Properties of the Enumeration}
\label{subsec:structural_overview}

To formalize the constraints on Bob's moves, we analyze the canonical enumeration of the following set $\{ \langle x, k \rangle \mid \C_V(x) \le k \}$. 

\begin{definition}
Denote by $w_n$ the last string $w$  such that $\C_V(w) \le n$ that appears in this enumeration.
\end{definition}

For any $u > n$, let us analyze the ``tail'' of the enumeration for threshold $u$ relative to the moment $w_n$ appears.

\begin{definition}
Denote by $\operatorname{Rest}(n, u)$ the number of strings of complexity at most $u$ that appear in the enumeration \emph{after} $w_n$.
\end{definition}

We prove that this tail cannot be arbitrarily large. Specifically, if $u$ is computable from $n$, the remaining count is bounded by $2^{u - n + O(1)}$. 
To establish this, we first note the complexity property of the marker string $w_n$. Although the following propositions are likely known, we provide their formal proofs in the Appendix for the sake of completeness.

\begin{proposition}
\label{prop:wn_complexity}
The conditional complexity of $w_n$ satisfies:
\[ \C_V(w_n \cnd n) = n - O(1). \]
\end{proposition}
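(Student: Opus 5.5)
The upper bound is immediate: $\C_V(w_n \cnd n) \le \C_V(w_n) + O(1) \le n + O(1)$, because $\C_V(w_n)\le n$ by definition. The real content is the lower bound $\C_V(w_n \cnd n) \ge n - O(1)$. I would prove it by showing that, given $n$, the string $w_n$ carries enough information to reconstruct a string of complexity greater than $n$, and hence cannot have a short description.

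Fix $n$ and run the canonical enumeration of $\{\langle x,k\rangle : \C_V(x)\le k\}$ up to the moment $w_n$ appears. At that moment, every string $x$ with $\C_V(x)\le n$ has already appeared with some $k\le n$; this is exactly the definition of $w_n$. Let $y$ be the lexicographically first string of length $n+1$ not in the enumerated list; it exists because fewer than $2^{n+1}$ strings have complexity at most $n$. Then $y$ is determined by the pair $(w_n, n)$, so $y = \phi(w_n, n)$ for a fixed partial computable $\phi$.

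Now suppose $\C_V(w_n \cnd n) = c$. Composing a conditional description with $\phi$ gives $\C_V(y \cnd n) \le c + O(1)$. To remove the condition $n$, I would run the same procedure conditioned on $n$ itself:
\[ n < \C_V(y) \le \C_V(y\cnd n) + O(\log n) \]
This only yields $c \ge n - O(\log n)$, so the condition must be removed more efficiently.

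\emph{Removing the $\log n$ loss is the main obstacle.} I would avoid the loss by choosing the output string so that it encodes $n$ itself: fix the output length to be $n+1$ together with a self-delimiting trick. Concretely, given a description $p$ of $w_n$ relative to $n$ with $|p| = c = n - t$, I would let the decoder recover $n$ as $|p| + t$. That is, for each fixed $t$ the map $p \mapsto \phi(V(p \cnd |p|+t),\,|p|+t)$ is partial computable, so
\[ \C_V(y) \le n - t + O(1) + \C(t) \]
Since $\C_V(y) > n$, we get $t < \C(t) + O(1)$, and as $\C(t) \le 2\log t + O(1)$ this bounds $t$ by a constant. Hence $\C_V(w_n\cnd n)\ge n - O(1)$. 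I would double-check that the plain conditional complexity is defined so that $V(\cdot\cnd n)$ makes sense, and that describing $t$ costs only a constant overhead beyond $\C(t)$ through a standard pairing with $p$ (for instance, prefixing a self-delimiting code of $t$). This uses about $2\log t$ bits, which is still fine.
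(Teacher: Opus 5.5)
Your proof is correct and follows essentially the same route as the paper: write the deficiency $t=n-|p|$ with a self-delimiting code, recover $n=|p|+t$, decode $w_n$, run the enumeration until $w_n$ appears with a bound $\le n$, and output the first string missing from the list. Comparing $n<\C_V(y)\le n-t+O(\log t)$ then forces $t=O(1)$; the preliminary $O(\log n)$ detour is unnecessary but harmless.
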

\begin{proposition}[Tail Upper Bound]
\label{prop:tail_bound_overview}
Let $n < u$ be integers such that $u$ is efficiently computable from $n$ (i.e., $\C_V(u \cnd n) = O(1)$). Then:
\[ \operatorname{Rest}(n, u) \le 2^{u - n + O(1)}. \]
\end{proposition}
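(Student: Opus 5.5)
We will prove the Tail Upper Bound by compressing $w_n$ given $n$ and then invoking Proposition~\ref{prop:wn_complexity}. Write $R = \operatorname{Rest}(n,u)$ and fix $j$ with $2^{j} \le R < 2^{j+1}$; if $R = 0$ there is nothing to prove. We will show that, given $n$, the string $w_n$ can be described with $u - j + O(\log\ldots)$ bits. More precisely, the description will take $u - j + O(1)$ bits once we handle the length bookkeeping carefully. Comparing this with $\C_V(w_n \cnd n) \ge n - O(1)$ gives $j \le u - n + O(1)$, hence $R < 2^{j+1} \le 2^{u-n+O(1)}$.

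\textbf{The decoding procedure.} Suppose the decoder is given $n$, which determines $u$ since $\C_V(u \cnd n) = O(1)$. It runs the canonical enumeration of $\{\langle x,k\rangle : \C_V(x)\le k\}$ and records the order in which new strings of complexity $\le u$ appear. Let $t$ be the total number of strings of complexity $\le u$ in the final enumeration, so $t < 2^{u+1}$. Then $w_n$ appears (as a $\le n$ string, hence also as a $\le u$ string, at or before that moment) with exactly $R$ strings of complexity $\le u$ appearing after it. Set $T = t - R$, the number of $\le u$ strings enumerated up to and including the moment $w_n$ appears. The key point is that $T$ only needs to be known approximately. The decoder is given $q = \lfloor T/2^{j} \rfloor$, which takes about $u+1-j$ bits, together with $j$. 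It waits until $q\cdot 2^{j}$ strings of complexity $\le u$ have appeared. At that moment the enumeration stage is at most the stage of $w_n$, and since $T - q2^j < 2^j \le R$, the stage of $w_n$ is reached before all strings have appeared. The decoder then enumerates further until it has seen exactly $T$ strings, and outputs the last $\le n$ string it has seen, which is $w_n$. To make this exact, it is cleaner to give the decoder $T$ directly. However, $T$ costs $u+1$ bits, which is too much, so instead we use the following variant.

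\textbf{Refined plan.} The difficulty is that we must locate $w_n$ exactly while paying for only $u - j$ bits. We first try giving $q$ plus the index of $w_n$ among the $\le n$ strings appearing in the window of the enumeration between the $q2^j$-th and the $(q+1)2^j$-th $\le u$ string. Each such string is also a $\le u$ string, so the window contains at most $2^{j}$ of them. That index alone would cost $j$ bits, which is again too much. So the better choice is to exploit the fact that $w_n$ is the \emph{last} $\le n$ string. We output the last $\le n$ string seen once $(q+1)2^{j} \le T+R$ strings of complexity $\le u$ have appeared. This count is reached, because $(q+1)2^j \le T + 2^j \le T + R = t$, and it is reached no earlier than $w_n$'s stage, because $(q+1)2^j > T$. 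After $w_n$'s stage, no new $\le n$ string ever appears, since $w_n$ is the last one. Hence the last $\le n$ string seen is exactly $w_n$. So the description consists of $q$ (with $u - j + 1$ bits, its length determined by $u$ and $j$) and $j$ (self-delimited in $O(\log j)$ bits). This gives $\C_V(w_n\cnd n) \le u - j + O(\log j)$.

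\textbf{Main obstacle.} The remaining obstacle is removing the $O(\log j)$ term so that we get $2^{u-n+O(1)}$ rather than a polynomial slack. We would handle this by not encoding $j$ separately. Instead we encode $q$ as a string whose length is exactly $u+1-j$, so the decoder recovers $j = u+1-|q|$ from the program length, which is legitimate for plain complexity (conditioned on $n$, hence on $u$). This yields $n - O(1) \le \C_V(w_n\cnd n) \le u - j + O(1)$, which completes the proof.
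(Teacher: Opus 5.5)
Your proof is correct and follows essentially the same route as the paper. Both arguments describe $w_n$ given $n$ by specifying, in about $u-\log \operatorname{Rest}(n,u)$ bits (with the scale read off from the program length), a count of $\le u$ strings that lies strictly beyond the moment $w_n$ appears but not beyond the final total, and then compare with Proposition~\ref{prop:wn_complexity}. The only difference is cosmetic: the paper truncates the total count $N_u$ to its leading bits, whereas you round $T$ up to the next multiple of $2^j$.
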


Throughout the remainder of the paper, we denote the exponential threshold as $\mathcal{N} = 2^n$.
\begin{corollary}
\label{cor:exp_gap}
There exists a constant $c_1$ (dependent only on $V$) such that for every $n$:
\[ \operatorname{Rest}(n, \mathcal{N}) \le c_1 \cdot 2^{\mathcal{N} - n}. \]
\end{corollary}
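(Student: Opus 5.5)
The plan is to obtain Corollary~\ref{cor:exp_gap} as a direct instance of Proposition~\ref{prop:tail_bound_overview}, taking $u = \mathcal{N} = 2^n$. Two points need checking: that this $u$ meets the hypotheses of the proposition, and that the constant hidden in $O(1)$ does not depend on $n$. Once both hold, I set $c_1 = 2^{c}$, where $c$ is the constant supplied by the proposition, and the bound $\operatorname{Rest}(n,\mathcal{N}) \le 2^{\mathcal{N}-n+c} = c_1 2^{\mathcal{N}-n}$ is immediate.

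The hypotheses are easy. For $n \ge 1$ we have $n < 2^n$, so $n < u$. The map $n \mapsto 2^n$ is a fixed total computable function, so a program of constant size turns $n$ into $u$. Hence $\C_V(u \cnd n) \le c_0$, and $c_0$ depends only on $V$ and on this one program, not on $n$. If the convention is $n \ge 0$, the value $n=0$ is a single case; we cover it by enlarging $c_1$, since $\operatorname{Rest}(0,1)$ is finite.

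The one real subtlety is uniformity of the $O(1)$ in Proposition~\ref{prop:tail_bound_overview}. That proposition says the constant depends only on $V$ and on the bound for $\C_V(u\cnd n)$; it does not depend on $n$ itself. To be safe, I would re-read its proof in this specialization. That proof should bound the complexity of $w_n$ given $n$ by a description that consists of $u$ (obtained from $n$ in $O(1)$ bits) together with the index, of about $\log \operatorname{Rest}(n,u)$ bits, of the relevant element of the enumeration after $w_n$. Comparing this with the lower bound $\C_V(w_n\cnd n) \ge n - O(1)$ from Proposition~\ref{prop:wn_complexity} gives $\log \operatorname{Rest} \ge \ldots$ in the direction that bounds $\operatorname{Rest}$ from above, and every constant that enters is fixed once the map $n\mapsto 2^n$ is fixed. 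I expect this uniformity check to be the only step needing care. Everything else is substitution.
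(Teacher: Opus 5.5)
Your proposal is correct and follows the paper's route. The paper states the corollary without a separate proof, as the instance $u=\mathcal{N}=2^n$ of Proposition~\ref{prop:tail_bound_overview}, where $\C_V(2^n\cnd n)=O(1)$ holds uniformly in $n$. One correction to your aside on uniformity: the paper's proof of that proposition describes $w_n$ given $n$ by the leading $n-h+O(1)$ bits of $N_u$, which is roughly $u-\log\operatorname{Rest}(n,u)$ bits, not $\log\operatorname{Rest}(n,u)$ bits. Comparing this with $\C_V(w_n\cnd n)\ge n-O(1)$ is what bounds $\operatorname{Rest}$ from above, and every constant depends only on $V$ and the bound on $\C_V(u\cnd n)$, so the uniformity you want is immediate.
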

\subsection{The Key Lemma: Scaling Up Complexity via Extensions}
\label{subsec:key}
For an arbitrary length $u$, the exact value of $\operatorname{Rest}(n, u)$ is equal to $2^{u - \K(w_n \cnd u) + O(1)}$, where $\K$ denotes prefix complexity. However, we cannot use such an expression involving uncomputable prefix complexity directly in our combinatorial construction.

Instead, for strings of arbitrary complexity, we perform a reduction to strings of complexity $\mathcal{N}$. More precisely, we assert that $\C_V(x) \le q$ if and only if there are ``many'' extension tuples of the form $(x, q, w, s)$ among the strings with complexity at most $\mathcal{N}$. The precise threshold for what constitutes ``many'' can be computed given $w_n$. This formal relationship is captured by the following Key Lemma.

\begin{lemma}[Key Lemma]
\label{lemma:unified_gap}
There exists a deterministic algorithm $\mathcal{A}$ with the following properties.

\textbf{Input:}
\begin{itemize}
    \item An integer $n$ such that $n = a_i$ for some $i$ in the tower sequence ($a_0=1, a_{i+1}=2^{a_i}$).
    \item A binary string $w$ (candidate for the last simple string $w_n$).
    \item A target string $x$ of length at most $n^{\log n}$.
    \item An integer $q$ such that $n = a_i \le q < n^{\log n}$ (the complexity threshold).
\end{itemize}
\textbf{Output:} An integer $B$.

\textbf{Properties:} 
The algorithm $\mathcal{A}$ always halts and produces an integer $B$. Furthermore, if $w$ is indeed the last string in the enumeration of complexity $\le n$ relative to $V$ (i.e., $w = w_n$), then the following implications hold.

Define for arbitrary integer $d$ the standard and relaxed extension sets relative to $V$:
\begin{align*}
    T_n(x, q, w) &= \{ s : \C_V(x, q, w, s) \le \mathcal{N} \} \\
    T^{\sqrt{d}}_n (x, q, w) &= \{ s : \C_V(x, q, w, s) \le \mathcal{N} + \sqrt{d} \}.
\end{align*}

Then, for all sufficiently large $d$:
\begin{enumerate}
    \item \textbf{Low Complexity implies High Count:}
    If the string is simple, it has many extensions.
    \[ \C_V(x) \le q \implies |T_n(x, q, w)| \ge B. \]
    
    \item \textbf{High Count (in relaxed set) implies Low Complexity:}
    If there are many extensions even in the relaxed sense, the string must be simple.
    \[ |T^{\sqrt{d}}_n (x, q, w)| \ge B \implies \C_V(x) \le q + \frac{d}{2}. \]
\end{enumerate}
\end{lemma}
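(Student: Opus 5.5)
The plan is to define $B$ through a computable ``expected count'' that simulates the enumeration of $V$ up to the moment $w$ appears. Given $n$ and $w$, the algorithm $\mathcal{A}$ enumerates pairs $\langle y,k\rangle$ with $\C_V(y)\le k$ until $w$ appears with complexity $\le n$. If $w$ never appears this may not halt, so we first make $\mathcal{A}$ total: it runs the enumeration for a number of steps bounded by a fixed computable function of $(n,w)$ and falls back to a default output otherwise. Only the case $w=w_n$ matters for the properties, and there the enumeration halts. Let $t_w$ be that stage. $\mathcal{A}$ then computes $\C^{t_w}_V(x)$, the complexity of $x$ as seen at stage $t_w$, together with $\mathcal{N}=2^n$. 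We aim for a threshold of the form $B = 2^{\mathcal{N}-q-c\log n}$ or similar, where $c$ is a constant fixed in advance. The dependence on $x$ and $q$ enters only through the exponent; this avoids needing to know the true $\C_V(x)$.

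\textbf{Implication 1.} Suppose $\C_V(x)\le q$, and let $p$ be a program of length $\le q$ for $x$. Consider the set of tuples $(x,q,w,s)$ where $s=(p',r)$, $p'$ encodes $p$, and $r$ ranges over all strings of length $\mathcal{N}-q-O(\log n)$. Each such tuple is described by $p$, $r$, and self-delimiting encodings of $q$ and $w$. The encoding of $w$ is the delicate part, since $w$ has length up to roughly $n$. However, $w_n$ is determined by $n$ together with the enumeration, and $n=a_i$ is coded by $O(\log^* n)$ bits. The description therefore has length $q+|r|+O(\log n)\le\mathcal{N}$. Hence $|T_n(x,q,w)|\ge 2^{\mathcal{N}-q-O(\log n)}$, and we choose $B$ to match this count.

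\textbf{Implication 2.} Suppose $|T^{\sqrt d}_n(x,q,w)|\ge B$. Given $n$, $q$, $x$-independent data, and the tail information, we want to describe $x$ in about $q+d/2$ bits. Knowing $w_n$ (computable from $n$ plus $\le n$ bits, which is too many) is the issue. Instead we use Corollary~\ref{cor:exp_gap}, with $u=\mathcal{N}+\sqrt d$, which is computable from $n$ and $d$, so $\operatorname{Rest}(n,\mathcal{N}+\sqrt d)\le 2^{\mathcal{N}+\sqrt d-n+O(1)}$. The description of $x$ proceeds as follows. Given $n$, $q$, $d$, enumerate strings of complexity $\le\mathcal{N}+\sqrt d$, and, for each candidate $w'$ and each $x'$, watch the count of tuples $(x',q,w',\cdot)$. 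Declare $x'$ ``heavy'' once its count reaches $B$, which is computable from $n,q$. The number of heavy $x'$ at any time is at most $2^{\mathcal{N}+\sqrt d+1}/B\approx 2^{q+\sqrt d+c\log n}$, so $x$ is specified by its index among heavy strings. This gives $\C_V(x)\le q+\sqrt d+O(\log n)+O(\log d)$.

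\textbf{Main obstacle.} The additive $O(\log n)$ term is not dominated by $d/2$ for fixed $d$, so the plan above fails as stated. This is the hard step. My first attempt would be to remove all $\log n$ terms by making $B$ depend on $w$ more cleverly. I would set $B$ to be the number of strings of complexity $\le\mathcal{N}-q$ already enumerated before $w_n$, divided by $2^{n}$, or the analogous quantity. The upper bound on $\operatorname{Rest}$ from Proposition~\ref{prop:tail_bound_overview} and the lower bound from $\C_V(w_n\mid n)=n-O(1)$ (Proposition~\ref{prop:wn_complexity}) then make the $n$-bit cost of $w$ cancel exactly. The missing count would be supplied by the $O(1)$-bit-overhead argument ``count before $w_n$ plus index,'' leaving only constants and $\sqrt d$, which is $\le d/2$ for large $d$.
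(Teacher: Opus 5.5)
There is a genuine gap, and it is larger than the $O(\log n)$ slack you flag yourself. The lower bound in Implication~1 rests on a false step: $w_n$ is not cheaply ``determined by $n$ together with the enumeration''. The enumeration never signals that it is complete, and Proposition~\ref{prop:wn_complexity} gives $\C_V(w_n\cnd n)=n-O(1)$. A program containing a description of $x$ and an index $r$ gets $w_n$ for free only if it halts \emph{after} $w_n$ has been enumerated, since it can then read $w_n$ off as the last $n$-simple string seen so far. Otherwise it must spend about $n$ bits on $w$. Paying for $w$ literally gives only $|T_n(x,q,w)|\ge 2^{\mathcal{N}-q-n-O(\log q)}$, and with $B$ that small your heavy-string count in Implication~2 yields only $\C_V(x)\le q+n+\dots$.

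Your instinct that $B$ must depend on $w$ through counts taken before $w_n$ appears is right. But the repair you suggest (the number of $(\mathcal{N}-q)$-simple strings enumerated before $w_n$, divided by $2^n$) is, as stated, again about $2^{\mathcal{N}-q-n}$. It reinstates the $n$-bit cost instead of cancelling it. What is missing is the past/future split of Lemma~\ref{lemma:volume_decomposition}. Extensions enumerated after $w$ recover $w$, hence $\tilde q$ and $\alpha$, at $O(1)$ cost. Those enumerated before $w$, namely $\mathrm{St}^w_q(x)$ and $\mathrm{St}^w_n(\tilde q,x)$, are counted explicitly by $\mathcal{A}$, which is possible precisely because $\mathcal{A}$ is given $w$. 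Whenever one of them is large, $\mathcal{A}$ outputs $B=0$, and Property~2 is then witnessed directly by the heavy past volume.

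Second, the remaining discrepancy is not a constant to be tuned away. When the past at scale $n$ does not dominate, $\log|T_n(x,q,w)|=\mathcal{N}-\K(\tilde q,x\cnd n)+O(1)$. Moreover, $\K(\tilde q,x\cnd n)$ exceeds $\K(x\cnd q)$ by $\K(q\cnd n)$ up to $O(\log\log q)$. This is an uncomputable, $q$-dependent quantity that ranges up to about $\log q$, i.e.\ up to about $\log^2 n$ for admissible $q$. No threshold computable from $(n,q)$ alone, such as $2^{\mathcal{N}-q-c\log n}$, can track this.

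The paper handles this in two steps:
\begin{itemize}
\item Its threshold $\mathrm{Num}=2^{\mathcal{N}-q+\Delta-\lambda}$ builds in the exact correction $\Delta=\K(\tilde q,\alpha\cnd q)-\K(\tilde q\cnd n)$. This is computable using $w$ as a time-bound oracle, because $\tilde q$, $\alpha$ and $q$ are short (Proposition~\ref{prop:computability}).
\item Symmetry of information between the scales $q$ and $n$ ordinarily loses logarithmic terms. So $q$ is replaced by the fixed-point proxy $\tilde q=(q,l^*)$ of Lemma~\ref{lemma:q_tilde}, for which the Complexity Symmetry Shift (Lemma~\ref{cor:symmetry}) holds up to $O(1)$.
\end{itemize}
Property~2 is then proved not by counting heavy strings, but by transferring the bound on $|T_n(x,q,w)|$ back to a bound on $|T_q(x)|$ through that identity.

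A smaller issue: a computable cutoff $f(n,w)$ has no guarantee of exceeding the time at which $w_n$ appears, so your totality device may fail exactly when $w=w_n$. It suffices to simulate until $w$ appears. That halts for every $w$ with $\C_V(w)\le n$, and these are the only inputs the construction of $G$ ever supplies.
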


\subsubsection*{Proof Idea of the Key Lemma}

The core idea of the proof is to establish a rigorous link between the plain complexity $\C_V(x)$ and the cardinality of the extension set $T_n (x, q, w)$. 
The strategy involves three main conceptual steps:

\begin{enumerate}
    \item \textbf{Translation to Prefix Complexity.} Both quantities can be expressed using conditional prefix complexity $\K$. First, the bound $\C_V(x) \le q$ is equivalent to $\K(x \cnd q) \le q$ (up to an additive constant). Second, the size of the extension set $|T_n(x, q, w)|$ is proportional to the discrete a priori probability $\m(x, q \cnd n)$ (which equals $2^{- \K(x, q \cnd n)}$ up to a constant factor). 
    
    \item \textbf{Bridging the Scales.} Our goal is to relate $\K(x \cnd q)$ and $\K(x, q \cnd n)$. First, observe that $\K(x \cnd q) = \K(x \cnd q, n) + O(1)$, because $q$ contains the information about $n$ (specifically, $n$ is determined by $q$ as the unique element $a_i$ of the tower sequence such that $a_i \le q < a_{i+1}$). 
    We then consider the following equality:
    \begin{equation}
    \label{eq_sim_overview}
     \K(x, q \cnd n) \approx \K(x \cnd q, n) + \K(q \cnd n).    
    \end{equation}
    However, we cannot use this equation directly for two reasons:
    \begin{itemize}
        \item \textbf{Precision:} This equation holds only up to a logarithmic error term.
        \item \textbf{Computability:} The prefix complexity function $\K(\cdot)$ is not computable. 
    \end{itemize}
    
    \item \textbf{The Solution.} We resolve these two fundamental issues as follows:
    \begin{itemize}
        \item \textbf{Oracle Power of $w$:} To handle non-computability, we utilize the candidate string $w$ (the last simple string of complexity $\le n$). Since the binary length of $q$ is logarithmic in $n$, the string $w$ acts as a powerful oracle, providing a universal time bound that allows us to compute $\K(q)$ and related terms exactly.
        \item \textbf{Fixed-Point Construction:} To handle the precision error, we replace the raw string $q$ with a structural proxy $\tilde{q} = \langle q, l \rangle$. We demonstrate, via a fixed-point argument, that there exists an integer $l$ such that an equation similar to~\eqref{eq_sim_overview} holds for the pair $\tilde{q}$ with strict $O(1)$ precision. Moreover, thanks to the oracle power of $w$, this exact $l$ can be found algorithmically given $q$ and $w$.
        \item \textbf{Volume Decomposition:} It remains to reduce the analysis of $q$ to our proxy $\tilde{q}$. The idea is to split all valid extension tuples $(x, q, s)$ into two distinct parts: those enumerated \emph{before} $w$, and those appearing \emph{after} $w$. The number of tuples appearing before $w$ can be computed algorithmically directly from $x$, $q$, and $w$. On the other hand, any tuple appearing \emph{after} $w$ inherently contains information about the history up to $w$ (and therefore about $q$), which in turn encodes $\tilde{q}$. This informational asymmetry cleanly completes the reduction.
    \end{itemize}
\end{enumerate}
\subsection{Game-Based Construction of $G$}
We present the construction of the partial function $G$. Recall our objective: to construct the universal decompressor $U$ such that for every polynomial-time oracle machine $M$, there exists a length $m$ where the output $M^{F_{\C_U}}(1^m)$ differs from the true count $N_m$.

The construction proceeds by partitioning the set of string lengths into distinct ``working zones.'' For a specific target length $m = n+d$ (derived from the tower sequence), the corresponding working zone consists of all lengths strictly between $\log n$ and $2^n$.

Our goal for a given level is to define $G$ for strings with lengths falling within this interval. We aim to define $G$ such that the machine $M = M_l$ associated with this level, upon receiving input $1^m$, produces an output distinct from $N_m$.

The function $G$ is constructed dynamically based on the behavior of the optimal decompressor $V$. We monitor the enumeration of all strings with $V$-complexity at most $n = m-d$ (note that their $U$-complexity is automatically at most $m$). Every time a new string $w$ with $\C_V(w) \le n$ appears, we start a new specific combinatorial game.

This game is played between two agents: \textbf{Alice} (representing the constructed function $G$) and \textbf{Bob} (representing the fixed decompressor $V$). Bob's goal is to ensure that the machine $M$, running with the oracle constructed so far, outputs the correct count $N_m$. Alice's goal is the opposite: to force $M$ to output an incorrect value.

The game resembles the ``Marking Game'' described in Subsection \ref{subsec:simple_game}, but with the following crucial modifications:
\begin{enumerate}
    \item \textbf{Restricted Simplification Levels:} Strings can only be simplified to complexity levels $q \ge m+1$.
    (This constraint reflects the assumption that the set of strings with complexity $\le m$ is effectively stabilized, and we focus on manipulating the oracle for queries larger than $m$.)
    
    \item \textbf{Resource-Based Budget:} There is no fixed budget for every complexity level. Instead, there is a single global budget derived from the ``Tail Upper Bound'' (Corollary~\ref{cor:exp_gap}):
    \[ \mathrm{Budget} = c_1 \cdot 2^{\mathcal{N} - n}. \]
    Furthermore, for every string $x$ and every target complexity $q$, there is a specific \textbf{price} $B(x, q)$ required to make $x$ $q$-simple. This price is determined by the threshold value $B$ from the \textbf{Key Lemma} (Lemma~\ref{lemma:unified_gap}), adjusting for the $d$-bit overhead between $U$ and $V$: 
    \[ B(x, q) = \mathcal{A}(n, w, x, q-d). \]   
\end{enumerate}
\subsubsection{The Local Game}
\label{subsec:comgame}
Here is the formal definition of the game.

\paragraph{Game Parameters.}
The game is played with the following fixed components:
\begin{itemize}
    \item \textbf{The Referee:} A polynomial-time oracle machine $M$ and a fixed input $1^m$.
    \item \textbf{The Target:} An integer $N_m$ (the count Bob defends).
    \item \textbf{Initial Complexity:} For every string $x$, there is an initial complexity $C_{\mathrm{init}}(x)$.
    \item \textbf{Prices:} A set of price thresholds $B(x, q) \in \mathbb{N}$ for pairs $(x, q)$.
    \item \textbf{Initial Counters:} Initial values $\mathrm{St}_{\mathrm{init}}(x, q) \in \mathbb{N}$ representing partial progress.
    \item \textbf{Player Budgets:} Each player is endowed with an initial budget $R$.
\end{itemize}

\paragraph{Dynamic State and Virtual Complexity.}
The state of the game at any step $t$ is defined by the current values of the counters $\mathrm{St}_t(x, q)$ and the remaining budgets of both players. Initially, $\mathrm{St}_0(x, q) = \mathrm{St}_{\mathrm{init}}(x, q)$.

The counters determine a ``virtual complexity'' function $C_t(x)$ according to the threshold rule:
\[
    C_t(x) = \min \left( C_{\mathrm{init}}(x), \quad \min \{ q \mid \mathrm{St}_t(x, q) \ge B(x, q) \} \right).
\]

In words, the complexity of string $x$ becomes $q$ as soon as the investment in the counter $\mathrm{St}(x, q)$ reaches the price $B(x, q)$. Note that $C_t(x)$ is non-increasing with respect to time $t$.

The oracle $F_t$ is derived directly from this complexity function:
\[ F_t = \{ \langle x, q \rangle \mid C_t(x) \le q \}. \]

\paragraph{The Game Protocol.}
The game proceeds in discrete turns. At each step $t$, we perform the following:

\begin{enumerate}
    \item \textbf{Evaluation:} We run the machine $M$ with the current oracle $F_t$ and observe the output $\mathrm{Out}_t = M^{F_t}(1^m)$.

    \item \textbf{Determine Active Player:} The player unsatisfied with the current result must move. Specifically, \textbf{Alice is active} if $\mathrm{Out}_t = N_m$ (the state favors Bob), and \textbf{Bob is active} if $\mathrm{Out}_t \neq N_m$ (the state favors Alice).
    
    \item \textbf{Action:} The active player chooses a set of non-negative increments $\Delta(x, q)$ to add to the counters:
    \[ \mathrm{St}_{t+1}(x, q) = \mathrm{St}_t(x, q) + \Delta(x, q). \]
    The total cost of the move is $\sum_{x, q} \Delta(x, q)$. This cost is deducted from the \textbf{active player's} remaining budget.
\end{enumerate}

\paragraph{Winning Condition.}
Since the budgets are finite and integer-valued, the game must eventually terminate. A player \textbf{loses} immediately if they are the active player but cannot make a move that changes the machine's output (either due to a lack of sufficient budget or because no such set of increments exists). Consequently, \textbf{Alice wins} if the game stabilizes at an output $\mathrm{Out} \neq N_m$, while \textbf{Bob wins} if it stabilizes at $\mathrm{Out} = N_m$.
\subsubsection{Informal Construction of $G$}
\label{subsubsec:inf}
We define the partial function $G$ implicitly through an auxiliary generative process called \textbf{Algorithm $\mathcal{F}$} (``The Fighter''). This algorithm runs in parallel with the fixed optimal decompressor $V$ and manages the diagonalization games for all levels simultaneously. Algorithm $\mathcal{F}$ outputs \textbf{directives} of the form ``Make string $z$ $q$-simple,'' which we implement by defining $G(p) = z$ for a fresh input $p$ of length $q-1$.
We interpret the output of $V$ as Bob's moves, and the definition of $G$ corresponds to Alice's moves.
\paragraph{Game Initialization and Budgets.}
$\mathcal{F}$ continuously monitors the enumeration of $V$. Whenever $V$ outputs a new string $w$ with complexity $\C_V(w) \le n$ (where $n= a_{2l}$ is the base threshold for level $l$), $w$ is treated as a new candidate for the true marker $w_n$. 
This event \textbf{resets} the Local Game at level $l$. At initialization, the prices $B(x, q)$ are computed via the Key Lemma. Crucially, both players are constrained by a global budget determined by the ``Tail Upper Bound'' (Corollary~\ref{cor:exp_gap}). 
Because levels are intertwined, we enforce a strict \textbf{Global Priority Rule}: whenever a new game starts at level $l$, or whenever $\mathcal{F}$ issues \emph{any} directive for level $l$, all active games at higher levels ($l' > l$) are immediately aborted.
\paragraph{The Winning Strategy Argument.}
If Alice has a winning strategy in this initialized game, she plays according to it, defining $G$ such that $M_l$ outputs a value different from $N_m$. (Note that the winner in this game can be determined algorithmically.) 

But what if Bob has a winning strategy? In this case, we employ a ``strategy stealing'' argument:
\begin{enumerate}
    \item Before the game begins, we use $G$ to create one additional $m$-simple string (a ``hidden'' string of length sufficiently large so that $M_l$ cannot query it) that is never queried by $M_l$.
    \item This action secretly increments the true target count: $N'_{m} = N_m + 1$.
    \item Now, Alice considers the game relative to the original target $N_m$. Since Bob had a strategy to force the output to be $N_m$, Alice can now \textbf{simulate Bob's strategy}. By playing ``as Bob,'' she forces $M_l$ to output $N_m$.
    \item However, the true target is now $N'_m = N_m + 1$. Since $M_l$ outputs $N_m$, the result is incorrect ($\mathrm{Out} \neq N'_m$). Thus, Alice achieves her goal.
\end{enumerate}
\paragraph{The Safety Check.}
To advance her strategy, Alice issues directives to lower the complexity of extension tuples, incrementing her counters $\mathrm{St}(x, q)$. When a counter reaches the required price $B(x, q)$, Alice is entitled to make the target string $x$ $q$-simple. 
$\mathcal{F}$ imposes a strict \textbf{Safety Check} before issuing the final directive for $x$:
\[ \text{Is } \C_V(x) \le q - d/2? \]
Only if this condition holds (or becomes true in the future) does $\mathcal{F}$ execute the simplification of $x$. Crucially, if the candidate $w$ is indeed the true last string, the second property of the \textbf{Key Lemma} (Lemma~\ref{lemma:unified_gap}) rigorously guarantees that this Safety Check will always eventually pass, allowing Alice's strategy to fully execute. Furthermore, this check guarantees that algorithm $\mathcal{F}$ will declare at most $2^{q - d/2 + 1} < 2^q$ strings as $q$-simple, which ensures that the partial function $G$ is well-defined.

\subsubsection*{Proof of the Main Theorem (sketch)}

Recall that by Lemma~\ref{lem:impl}, to show that $H \notin \PP^{F_{\C_U}}$, it suffices to show that for every polynomial-time oracle machine $M$, there exists an input length $m$ such that $M^{F_{\C_U}}(1^m) \neq N_m$.

Fix a machine $M = M_l$ from the enumeration. Consider the parameters for level $l$: $n = a_{2l}$ and $m = n + d$. Let $w$ be the \emph{last} string in the enumeration of $V$ with $\C_V(w) \le n$. Once $w$ appears, algorithm $\mathcal{F}$ identifies it as a candidate and initializes the final phase of the game at level $l$. At this moment, $N_m$ is fixed (possibly incremented by exactly one via ``Strategy Stealing''). From this point forward, $G$ executes \textbf{Alice's winning strategy}. We must verify that the interaction between $V$ (Bob) and $G$ (Alice) follows the Local Game rules, and that $G$'s moves are valid.

\paragraph{1. Bob (V) follows the Game Rules.}
\begin{itemize}
\item \textbf{Budget Constraint.} Bob's moves correspond to strings whose complexity drops below the threshold $\mathcal{N}$. By the \emph{Tail Upper Bound} (Corollary~\ref{cor:exp_gap}), the number of such strings appearing after $w$ is bounded by $c_1 \cdot 2^{\mathcal{N} - n}$, matching the finite budget allocated to the game.
\item \textbf{Consistency Constraint.} If $V$ outputs a description making $x$ simple (i.e., $\C_V(x)$ drops to $q-d$), Item~1 of the \textbf{Key Lemma} (Lemma~\ref{lemma:unified_gap}) ensures the number of simple extension tuples $(x, q-d, w, s)$ in $V$ is at least $B(x,q)$. Since $U$ simulates $V$, these tuples also exist in $U$. Thus, Bob's moves automatically satisfy the required threshold for counters $\mathrm{St}(x,q)$ and are consistent with the price mechanism.
\end{itemize}

\paragraph{2. Alice (G) makes Valid Moves.}
Since Alice follows a winning strategy, she forces the abstract game to an outcome $\mathrm{Out} \neq N_m$. However, for this to hold in the actual construction, every time Alice ``buys'' a query $\langle x,q\rangle$ by reaching $\mathrm{St}(x,q) \ge B(x,q)$, $G$ must successfully define a short description for $x$. This requires the \textbf{Safety Check} $\C_V(x) \le q - d/2$ to eventually hold. We claim that for the correct string $w$, this condition is always satisfied. Indeed:
\begin{enumerate}
\item Filling the counter $\mathrm{St}(x,q) \ge B(x,q)$ means Alice has created at least $B(x,q)$ extension tuples $(x, q-d, w, s)$ with $U$-complexity at most $\mathcal{N}+d$.
\item By definition of $U$, the number of strings with $\C_U \le \mathcal{N}+d$ is bounded by $2^{\mathcal{N}+O(1)}$. This includes strings with $V$-complexity at most $\mathcal{N}$ (at most $2^{\mathcal{N}+1}$) and strings generated by Alice's total budget across all games on this level ($2^{\mathcal{N}-n +O(1) }\cdot 2^n = 2^{\mathcal{N}  + O(1)}$).
\item This cardinality bound implies any string with $\C_U \le \mathcal{N}+d$ must satisfy $\C_V \le \mathcal{N} + O(\log d)$. Indeed, to describe any such string using the base machine $V$, one merely needs to specify the set itself (which takes $O(\log d)$ bits to encode the parameter $d$) along with the string's index within this set (which takes $\mathcal{N} + O(1)$ bits).
\item Consequently, Alice has produced at least $B(x,q)$ tuples whose $V$-complexity is at most $\mathcal{N}+O(\log d)$. Since $\sqrt{d} \gg O(\log d)$, applying Item~2 of the \textbf{Key Lemma} (Lemma~\ref{lemma:unified_gap}) with target complexity $q-d$ yields $\C_V(x) \le (q-d) + d/2 = q - d/2$.
\end{enumerate}
Therefore, whenever the strategy requires it, $G$ successfully enforces $\C_U(x) \le q$. Since Alice follows a winning strategy in the Local Game, the final output of the machine $M$ with oracle $F_{\C_U}$ differs from the true count $N_m$: $M^{F_{\C_U}}(1^m) \neq N_m$. This holds for every machine $M_l$ at its corresponding level, hence $H \notin \PP^{F_{\C_U}}$.

\section{Definitions and Basic Properties of Kolmogorov Complexity}
\label{sec:def}

We recall the definitions of plain conditional and unconditional Kolmogorov complexity.

Let $U$ be an algorithm whose inputs and outputs are binary strings. Such an algorithm is called a \emph{decompressor} and define the (unconditional) complexity $\C_U(x)$ of a binary string $x$ with respect to $U$ as follows:
\[ \C_U(x) := \min \{ |p|: U(p) = x \}. \]
We call any $p$ such that $U(p)=x$ a \emph{description} (or \emph{$U$-description}) of $x$. Thus, the complexity of $x$ is defined as the length of the shortest description of $x$.

A decompressor $U$ is called \emph{universal} if for every decompressor $U'$ there exists a constant $M$ such that for every string $x$,
\[ \C_U(x) \le \C_{U'}(x) + M. \]

Now let us recall the definition of \emph{conditional} Kolmogorov complexity.
Let $D(p, y)$ be a computable partial function (which we will also call a decompressor) of two string arguments. We may think of $D$ as an interpreter for some programming language, where the first argument $p$ is considered a program and the second argument $y$ is the input for this program. The conditional complexity is defined as:
\[ \C_D (x \cnd y) := \min \{|p|: D(p, y) = x\}. \]
Here $|p|$ denotes the length of the binary string $p$; thus, the right-hand side is the minimal length of a program that produces output $x$ given input $y$.

\begin{theorem}[Kolmogorov--Solomonoff]
There exists a universal conditional decompressor $U$ such that for every other conditional decompressor $D$, there exists a constant $M$ such that
\[ \C_U(x \cnd y) \le \C_{D} (x \cnd y) + M \]
for all strings $x$ and $y$.
\end{theorem}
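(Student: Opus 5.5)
The plan is the standard universal-machine construction. Fix a computable enumeration $D_1, D_2, \dots$ of all partial computable functions of two string arguments, obtained by running a universal interpreter on program texts. Encode the pair $(i, p)$ by a self-delimiting prefix followed by $p$. For concreteness, take $\hat{\imath}\,p$, where $\hat{\imath}$ is the binary expansion of $i$ with every bit doubled and terminated by $01$. Then define
\[ U(\hat{\imath}\,p,\ y) := D_i(p, y). \]
Because $\hat{\imath}$ is prefix-free, $U$ can parse any input $z$ uniquely. It reads pairs of bits until it meets $01$, recovers $i$, and treats the rest of $z$ as $p$. It then runs the universal interpreter on $(i, p, y)$. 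This shows that $U$ is a partial computable function of two arguments, so it is itself a conditional decompressor.

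Next, verify the universality inequality. Let $D$ be an arbitrary conditional decompressor; then $D = D_i$ for some index $i$. For any $x, y$, take a shortest $p$ with $D(p, y) = x$, so $|p| = \C_D(x \cnd y)$. Then $U(\hat{\imath}\,p, y) = x$, which gives
\[ \C_U(x \cnd y) \le |p| + |\hat{\imath}| = \C_D(x \cnd y) + (2\lceil\log_2(i+1)\rceil + 2). \]
So we may take $M = 2\lceil \log_2(i+1)\rceil + 2$, which depends only on $D$ and not on $x$ or $y$. If no $p$ gives $D(p,y)=x$, then $\C_D(x\cnd y)=\infty$ and the inequality holds trivially.

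I do not expect a real obstacle. The only points needing care are the following.
\begin{itemize}
\item The enumeration must be effective, with $(i,p,y)\mapsto D_i(p,y)$ a partial computable function. This is the existence of a universal Turing machine.
\item The index must be encoded self-delimitingly, so that the split of the input into $i$ and $p$ is unambiguous without separator symbols. The doubled-bit code above handles this at logarithmic cost, which is a constant once $D$ is fixed.
\end{itemize}
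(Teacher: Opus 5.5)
Your proof is correct: it is the standard universal-machine construction. You enumerate all conditional decompressors, prefix the program with a self-delimiting code of the index (bits doubled, terminated by $01$), and take $2\lceil\log_2(i+1)\rceil+2$ as the constant. The paper gives no proof of its own for this theorem and simply defers to the Shen--Uspensky--Vereshchagin textbook, whose proof is this same argument.
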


The proofs for all statements in this section can be found in~\cite{suv}.

Note that unconditional complexity can be naturally derived from the conditional version by considering the condition to be the empty string. Let $U$ be a universal conditional decompressor; we define:
\[ \C_U(x) := \C_U(x \cnd \text{empty string}). \]
It is easy to verify that if $U$ is universal for conditional complexity, then the restricted decompressor is also universal for unconditional complexity.

We will use the following basic properties of Kolmogorov complexity, which hold for any fixed universal decompressor $U$:
\begin{itemize}
    \item \textbf{Bounded by length:} There exists a constant $M$ such that for all $x$:
    \[ \C_U(x) \le |x| + M. \]
    \item \textbf{Computable transformations:} For every total computable function $f$, there exists a constant $M$ such that for all $x$:
    \[ \C_U(f(x)) \le \C_U(x) + M. \]
    \item \textbf{Counting descriptions:} For every string $y$ and every integer $k$, the number of strings $x$ such that $\C_U(x \cnd y) \le k$ is at most $2^{k+1}$.
\end{itemize}

The Kolmogorov complexity $\C_U(x, y)$ of a pair of strings $x$ and $y$ is defined using a computable pairing function. Let $(x, y) \mapsto [x, y]$ be an injective computable function that maps a pair of strings to a single string. Then:
\[ \C_U(x, y) := \C_U([x, y]). \]
This definition depends on the choice of the pairing function, but only up to an additive $O(1)$ term. Similarly, one can define the complexity of triples of strings, integers, or any finite objects.

For every natural number $n$, its Kolmogorov complexity $\C_U(n)$ is at most $\log n + O(1)$ (corresponding to the length of its binary representation).

\subsection*{Prefix complexity}

Although the main result concerns plain Kolmogorov complexity,
we will use prefix complexity as a technical tool in the final section.
We briefly recall the necessary definitions and standard facts;
see~\cite{LiVit,suv} for details.

\paragraph{Prefix Complexity.}
Fix a universal prefix-free decompressor $D$.
The conditional prefix complexity is defined by
\[
\K(x \cnd y) := \min \{ |p| : D(p,y) = x \},
\]
where for every $y$ the domain of $D(\cdot,y)$ is prefix-free.

The unconditional version is $\K(x) := \K(x \cnd \varepsilon)$,
where $\varepsilon$ denotes the empty string.

\begin{proposition}[Basic properties {\cite[Ch.~3]{LiVit},\cite{suv}}]
For all strings $x,y,z$:
\begin{align*}
\K(x \cnd y,z) &\le \K(x \cnd y) + O(1), \\
\K(x,y \cnd z) &\ge \K(x \cnd z) + O(1), \\
\C_U(x \cnd y) &\le \K(x \cnd y) + O(1), \\
\K(x \cnd y) &\le \C_U(x \cnd y) + O(\log \C_U(x \cnd y)).
\end{align*}
\end{proposition}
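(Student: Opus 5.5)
The four inequalities are standard, and I would prove each one by building an auxiliary decompressor of the appropriate kind and invoking universality. For plain complexity this is the Kolmogorov--Solomonoff theorem above; for prefix complexity it is the analogous theorem for the universal prefix-free decompressor $D$. For the first inequality, define $D'(p,[y,z]) := D(p,y)$. For each fixed condition $[y,z]$, the domain of $D'(\cdot,[y,z])$ equals the domain of $D(\cdot,y)$, which is prefix-free, so $D'$ is a valid prefix-free conditional decompressor. Universality of $D$ then gives $\K(x\cnd y,z)\le \K_{D'}(x\cnd y,z)+O(1)\le \K(x\cnd y)+O(1)$. For the second inequality, which I read as $\K(x\cnd z)\le \K(x,y\cnd z)+O(1)$, define $D''(p,z)$ to be the first component of $D(p,z)$ whenever that output is a pair $[x,y]$. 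The pairing function is injective and computable, and $D''$ keeps the prefix-free domain of $D$, so universality gives the bound.

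For the third inequality, note that a prefix-free decompressor is in particular a conditional decompressor in the plain sense. Universality of $U$ therefore gives $\C_U(x\cnd y)\le \C_D(x\cnd y)+O(1)=\K(x\cnd y)+O(1)$.

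The fourth inequality requires converting plain descriptions into self-delimiting ones. Let $p$ be a shortest $U$-description of $x$ given $y$, and put $k=|p|=\C_U(x\cnd y)$. Let $\bar k$ be the self-delimiting encoding of $k$ obtained by doubling each bit of its binary expansion and appending $01$, so $|\bar k|=2\log k+O(1)$. Define $E(\bar k\, p, y):=U(p,y)$, where $E$ first parses $\bar k$, then reads exactly $k$ further bits as $p$, and then simulates $U$. The domain of $E(\cdot,y)$ is prefix-free, because the first part of the input determines the length of the whole input. Universality of $D$ then yields $\K(x\cnd y)\le k+2\log k+O(1)=\C_U(x\cnd y)+O(\log \C_U(x\cnd y))$.

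None of these steps is a real obstacle. The only point needing care is the last one: checking that the concatenation $\bar k\,p$ really gives a prefix-free domain for every fixed condition $y$, which holds because $\bar k$ is itself read in a self-delimiting way. In the remaining cases the work is just verifying that the auxiliary machines are legitimate prefix-free (respectively plain) decompressors, so that the universality theorems apply.
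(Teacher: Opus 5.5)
Your proof is correct: each inequality follows from building the appropriate auxiliary (prefix-free or plain) decompressor and invoking universality, and you handle the one delicate point properly, namely prefix-freeness of $\bar k\,p$ for each fixed condition. You also rightly read the second line as $\K(x\cnd z)\le \K(x,y\cnd z)+O(1)$. The paper gives no proof of its own here, deferring to \cite{LiVit,suv}, and your arguments are exactly the standard ones found there.
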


\paragraph{Universal Semimeasure.}
A function $\mu(x \cnd y)$ is a \emph{conditional semimeasure}
if for every $y$,
\[
\sum_x \mu(x \cnd y) \le 1.
\]
It is \emph{lower semicomputable} if $\mu(x \cnd y)$
can be effectively approximated from below by rationals.

It is well known (see~\cite{LiVit,suv}) that there exists a maximal
lower semicomputable conditional semimeasure $\m(x \cnd y)$,
meaning that for every such $\mu$ there is a constant $c>0$ such that
\[
\mu(x \cnd y) \le c\,\m(x \cnd y)
\quad \text{for all } x,y.
\]
This $\m$ is called the \emph{universal semimeasure}
(or universal a priori probability).

\begin{theorem}[Coding Theorem {\cite[Th.~4.3.3]{LiVit}}, {\cite{suv}}]
\label{theorem:coding_theorem}
For all strings $x,y$,
\[
\K(x \cnd y) = -\log \m(x \cnd y) + O(1).
\]
\end{theorem}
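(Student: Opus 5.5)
The Coding Theorem reduces to two inequalities, one for each direction, and the plan is to prove them separately, uniformly in the condition $y$.

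\textbf{Lower bound on $\m$.} First I will show $\m(x \cnd y) \ge c\, 2^{-\K(x \cnd y)}$, which gives $-\log \m(x \cnd y) \le \K(x \cnd y) + O(1)$. Define
\[
\mu(x \cnd y) = \sum_{p \,:\, D(p,y) = x} 2^{-|p|}.
\]
For each fixed $y$ the domain of $D(\cdot,y)$ is prefix-free, so Kraft's inequality gives $\sum_x \mu(x \cnd y) \le 1$. Moreover, $\mu$ is lower semicomputable: we dovetail the computations $D(p,y)$ and add $2^{-|p|}$ each time one of them halts with output $x$. By the maximality of $\m$ we get $\mu \le c\,\m$. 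Since $\mu(x \cnd y) \ge 2^{-\K(x \cnd y)}$ (the shortest description alone contributes this much), the first inequality follows.

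\textbf{Upper bound on $\K$.} For the converse, $\K(x \cnd y) \le -\log \m(x \cnd y) + O(1)$, I will build a prefix-free conditional decompressor $D'$ by the Kraft--Chaitin (Levin) method.
\begin{enumerate}
\item Fix $y$ and enumerate increasing rational lower approximations of $\m(x \cnd y)$ for all $x$ in a dovetailed fashion.
\item Whenever the current approximation of $\m(x \cnd y)$ first exceeds $2^{-k}$ for some new integer $k$, issue a request for a codeword of length $k+1$ that is to be mapped to $x$.
\item For a fixed $x$, the requested lengths are distinct integers $k+1$ with $2^{-k} < \m(x \cnd y)$. Hence the total weight requested for $x$ is at most $\sum 2^{-k-1} < \m(x \cnd y)$.
\item Summing over $x$, the total requested weight is at most $1$.
\end{enumerate}

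\textbf{Main obstacle: online allocation.} The remaining step, which I expect to be the main technical point, is to serve such a request sequence online with prefix-free codewords. The sequence is enumerated incrementally, and its total weight is guaranteed to be at most $1$ only in the limit. I would prove the standard allocation lemma. Maintain the free part of $[0,1)$ as a finite union of aligned dyadic intervals of pairwise distinct lengths, corresponding to the binary expansion of the remaining measure. A request of length $\ell$ is served by taking the smallest free interval of length at least $2^{-\ell}$ and splitting it. The pieces that remain free have lengths different from each other and from the lengths already present, so the invariant is preserved. Because the remaining measure is at least $2^{-\ell}$, such an interval always exists. The codeword is the binary address of the allocated interval.

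The whole procedure is uniform in $y$. It therefore yields a computable $D'(p,y)$ that is prefix-free in $p$ and has a codeword of length at most $-\log \m(x \cnd y) + 2$ for every $x$. Universality of $D$ among prefix-free decompressors then gives $\K(x \cnd y) \le -\log \m(x \cnd y) + O(1)$. Combined with the first inequality, this proves the theorem.
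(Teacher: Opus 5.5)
Your proof is correct: domination of $\sum_{D(p,y)=x}2^{-|p|}$ by $\m$ gives one direction, and Kraft--Chaitin online allocation of prefix-free codewords of length about $-\log\m(x\cnd y)+2$, uniformly in $y$, gives the other. The paper gives no proof of its own and simply cites this standard result from Li--Vit\'anyi and SUV, where essentially this same argument is used.
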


\begin{theorem}[Chain Rule {\cite[Th.~3.9.1]{LiVit}}]
\label{theorem:chain_rule}
For all strings $x,y,z$,
\[
\K(x,y \cnd z)
=
\K(x \cnd z)
+
\K(y \cnd x, \K(x \cnd z), z)
+
O(1).
\]
\end{theorem}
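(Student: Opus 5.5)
We prove the two inequalities separately, with $z$ treated as a uniform parameter throughout. The bound $\le$ is a direct program-concatenation argument. The bound $\ge$ goes through the Coding Theorem (Theorem~\ref{theorem:coding_theorem}) and the maximality of $\m$. Write $k = \K(x \cnd z)$.

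\emph{Upper bound.} Let $p$ be a shortest prefix-free program for $x$ given $z$, so $|p| = k$. Let $r$ be a shortest program for $y$ given $\langle x, k, z\rangle$. We build a conditional decompressor $D'$ that on input $pr$ with condition $z$ works as follows.
\begin{enumerate}
\item It reads the input bit by bit until it recognises a complete $D(\cdot,z)$-program $p$. This step is well defined because the domain of $D(\cdot, z)$ is prefix-free.
\item It computes $x = D(p,z)$ and records $k = |p|$.
\item It reads $r$ in the same prefix-free manner, now with condition $\langle x,k,z\rangle$, and outputs $[x, D(r,\langle x,k,z\rangle)]$.
\end{enumerate}
The domain of $D'(\cdot,z)$ consists of concatenations of a word from one prefix-free set with a word from another, so it is prefix-free. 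By universality,
\[
\K(x,y\cnd z)\le k+\K(y\cnd x,k,z)+O(1).
\]
The key point is that $|p|$ supplies $k$ for free, which is why $k$ may appear in the condition of the second term.

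\emph{Lower bound.} First, $\mu(x\cnd z)=\sum_y \m(x,y\cnd z)$ is a lower semicomputable conditional semimeasure. Maximality of $\m$ together with the Coding Theorem then gives
\[
\sum_y \m(x,y\cnd z)\le c\,\m(x\cnd z)\le c'\,2^{-k}.
\]
Next, define a conditional semimeasure $\nu(y\cnd x,k,z)$ as follows. Enumerate lower approximations of $\m(x,y\cnd z)$ for all $y$, and set
\[
\nu(y\cnd x,k,z) = 2^{k}\,\m(x,y\cnd z)/c',
\]
but stop increasing the approximations as soon as the total $\sum_y$ would exceed $1$. Truncation makes $\nu$ lower semicomputable and a semimeasure for every condition $\langle x,k,z\rangle$, including those where $k$ is not actually $\K(x\cnd z)$. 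When $k=\K(x\cnd z)$, the displayed inequality shows that the cutoff is never reached, so $\nu$ equals the intended expression. Maximality and the Coding Theorem then yield
\[
\K(y\cnd x,k,z)\le -\log \nu(y\cnd x,k,z)+O(1)= \K(x,y\cnd z)-k+O(1),
\]
which is the desired inequality.

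\emph{Main obstacle.} The only delicate step is this construction of $\nu$. Since $\m(x\cnd z)$ is not computable, we cannot normalise by it directly. The trick is to normalise by $2^{-k}$, which is computable from the condition, and to enforce the semimeasure property by truncation. One must check that the truncation is uniform in $\langle x,k,z\rangle$, so that a single constant works for all inputs. It must also be harmless exactly when $k$ is correct, which the bound on $\sum_y \m(x,y\cnd z)$ guarantees.
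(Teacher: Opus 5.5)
Your proof is correct and is the standard argument behind this result, which the paper only cites (Li--Vit\'anyi, Th.~3.9.1) without proving. For the upper bound you concatenate self-delimiting programs, with $|p|$ supplying $k$; for the lower bound you use the truncated semimeasure $2^{k}\m(x,y\cnd z)/c'$ conditioned on $\langle x,k,z\rangle$, which is exact at the true $k=\K(x\cnd z)$ because $\sum_y \m(x,y\cnd z)\le c'\,2^{-\K(x\cnd z)}$. One cosmetic point: under the paper's definition of a prefix-free decompressor (a partial computable function with prefix-free domain), ``recognising a complete program'' should be implemented by dovetailing $D(\cdot,z)$ on all prefixes of the input, which is unambiguous precisely because that domain is prefix-free.
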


\begin{corollary}
\label{cor:ch}
For all strings $x,y,z$,
\[
\K(y \cnd z)
\le
\K(x \cnd z)
+
\K(y \cnd x)
+
O(1).
\]
\end{corollary}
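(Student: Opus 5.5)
The statement immediately above is Corollary~\ref{cor:ch}: $\K(y \cnd z) \le \K(x \cnd z) + \K(y \cnd x) + O(1)$. I will derive it from the Chain Rule (Theorem~\ref{theorem:chain_rule}) together with the basic monotonicity properties of prefix complexity listed in the Basic properties proposition.

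\textbf{Step 1: pass to the pair.} I start from the inequality $\K(y \cnd z) \le \K(x,y \cnd z) + O(1)$. This is the second basic property with the roles of the components swapped. Swapping is harmless because the map $[x,y]\mapsto[y,x]$ is computable in both directions, so $\K(x,y\cnd z)=\K(y,x\cnd z)+O(1)$.

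\textbf{Step 2: apply the Chain Rule.} I expand the pair complexity as
\[
\K(x,y \cnd z) = \K(x \cnd z) + \K(y \cnd x, \K(x\cnd z), z) + O(1).
\]

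\textbf{Step 3: drop the extra conditions.} The remaining task is to bound $\K(y \cnd x, \K(x\cnd z), z) \le \K(y \cnd x) + O(1)$. This is the first basic property: adding conditions does not increase complexity. I apply it with the extra condition being the pair $(\K(x\cnd z), z)$, encoded as a single string via the fixed computable pairing. Concretely, a prefix-free program for $y$ given $x$ still works when given the triple: the decompressor simply ignores the extra components. This preserves prefix-freeness for each fixed condition, so the universal prefix decompressor absorbs the change with an $O(1)$ loss.

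Chaining the three inequalities gives the corollary. I do not expect any real obstacle. The only point needing care is the bookkeeping of the conditioning tuple, namely that ``$\K(y\cnd x,\K(x\cnd z),z)$'' is well defined under a fixed pairing and that removing components costs only $O(1)$. Both facts are standard.
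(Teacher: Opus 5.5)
Your proof is correct and follows the route the paper intends. The paper states this corollary without proof, immediately after the Chain Rule, and the implicit derivation is exactly yours: bound $\K(y \cnd z)\le \K(x,y \cnd z)+O(1)$, expand via the Chain Rule, then drop the extra conditions $\K(x \cnd z)$ and $z$ by monotonicity. Only the easy $\le$ half of the Chain Rule is actually needed, and that half could also be obtained directly by concatenating a prefix-free program for $x$ given $z$ with one for $y$ given $x$.
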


\begin{proposition}[Complexity Self-Sufficiency]
\label{prop:self_sufficiency}
For all strings $x,y$,
\[
\K(x, \K(x \cnd y) \cnd y)
=
\K(x \cnd y)
+
O(1).
\]
\end{proposition}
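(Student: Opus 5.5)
The two inequalities are proved separately. The lower bound is immediate from the basic properties of prefix complexity; the upper bound comes from a direct construction of a prefix-free decompressor.

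\emph{Lower bound.} Apply the second basic property, $\K(x,z \cnd y) \ge \K(x \cnd y) - O(1)$, with $z = \K(x \cnd y)$. This gives $\K(x, \K(x \cnd y) \cnd y) \ge \K(x \cnd y) - O(1)$. The property itself reflects that the projection $[x,z]\mapsto x$ is computable: composing a prefix-free program for the pair with this projection yields a prefix-free program for $x$.

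\emph{Upper bound.} Let $D$ be the fixed universal prefix-free decompressor. Define a new conditional decompressor $D'$ by
\[
D'(p, y) := [\,D(p,y),\ |p|\,],
\]
that is, run $D$ on $(p,y)$, and if it halts with output $x$, output the pair consisting of $x$ and the length of $p$. For each $y$, the domain of $D'(\cdot, y)$ equals the domain of $D(\cdot,y)$, so it is prefix-free. Hence $D'$ is a legitimate prefix-free conditional decompressor.

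Now take $p$ to be a shortest $D$-program for $x$ given $y$, so $|p| = \K(x \cnd y)$. Then $D'(p,y) = [x, \K(x \cnd y)]$. By universality of $D$ among prefix-free conditional decompressors,
\[
\K(x, \K(x \cnd y) \cnd y) \le |p| + O(1) = \K(x \cnd y) + O(1).
\]

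There is no real obstacle here. The only point requiring care is that the length $|p|$ must be recovered without extra information. This holds because $p$ is given as a self-delimiting program, so its length is known to the decoder once $p$ has been read; no additional header bits are needed, and prefix-freeness is inherited unchanged. Combining the two inequalities gives the claimed equality up to $O(1)$.
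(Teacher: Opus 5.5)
Your proof is correct, but it takes a different route from the paper. The paper obtains the statement in one line from the Chain Rule (Theorem~\ref{theorem:chain_rule}). With $k=\K(x \cnd y)$ it writes $\K(x,k \cnd y)=\K(x \cnd y)+\K(k \cnd x,\K(x \cnd y),y)+O(1)$, and notes that the last term is $O(1)$ because $k$ appears in the condition, so both inequalities come out at once.

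You instead prove the two inequalities separately from first principles. The lower bound comes from the projection property. The upper bound comes from the modified decompressor $D'(p,y)=[D(p,y),|p|]$, which inherits the prefix-free domain of $D(\cdot,y)$, followed by universality.

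Your route is more elementary and self-contained: it needs only universality and prefix-freeness, not symmetry of information. In fact, the substantive half of the statement is the upper bound. In the paper's argument it comes from the easy direction of the Chain Rule, whose standard proof rests on exactly your observation that a shortest program $p$ yields $|p|$ for free. The paper's version is shorter only because it takes the Chain Rule as a black box.
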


\begin{proof}
Apply Theorem~\ref{theorem:chain_rule} with $z=y$
and let $k=\K(x \cnd y)$. Then
\[
\K(x,k \cnd y)
=
\K(x \cnd y)
+
\K(k \cnd x, \K(x \cnd y), y)
+
O(1).
\]
Since $k$ is explicitly given in the condition,
the second term is $O(1)$.
\end{proof}

\begin{proposition}[Plain vs.\ Prefix {\cite[Th.~72]{suv}}]
\label{prop:plain_prefix}
For every string $x$,
\[
\C_U(x) = \K(x \cnd \C_U(x)) + O(1).
\]
\end{proposition}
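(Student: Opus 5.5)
The plan is to prove the two inequalities $\C_U(x) \le \K(x \cnd \C_U(x)) + O(1)$ and $\K(x \cnd \C_U(x)) \le \C_U(x) + O(1)$ separately. Throughout I write $k = \C_U(x)$.

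\emph{Upper bound on $\K(x\cnd k)$.} Take a shortest plain description $p$ of $x$, so $U(p)=x$ and $|p| = k$. Given the condition $k$, consider the decompressor that reads exactly $k$ bits from its program tape and runs $U$ on them. For each fixed condition $k$ its domain consists of strings of one length $k$, so it is prefix-free. Hence $\K(x \cnd k) \le k + O(1)$.

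\emph{Lower bound on $\K(x\cnd k)$.} Here I would use a shifting trick. Suppose $\K(x \cnd k) = k - j$ for some $j$; the goal is to show $j \le O(1)$. Take a shortest prefix program $r$ for $x$ given $k$, so $|r| = k-j$. Build a plain description of $x$ of the form $0^{j'} 1 r$ for a suitable padding. The decoder strips the leading zeros and the one, obtaining $r$. It also knows the total length $L = |0^{j'}1r|$. It must recover $k$ to run $D(r,k)$, and the naive choice $k = L + j$ fails because the decoder does not know $j$.

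The standard fix is to let the padding itself carry $j$. Consider the decompressor $E(z)$: write $z = 0^{i}1r'$ with $r'$ in the prefix-free domain of $D(\cdot, t)$. We need $t$ to be determined by $|z|$ alone. Use $E(z) = D(r', |z|)$, so the program for $E$ is $0^{i}1r$ with length $|r| + i + 1$, and we want this length to equal $k$. This requires $i = k - |r| - 1 = j-1$, which is fine when $j \ge 1$. Parsing is unambiguous: $E$ strips the leading zeros and the one, sets $t = |z|$, and checks $D(r', t)$. Then $\C_E(x) \le k$ only, which gives nothing. To gain, take instead $E'(z) = D(r', |z| + c)$ for a constant $c$, with padding $i = j - 1 - c$. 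This yields $\C_{E'}(x) \le k - c$, which contradicts $\C_U(x) \le \C_{E'}(x) + M_{E'}$ once $c > M_{E'}$. The catch is that $M_{E'}$ depends on $c$, but only through $O(\log c)$, via the standard recursion-theorem-style bound $\C(E'_c) \le O(\log c)$. So choose $c$ large enough that $c > O(\log c) + O(1)$; this needs $j \ge c + 1$. Consequently $j \le c + 1 = O(1)$.

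\emph{Main obstacle.} The delicate step is the uniformity of the simulation constant in $c$. I would handle it with a single decompressor that takes $c$ from a self-delimiting prefix $\bar c$ of length $O(\log c)$, that is, $E''(\bar c\,0^i 1 r') = D(r', |z| + c')$ with $c'$ defined so that the prefix overhead is absorbed. Equivalently, I could cite \cite[Th.~72]{suv} directly.
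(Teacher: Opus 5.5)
Your proof is correct. The paper gives no argument of its own here; it simply quotes the result from \cite[Th.~72]{suv}. What you have written is therefore a self-contained replacement for that citation, and it follows the standard lines.

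The upper bound $\K(x \cnd k) \le k + O(1)$ is exactly right: you use a decompressor whose domain for condition $k$ lies inside $\{0,1\}^k$, hence is prefix-free.

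I have two comments on the lower bound.

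\begin{itemize}
\item \textbf{Prefix-freeness is not used.} The lower bound never needs the domain of $D(\cdot,k)$ to be prefix-free, because the header $0^i1$ alone makes the parsing unambiguous. The same argument therefore proves the stronger plain statement $\C_U(x) \le \C_U(x \cnd \C_U(x)) + O(1)$. The bound $\K(x \cnd k) \ge k - O(1)$ then follows from $\C_U(x \cnd y) \le \K(x \cnd y) + O(1)$. Prefix-freeness is needed only for the upper bound.

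\item \textbf{Uniformity in $c$.} The dependence of the simulation constant on $c$ must be handled by the single machine in your last paragraph, not by a recursion-theorem appeal. For an arbitrary optimal $U$, optimality only gives some constant for each fixed machine. The uniform bound $\C_U(y) \le \C_{E'_c}(y) + 2\log c + O(1)$ comes precisely from one machine that first reads a self-delimiting code of $c$ and then runs $E'_c$. So that paragraph is a required step of the proof, not an optional alternative.
\end{itemize}

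You can also avoid varying $c$ altogether. Let one fixed machine $E$ use an offset that grows with the padding, e.g.\ $t = |z| + i$, with one extra parity bit. This gives a description of length at most $k - (j-2)/2$, and hence $j \le 2M_E + O(1)$ directly.
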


\begin{corollary}[Equivalence of Bounds]
\label{cor:equivalence_bounds}
For every string $x$ and integer $q$:
\begin{enumerate}
    \item If $\C_U(x) \le q$, then $\K(x \cnd q) \le q + O(1)$.
    \item If $\K(x \cnd q) \le q$, then $\C_U(x) \le q + O(1)$.
\end{enumerate}
\end{corollary}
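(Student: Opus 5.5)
Both items follow from Proposition~\ref{prop:plain_prefix}, which states $\C_U(x) = \K(x \cnd \C_U(x)) + O(1)$, together with one extra fact: the map $q \mapsto \K(x \cnd q)$ changes by at most a logarithmic amount as $q$ varies. The plan is to handle the two directions separately, moving between conditions $\C_U(x)$ and $q$ in each.

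\textbf{Item 1.} Let $k = \C_U(x) \le q$. By Proposition~\ref{prop:plain_prefix}, $\K(x \cnd k) \le k + O(1)$. To change the condition from $k$ to $q$, I would use Corollary~\ref{cor:ch}:
\[
\K(x \cnd q) \le \K(k \cnd q) + \K(x \cnd k) + O(1).
\]
The cost $\K(k \cnd q)$ is about $\K(q-k) \le 2\log(q-k) + O(1)$, so
\[
\K(x \cnd q) \le k + 2\log(q-k) + O(1).
\]
Since $k + 2\log(q-k) \le q + O(1)$ whenever $k \le q$, this gives $\K(x \cnd q) \le q + O(1)$.

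\textbf{Item 2.} Suppose $\K(x \cnd q) \le q$. Build a plain decompressor $D$ that, on input $p$ of length $\ell$, runs the prefix decompressor on $p$ with condition $\ell$. Because $|p| = q$ here, the shortest prefix program $p$ of length at most $q$ needs padding up to length exactly $q$. I would do this with a prefix-free-compatible padding: since the domain is prefix-free, prepend $0^{q-|p|-1}1$. Then $D$ recovers $q$ from the total length, strips the padding, and runs $p$ with condition $q$. This yields a description of length $q + 1$, so $\C_U(x) \le q + O(1)$ by universality.

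\textbf{Main obstacle.} The delicate step is the slack argument $k + 2\log(q-k) \le q + O(1)$ in Item~1. If it proves too loose, the fallback is to define a prefix machine that, given condition $q$, reads a self-delimiting encoding of $q-k$ and then simulates the program for $x$ with condition $k$. The overhead of that encoding is absorbed by the gap $q - k$.
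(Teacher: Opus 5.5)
Your Item~1 is the paper's argument: Corollary~\ref{cor:ch} moves the condition from $k=\C_U(x)$ to $q$, Proposition~\ref{prop:plain_prefix} gives $\K(x\cnd k)\le k+O(1)$, and $\K(k\cnd q)\le 2\log(q-k)+O(1)$. The step you flag as the main obstacle is not one. Writing $\delta=q-k$, the bound is $q-\delta+2\log\delta+O(1)$, and $2\log\delta-\delta$ is bounded above. The paper closes the argument exactly this way, so no fallback is needed.

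Item~2 is where you take a different route. The paper feeds the plain machine a self-delimiting code for the slack $\delta=q-|p|$ followed by $p$. The machine recovers $q=|p|+\delta$ from the remaining length, and the $2\log\delta$ overhead is absorbed by the same $2\log\delta-\delta=O(1)$ trick as in Item~1. You instead pad $p$ in unary so that the total input length itself encodes $q$. This gives the explicit bound $\C_D(x)\le q+1$ and needs no slack estimate at all. Both routes are correct and equally short: the paper's keeps the two directions symmetric, while yours is slightly more elementary.

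Two small fixes. First, with the padding $0^{q-|p|-1}1$ the total length is $q$, not the $q+1$ you state, and the padding is undefined when $|p|=q$. Use $0^{q-|p|}1$ instead and let $D$ take the condition to be $\ell-1$. Second, prefix-freeness of the domain plays no role for a leading unary pad, since the first $1$ marks where the pad ends. It would only matter if you appended the padding after $p$.
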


\begin{proof}
\textbf{1.}
Assume $\C_U(x) \le q$ and let $k = \C_U(x)$.
By Corollary~\ref{cor:ch},
\[
\K(x \cnd q)
\le
\K(x \cnd k)
+
\K(k \cnd q)
+
O(1).
\]
By Proposition~\ref{prop:plain_prefix},
$\K(x \cnd k) = k + O(1)$.
Let $\delta = q - k \ge 0$.
Since $k$ is easily computable from $q$ and $\delta$, we have $\K(k \cnd q) \le \K(\delta \cnd q) + O(1)$.
Since $\K(\delta \cnd q) \le 2\log \delta + O(1)$,
we obtain
\[
\K(x \cnd q)
\le
k + 2\log(q-k) + O(1)
\le
q + O(1).
\]

\textbf{2.}
Assume $\K(x \cnd q) \le q$.
Let $p$ be a shortest prefix-free program producing $x$ from $q$,
and let $|p|=l \le q$.
Put $\delta = q-l \ge 0$.

To describe $x$ for the plain machine,
it suffices to provide a self-delimiting (prefix-free) encoding of the integer $\delta$ followed by the raw string $p$. The length of this combined description is bounded by $2\log\delta + l + O(1)$.
Then
\[
\C_U(x)
\le
l + 2\log\delta + O(1)
=
q - \delta + 2\log\delta + O(1)
\le
q + O(1),
\]
since $2\log\delta - \delta$ is bounded from above.
\end{proof}

\section{Game-Based Construction of $G$}
\label{sec:constr}
\subsection{Definition of $G$}
\label{subsec:def_G}

As informally outlined in Subsubsection~\ref{subsubsec:inf}, we define the partial function $G$ implicitly through an auxiliary generative process called \textbf{Algorithm $\mathcal{F}$} (``The Fighter''). This algorithm runs in parallel with the fixed optimal decompressor $V$ and manages the diagonalization games for all levels simultaneously based on a priority system.

The algorithm $\mathcal{F}$ outputs \textbf{directives}---instructions to assign specific complexities to strings. All directives have the following structure:
\[ \text{``Make string $z$ $q$-simple.''} \]
The function $G$ is determined by these directives. To implement a directive that requires a string $z$ to have complexity $q$, we consume the next available input string $p$ of length $q-1$ and define $G(p) = z$.
(In the next subsection, we will prove that for any $q$, $\mathcal{F}$ issues at most $2^{q-1}$ simplification directives, ensuring $U$ remains a valid decompressor.)

Let $\{M_1, M_2, \dots\}$ be an enumeration of all polynomial-time oracle Turing machines such that every machine appears in the sequence infinitely often.

\paragraph{Levels and Status.}
We partition the problem into \textbf{Levels}. Level $l$ is responsible for handling strings with complexities $q$ in the interval:
\[ 
a_{2l-1} < q \le a_{2l+1}. 
\]
For each level $l$, we denote the ``base threshold'' as $n = a_{2l}$.
At any point in time, each level $l$ is in one of two states:
\begin{itemize}
    \item \textbf{Waiting (Standby):} The level is inactive, waiting for a stable candidate $w$ from $V$.
    \item \textbf{Active (Game):} The level has a valid candidate $w$ and is actively playing the local game against machine $M_l$.
\end{itemize}

\paragraph{The Event Loop.}
Algorithm $\mathcal{F}$ monitors the output of $V$. Suppose $V$ outputs a new description of length $q$ for a string $x$ (establishing a new current upper bound $\C_V(x) \le q$).
We determine the level $l$ such that $a_{2l-1} < q \le a_{2l+1}$. Let $n = a_{2l}$. We distinguish two cases based on the value of $q$.

\subsubsection*{Case 1: New Candidate ($q \le n$)}
If $q \le n$, we treat this string (let's call it $w$) as a new candidate for the \emph{last} string of complexity $\le n$ in the enumeration of $V$.

The appearance of a new candidate invalidates any previous game at this level. We switch the status of Level $l$ to \textbf{Active} and initialize a new \textbf{Local Game} (as described in Section~\ref{subsec:comgame}) with the following parameters:

\begin{enumerate}
    \item \textbf{The Referee:} The polynomial-time oracle machine $M_l$ (from our enumeration) running on fixed input $1^m$, where $m = n+d$.

    \item \textbf{The Target:} The current value of $N_m$ (the current number of strings whose $U$-complexity does not exceed $m$).

    \item \textbf{Initial Complexity:} For every string $x$, we set $C_{\mathrm{init}}(x)$ to be its current complexity $\C_U(x)$ (determined by $V$ and previous directives).

    \item \textbf{Prices:} We invoke algorithm $\mathcal{A}$ (Lemma~\ref{lemma:unified_gap}) to compute thresholds:
    \[ B(x, q) = \mathcal{A}(n, w, x, q-d). \]

    \item \textbf{Initial Counters:} We initialize the counters $\mathrm{St}_{\mathrm{init}}(x, q)$ by calculating the number of extension tuples already present in the enumeration of $U$ up to the current computational step. Specifically:
    \[ 
    \mathrm{St}_{\mathrm{init}}(x, q) = \left| \{ s : \C_U(x, q-d, w, s) \le \mathcal{N}_d \} \right|, 
    \]
    where $\mathcal{N}_d = 2^n + d$.

    \item \textbf{Player Budgets:} The global budget for each player is set to:
    \[ R = c_1 \cdot 2^{\mathcal{N} - n}, \]
    where $c_1$ is the constant from Corollary~\ref{cor:exp_gap}.
\end{enumerate}

\paragraph{Algorithmic Computability of the Winner.}
We claim that the winner of this initialized game can be deterministically computed by algorithm $\mathcal{F}$. 
First, observe that the game must conclude in a finite number of turns. Both players have strictly bounded finite budgets, and the rules dictate that the player currently unsatisfied with the referee's output must make a move. Since every valid move costs at least one unit of budget, the game must eventually terminate.

However, there is a subtle point: at first glance, the game tree might appear to have an infinite branching factor because a player can choose an arbitrary string $x \in \Str$ to simplify, implying an infinite action space. We resolve this by reducing the game to an equivalent finite version. 

Recall that the referee machine $M_l(1^m)$ runs in polynomial time, meaning it can only make oracle queries of length bounded by $\mathrm{poly}(m)$. In particular, $M_l$ can never query the oracle about any string $x$ of length $|x| > 2^m$. From the perspective of the machine $M_l$, all such ``long'' strings are completely indistinguishable. 

Therefore, to algorithmically determine the winner, algorithm $\mathcal{F}$ analyzes a modified state space. For ``short'' strings ($|x| \le 2^m$), the exact identity and complexity of each string are tracked (this is a finite set). For ``long'' strings ($|x| > 2^m$), the specific identities are ignored; instead, the state merely maintains a counter tracking \emph{how many} long strings have been made $q$-simple for each valid $q$. 

In this reduced game, the action of making a specific long string simple is replaced by the action of incrementing the corresponding generic counter. This makes the set of available moves at any step strictly finite. Since the overall game is now a finite perfect-information game, $\mathcal{F}$ can algorithmically evaluate the entire game tree (e.g., via the minimax algorithm) to determine the winning strategy from the initial configuration.

Based on the outcome of this algorithmic evaluation, we determine the \textbf{Strategy}:
\begin{itemize}
    \item \textbf{If Bob wins} (i.e., Bob has a winning strategy to force the game to stabilize at output $N_m$ despite Alice's best moves): We apply \textbf{Strategy Stealing}. We select a ``dummy'' string $z$ sufficiently long to be unqueried by $M_l$ (e.g., $|z| > 2^m$) and issue a directive to make $z$ $m$-simple. 
    We claim that after this single modification, Alice now possesses a constructive winning strategy. In fact, she has a winning strategy for an even stricter goal: achieving exactly $M_l(1^m) + 1 = N'_m$, where $N'_m = N_m + 1$ is the new true target count.
    The reasoning is as follows. The directive to make $z$ $m$-simple increases the true number of $m$-simple strings by exactly one, so the new target is $N'_m = N_m + 1$. However, because $z$ is too long to be queried by $M_l$, the machine's execution path and output remain completely oblivious to this specific simplification. Therefore, Alice can simply abandon her original plan and adopt Bob's winning strategy from the original game. By playing ``as Bob,'' Alice forces the machine to stabilize at the original target output: $M_l(1^m) = N_m$. Since the true target is now $N'_m = N_m + 1$, the machine's output is incorrect, and Alice successfully achieves her diagonalization goal.

    A minor technicality arises: what if Bob's original winning strategy required making this exact string $z$ simple at some later point? This is resolved by recalling that the strategy is derived from the \emph{reduced} game. In the reduced game, moves involving long strings do not specify the exact identity of the string; they merely dictate the action of making \emph{some} long string simple (incrementing a generic counter). When translating this strategy back to the actual construction, Alice can fulfill these directives by selecting \emph{any} fresh, unqueried long string. Since the game has a strictly bounded finite budget and the supply of long strings ($|x| > 2^m$) is infinite, Alice can always find a new dummy string to execute Bob's moves without ever colliding with her initial target-flipping string $z$.
    \item \textbf{If Alice wins} (or after Strategy Stealing): We proceed to execute Alice's strategy (described below).
\end{itemize}

\textbf{Priority Enforcement:} Since a new game has started at level $l$, we immediately \textbf{abort} all games at higher levels ($l' > l$). These levels switch to \textbf{Waiting} status.

\subsubsection*{Case 2: Non-Candidate Output ($q > n$)}
If $V$ outputs a new description of length $q > n$ (within the level's range), the reaction depends on the current status of Level $l$:
\begin{itemize}
    \item \textbf{Status: Waiting.} We do nothing.
    \item \textbf{Status: Active.} We continue executing Alice's winning strategy based on the updated state.
\end{itemize}

\paragraph{Executing Alice's Strategy.}
When the level is Active, $\mathcal{F}$ generates directives to advance Alice's position.
\begin{itemize}
    \item \textbf{Increments (Fueling):} If the strategy requires increasing a counter $\mathrm{St}(x, q)$, we issue a directive to \textbf{make the string encoding the tuple $(x, q-d, w, s)$ $\mathcal{N}_d$-simple}, where $s$ is a new unique index.
    
    \item \textbf{Payoff (Simplification):} If a counter $\mathrm{St}(x, q)$ reaches the price $B(x, q)$, we add $x$ to a \emph{preliminary list} of simple strings. We then monitor the following \textbf{Safety Check} condition:
    \[ \C_V(x) \le q - d/2. \]
    If this condition holds currently or becomes true at any point in the future, $\mathcal{F}$ issues a directive to \textbf{make $x$ $q$-simple}.
    (If $w$ is truly the last string, Lemma~\ref{lemma:realization} below guarantees this check will pass.)
\end{itemize}

\textbf{Global Priority Rule:} Whenever $\mathcal{F}$ issues \emph{any} directive for Level $l$, the global complexity landscape changes. Consequently, we immediately \textbf{abort} all games at higher levels ($l' > l$), switching them to \textbf{Waiting} status.

\subsection{Correctness of the Construction: Resource Accounting}

To prove that the function $G$ is well-defined, we must demonstrate that the algorithm $\mathcal{F}$ never runs out of input strings. Since $G$ maps inputs $p$ to outputs $z$ to enforce complexity constraints $\C_U(z) \le |p|$, we must ensure that for any target complexity length $q$, the number of inputs of length $q-1$ consumed by the construction is strictly bounded by the total number of available strings ($2^{q-1}$).

We analyze the three types of operations consuming input strings. We assume the constant $d$ is chosen sufficiently large to satisfy the inequalities derived below.

\paragraph{1. Dummy Strings (Target complexity $m$).}
These strings are used in the ``Strategy Stealing'' step (Target Flipping). We define a new dummy string only when a local game phase restarts (i.e., a new candidate $w$ appears).
\begin{itemize}
    \item \textbf{Consumption:} The number of phases corresponds to the number of strings $w$ such that $\C_V(w) \le n$ (where $n = m-d$). This count is bounded by $2^{n+1}$.
    \item \textbf{Available Space:} The inputs used here have length $m-1 = n+d-1$. Thus, there are $2^{n+d-1}$ available strings.
    \item \textbf{Check:} We require $2^{n+1} < 2^{n+d-1}$. This inequality holds for any $d \ge 3$.
\end{itemize}

\paragraph{2. Tuple Strings (Target complexity $\mathcal{N}_d$).}
These strings are used in the ``Fuel'' step to increment counters.
\begin{itemize}
    \item \textbf{Consumption:} The total number of increments is the product of the number of games played and the budget allocated to each game.
    \begin{itemize}
        \item Max number of games (candidates $w$): $2^{n+1}$.
        \item Budget per game (from Corollary~\ref{cor:exp_gap}): $c_1 \cdot 2^{\mathcal{N} - n}$.
    \end{itemize}
    Multiplying these, we get the total consumption bound:
    \[ \text{Total Fuel} \le 2^{n+1} \cdot c_1 \cdot 2^{\mathcal{N} - n} = 2 c_1 \cdot 2^{\mathcal{N}}. \]
    \item \textbf{Available Space:} The inputs used here have length $\mathcal{N}_d - 1 = \mathcal{N} + d - 1$. Thus, the capacity is $2^{\mathcal{N} + d - 1}$.
    \item \textbf{Check:} We need to ensure:
    \[ 2 c_1 \cdot 2^{\mathcal{N}} < 2^{\mathcal{N} + d - 1}. \] This inequality clearly holds for sufficiently large $d$.
\end{itemize}

\paragraph{3. Realization Strings (Target complexity $q$).}
These strings are used in the ``Payoff'' step to declare a string $x$ to be $q$-simple.
\begin{itemize}
    \item \textbf{Consumption:} The algorithm $\mathcal{F}$ issues a directive for $x$ \emph{only if} the safety check passes: $\C_V(x) \le q - d/2$. The number of such strings $x$ is bounded by $2^{q - d/2 + 1}$.
    \item \textbf{Available Space:} There are $2^{q-1}$ input strings of length $q-1$.
    \item \textbf{Check:} We require $2^{q - d/2 + 1} < 2^{q-1}$. This simplifies to $d > 4$.
\end{itemize}

\paragraph{Conclusion.}
By choosing a sufficiently large constant $d$ (dependent only on the base machine $V$), we guarantee that for every relevant length, the construction utilizes only a negligible fraction of the available input strings. Thus, $G$ is well-defined and can fully realize Algorithm $\mathcal{F}$.

\subsection{Proof of the Main Theorem}

We are now ready to prove Theorem~\ref{main}.

The decompressor $U$ is defined in Definition~\ref{def:ums}, and the function $G$ is defined in Section~\ref{subsec:def_G}. The constant $d$ is chosen sufficiently large to satisfy the resource accounting conditions from the previous subsection, as well as the conditions derived below.

Recall that by Lemma~\ref{lem:impl}, to show that $H \notin \PP^{F_{\C_U}}$, it suffices to show that for every polynomial-time oracle machine $M$, there exists an input length $m$ such that $M^{F_{\C_U}}(1^m) \neq N_m$, where $N_m = |\{x : \C_U(x) \le m\}|$.

Fix a machine $M = M_l$ from the enumeration. Consider the parameters for level $l$: $n = a_{2l}$ and $m = n + d$. Let $w$ be the \emph{last} string in the enumeration of $V$ with $\C_V(w) \le n$. 
Once $w$ appears, the algorithm $\mathcal{F}$ identifies it as a candidate and initializes the final phase of the game at level $l$. At this moment, the target count $N_m$ is fixed (possibly incremented by exactly one via the ``Strategy Stealing'' step). From this point forward, the construction of $G$ executes \textbf{Alice's winning strategy}.

\paragraph{Stability against Lower Stages.}
Note that the candidate string $w$ is produced by $V$ and satisfies $\C_V(w) \ge n - O(1)$ (by Proposition~\ref{prop:wn_complexity}).
Any string $x$ affected by games at lower levels $l' < l$ has complexity $O(\log n)$, which is drastically smaller than $n$. Thus, the candidate $w$ is distinct from any artifact produced by lower stages.

We must verify that the interaction between $V$ (Bob) and $G$ (Alice) follows the Local Game rules, and that $G$'s moves are valid.
\paragraph{1. Bob (V) follows the Game Rules.}
We interpret the subsequent output of $V$ as moves by Bob.
\begin{itemize}
    \item \textbf{Budget Constraint.} Bob's moves correspond to strings whose complexity drops below the threshold $\mathcal{N} = 2^n$. By the \emph{Tail Upper Bound} (Corollary~\ref{cor:exp_gap}), the number of such strings appearing after $w$ is bounded by $c_1 \cdot 2^{\mathcal{N} - n}$, matching the finite budget allocated to the game.
    \item \textbf{Consistency Constraint.} If $V$ outputs a description making $x$ simple (i.e., $\C_V(x)$ drops to $q-d$), Item~1 of the \textbf{Key Lemma} (Lemma~\ref{lemma:unified_gap}) ensures the number of simple extension tuples $(x, q-d, w, s)$ in $V$ is at least $B(x,q)$. Since $U$ simulates $V$, these tuples also exist in $U$. Thus, Bob's moves automatically satisfy the required threshold for counters $\mathrm{St}(x,q)$ and are consistent with the price mechanism.
\end{itemize}
\paragraph{2. Alice (G) makes Valid Moves.}
Since Alice plays a winning strategy, she forces the abstract game to an outcome $\mathrm{Out} \neq N_m$. However, for this to hold in the actual construction, every time Alice ``buys'' a query $\langle x, q \rangle$ by reaching the counter threshold $\mathrm{St}(x, q) \ge B(x, q)$, the function $G$ must successfully define a short description for $x$. This requires the following \textbf{Safety Check} to eventually hold:
\[ \C_V(x) \le q - d/2. \]
We prove that for the correct string $w$, this check always passes.

\begin{lemma}[Realization Lemma]
\label{lemma:realization}
Let $w$ be the true last string of $V$-complexity $\le n$. Let $x$ be a string and $q$ be a threshold such that Alice has filled the reservoir with tuples in $U$:
\[ |\{ s : \C_U(x, q-d, w, s) \le \mathcal{N}_d \}| \ge B(x, q), \]
where $\mathcal{N}_d = 2^n + d$ and $B(x, q) = \mathcal{A}(n, w, x, q-d)$.
Then, for all sufficiently large $d$:
\[ \C_V(x) \le q - d/2. \]
\end{lemma}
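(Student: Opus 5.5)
The plan is to transfer the count hypothesis from $U$ to $V$ with a loss of only $O(\log d)$ in the complexity threshold. Then we apply Item~2 of the Key Lemma (Lemma~\ref{lemma:unified_gap}) with target complexity $q-d$. This reproduces, in rigorous form, step~2 of the sketch in Section~\ref{sec:overview}.

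\textbf{Step 1: cardinality of the set $S_d=\{z : \C_U(z)\le \mathcal{N}_d\}$.} A $U$-description of length at most $\mathcal{N}_d$ has one of two forms.
\begin{itemize}
\item It is $0^d p$ with $|p|\le \mathcal{N}$. Such descriptions yield at most $2^{\mathcal{N}+1}$ strings.
\item It is $1p$ with $|p|\le \mathcal{N}+d-1$, a directive of $\mathcal{F}$ for target complexity at most $\mathcal{N}_d$.
\end{itemize}
For the second form we use the resource accounting of the previous subsection, applied at every threshold $q'\le\mathcal{N}_d$:
\begin{itemize}
\item Fuel directives at level $l$ number at most $2c_1 2^{\mathcal{N}}$.
\item Realization directives for any threshold $q'$ are at most $2^{q'-d/2+1}$. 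Summing over $q'\le \mathcal{N}_d$ gives $O(2^{\mathcal{N}+d/2})$.
\item Dummy directives and directives from lower levels are negligible, because those thresholds are far smaller.
\end{itemize}
Realization directives are the weak point. A safe bound is $|S_d|\le 2^{\mathcal{N}+d/2+O(1)}$. The sketch's finer bound $2^{\mathcal{N}+O(1)}$ works only if realization strings at thresholds near $\mathcal{N}_d$ are already counted by $V$: they satisfy $\C_V(x)\le q'-d/2$, so for $q'\le \mathcal{N}_d$ we get $\C_V(x)\le \mathcal{N}+d/2$. I will use this refinement. Every realized string $x$ at threshold $q'$ has $\C_V(x)\le q'-d/2$. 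These strings are not new beyond $V$'s own count at level $\mathcal{N}+d/2$, so I account for them as a subset of $\{\C_V\le \mathcal{N}+d/2\}$ rather than as extra elements.

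\textbf{Step 2: converting membership in $S_d$ into a $V$-complexity bound.} The set $S_d$ is enumerable given $n$ and $d$. Any tuple string $z=[x,q-d,w,s]$ in $S_d$ can therefore be described to $V$ by $d$ (using $O(\log d)$ bits), together with $n$ and $\mathcal{N}_d$ (computable from $d$ and the length of the index field), together with the index of $z$ in the enumeration. The index has length $\log|S_d|$.

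If the refined bound $|S_d|\le 2^{\mathcal{N}+O(1)}$ holds, this gives $\C_V(z)\le \mathcal{N}+O(\log d)$. The constant inside the $O(\cdot)$ depends only on $V$ and on the fixed construction. I must also make sure the description is a plain description of length computable from the data. To do this I encode $d$ self-delimitingly and fix the index length to exactly $\lceil\log|S_d|\rceil$ bits, using the upper bound on $|S_d|$, so that $V$ can parse the description.

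\textbf{Step 3: applying the Key Lemma.} Choose $d$ large enough that $O(\log d)\le\sqrt d$. Then each of the at least $B(x,q)$ indices $s$ supplied by the hypothesis satisfies $\C_V(x,q-d,w,s)\le\mathcal{N}+\sqrt d$, so $|T_n^{\sqrt d}(x,q-d,w)|\ge B(x,q)=\mathcal{A}(n,w,x,q-d)$. Since $w=w_n$, Item~2 yields $\C_V(x)\le q-d+d/2=q-d/2$.

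We still have to check the range conditions of the Key Lemma: $n\le q-d<n^{\log n}$ and $|x|\le n^{\log n}$. For $q$ in level $l$'s zone the lower bound holds, because only $q\ge m+1$ is allowed. The upper bounds hold after the appropriate truncation of the zone, or else we note that games only involve those parameters.

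\textbf{Main obstacle.} The hard part is the counting in Step 1, specifically keeping the overhead at $O(\log d)$ rather than $\Theta(d)$. The overhead must stay below $\sqrt d$, so the realization directives, whose count grows with $d$, cannot simply be added. My first approach is the one above: absorb them into $V$'s own count via the Safety Check. If that fails, the fallback is to observe that tuple strings have a syntactic form distinct from realization outputs, and to enumerate only the tuple part of $S_d$, namely $V$'s tuples plus Alice's fuel directives, whose total is at most $2^{\mathcal{N}+1}+2c_12^{\mathcal{N}}$.
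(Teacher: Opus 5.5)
Your skeleton is the paper's: bound $|S_d|$ for $S_d=\{z:\C_U(z)\le\mathcal{N}_d\}$ by $2^{\mathcal{N}+O(1)}$, describe each counted tuple to $V$ by $d$ plus its index in an enumeration of $S_d$ to get $\C_V\le\mathcal{N}+O(\log d)$, and apply Item~2 of the Key Lemma to $q-d$ once $O(\log d)\le\sqrt d$. Your remark that $n$ is recovered from the index length is more careful than the paper's ``$n$ is fixed''.

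The gap is in Step~1, where you depart from the paper's count. You show that a realized string at a threshold $q'\le\mathcal{N}_d$ has $\C_V(x)\le\mathcal{N}+d/2$, and you then book it inside $\{\C_V\le\mathcal{N}+d/2\}$. But the part of $S_d$ you have already paid for is only $\{z:\C_V(z)\le\mathcal{N}\}$. A set of size up to $2^{\mathcal{N}+d/2+1}$ is exactly your ``safe'' bound. It gives only $\C_V(z)\le\mathcal{N}+d/2+O(\log d)$, which is useless against the relaxed set $T^{\sqrt d}_n(x,q-d,w)$. For realized strings to be ``already counted'' you need $\C_V(x)\le\mathcal{N}$, i.e.\ no realization directive at a threshold in $(\mathcal{N}+d/2,\mathcal{N}_d]$. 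Your argument does not establish this.

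The missing ingredient is the threshold range of the construction. Level $l$ only simplifies strings at thresholds $q'\le a_{2l+1}=\mathcal{N}$. In fact it only needs thresholds up to the $\mathrm{poly}(m)$ values that $M_l(1^m)$ can query, which is also what the Key Lemma's range $q-d<n^{\log n}$ requires. Every level $l'>l$, with $n'=a_{2l'}\ge 2^{\mathcal{N}}$, acts only at thresholds $\ge n'+d>\mathcal{N}_d$: dummies sit at $m'=n'+d$, realizations at $q$ with $q-d\ge n'$ (the prices $\mathcal{A}(n',w',x,q-d)$ need this), and fuel at $2^{n'}+d$.

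Hence every realization output lying in $S_d$ satisfies $\C_V(x)\le q'-d/2\le\mathcal{N}-d/2$, so it is already inside the $V$-part. Equivalently, there are at most $\sum_{q'\le\mathcal{N}}2^{q'-d/2+1}=O(2^{\mathcal{N}-d/2})$ of them. Then
$|S_d|\le 2^{\mathcal{N}+1}+2c_1 2^{\mathcal{N}}+O(2^{\mathcal{N}-d/2})+(\text{dummy and lower-level terms})=2^{\mathcal{N}+O(1)}$
with a constant independent of $d$, and your Steps~2--3 go through. The paper's proof simply bounds the strings generated by Alice via the fuel budget and uses this fact tacitly.

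Your fallback does not sidestep the issue. The hypothesis counts every $s$ with $\C_U(x,q-d,w,s)\le\mathcal{N}_d$, regardless of which directive produced it. A realization target is an arbitrary queried string and may itself be a tuple encoding. So enumerating only ``$V$'s tuples plus fuel outputs'' covers all counted tuples only once you know realization outputs are $V$-simple at level $\mathcal{N}$.
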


\begin{proof}
We establish the bound on $\C_V(x)$ by analyzing the set of all strings with bounded $U$-complexity through the following four steps:

\begin{enumerate}
    \item \textbf{Creation of Tuples:} Filling the counter $\mathrm{St}(x,q) \ge B(x,q)$ means that Alice has successfully defined at least $B(x,q)$ extension tuples of the form $(x, q-d, w, s)$ such that their $U$-complexity is at most $\mathcal{N}_d$.
    
    \item \textbf{Bounding the Size of the Bounded-$U$-Complexity Set:} Consider the complete set of strings $S = \{ y : \C_U(y) \le \mathcal{N}_d \}$. By the definition of the universal decompressor $U$, any string in $S$ originates from one of two sources:
    \begin{itemize}
        \item \emph{Simulation of $V$:} Since $U$ simulates $V$ with a length overhead of $d$, a $U$-description of length at most $\mathcal{N}_d = \mathcal{N} + d$ corresponding to $V$'s simulation comes from a $V$-description of length at most $\mathcal{N}$. The number of such strings is strictly bounded by $2^{\mathcal{N}+1}$.
        \item \emph{Directives of $G$:} Strings generated by Alice across all games on this level. As shown in the Resource Accounting section, the total budget used is bounded by the maximum number of candidates ($2^{n+1}$) multiplied by the global budget per game ($O(2^{\mathcal{N}-n})$), yielding at most $O(2^{\mathcal{N}})$ strings.
    \end{itemize}
    Summing these contributions, the total size of $S$ is bounded by $2^{\mathcal{N} + O(1)}$.

    \item \textbf{Bounding the $V$-complexity:} Since $V$ is an optimal universal decompressor, it can algorithmically simulate the enumeration of the set $S$. To describe any specific string $y \in S$ using $V$, one merely needs to specify the set $S$ itself, followed by the index of $y$ within $S$. 
    Specifying the set $S$ only requires encoding the parameter $d$ (which takes $O(\log d)$ bits), since the level parameters $l$ and $n$ are fixed. Describing the index within $S$ requires $\log |S| \le \mathcal{N} + O(1)$ bits. Therefore, the total $V$-complexity for any $y \in S$ satisfies:
    \[ \C_V(y) \le \mathcal{N} + O(\log d). \]

    \item \textbf{Applying the Key Lemma:} From Step 1, Alice has produced at least $B(x,q)$ extension tuples that belong to $S$. By Step 3, each of these tuples has a $V$-complexity of at most $\mathcal{N}+O(\log d)$. Since $\sqrt{d} \gg O(\log d)$ for sufficiently large constants $d$, these tuples fall comfortably within the relaxed extension set:
    \[ T^{\sqrt{d}}_n(x, q-d, w) = \{ s : \C_V(x, q-d, w, s) \le \mathcal{N} + \sqrt{d} \}. \]
    Applying Item~2 of the \textbf{Key Lemma} (Lemma~\ref{lemma:unified_gap}) with target complexity $q-d$, we conclude:
    \[ \C_V(x) \le (q-d) + d/2 = q - d/2. \]
\end{enumerate}
\end{proof}
\paragraph{Conclusion.}
The Realization Lemma ensures that every move dictated by Alice's strategy passes the safety check in the definition of $G$. Consequently, $G$ successfully forces $\C_U(x) \le q$ whenever the strategy requires it.
Since Alice follows a winning strategy in the Local Game, the final output of the machine $M$ with oracle $F_{\C_U}$ will differ from the target count $N_m$:
\[ M^{F_{\C_U}}(1^m) \neq N_m. \]
This holds for every machine $M_l$ at its corresponding level. Thus, $H \notin \PP^{F_{\C_U}}$.
\qed

\section{Proof of the Key Lemma}
\label{sec:key}

In this section, we present the complete formal proof of the Key Lemma (Lemma~\ref{lemma:unified_gap}), which was originally formulated and sketched in Section~\ref{subsec:key}.

Before diving into the technical details, we offer a brief note of encouragement to the reader. While this section is quite lengthy, its size is a deliberate choice to prioritize clarity over brevity. We have modularized the main argument into several self-contained sub-lemmas. Most of these intermediate results rely on standard, well-established techniques from algorithmic information theory. The primary exception is the $\tilde{q}$-Structure Lemma (Lemma~\ref{lemma:q_tilde}), which introduces a novel structural decomposition. By meticulously spelling out every transition, bound, and algebraic manipulation, we aim to make the verification process straightforward and the underlying mechanics fully transparent.

Before detailing the proof strategy, we introduce the necessary notation.
Throughout this section, we use the notation $\stackrel{*}{=}$, $\stackrel{*}{\le}$, and $\stackrel{*}{\ge}$ to denote equalities and inequalities that hold up to a multiplicative constant factor depending only on the universal machine $V$ (i.e., $A \stackrel{*}{=} B$ means $c_1 A \le B \le c_2 A$ for some constants $c_1, c_2 > 0$).

We define the complexity thresholds for the two scales of interest:
\[ \mathcal{N} = 2^n \quad \text{and} \quad \mathcal{Q} = 2^q. \]

We define the sets of valid extensions relative to these thresholds. For any string (or tuple) $y$, we denote:
\begin{align*}
    T_n(y) &= \{ s \mid \C_V(y, s) \le \mathcal{N} \}, \\
    T_q(y) &= \{ s \mid \C_V(y, s) \le \mathcal{Q} \}.
\end{align*}
\subsection{Measure vs. Extension Density}
\label{subsec:measure}
We establish a correspondence between the conditional a priori probability $\m(y \cnd n)$ and the number of simple extensions. Denote the relaxed thresholds for arbitrary $\delta \ge 0$:
\[ T^\delta_n(y) = \{ s : \C_V(y, s) \le \mathcal{N} + \delta \}. \]
\[ T^\delta_q(y) = \{ s : \C_V(y, s) \le \mathcal{Q} + \delta \}. \]

\begin{lemma}[Measure Density Lemma]
\label{lemma:measure_density}
Let $y$ be a string with length $|y| < \mathcal{N}/2 - O(1)$.

\begin{enumerate}
    \item \textbf{Standard Threshold:}
    \begin{align*}
        \m(y \cnd n) &\stackrel{*}{\le} |T_n(y)| \cdot 2^{-\mathcal{N}}, \\
        \m(y \cnd q) &\stackrel{*}{\le} |T_q(y)| \cdot 2^{-\mathcal{Q}}.
    \end{align*}
    
    \item \textbf{Relaxed Threshold:} Let $\delta \ge 0$ be an integer. Then:
    \begin{align*}
        \m(y \cnd n) &\stackrel{*}{\ge} |T^\delta_n(y)| \cdot 2^{-\mathcal{N}-O(\delta)}, \\
        \m(y \cnd q) &\stackrel{*}{\ge} |T^\delta_q(y)| \cdot 2^{-\mathcal{Q}-O(\delta)}.
    \end{align*}
\end{enumerate}
\end{lemma}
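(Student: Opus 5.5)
We prove both parts by exhibiting lower semicomputable conditional semimeasures and comparing them with $\m$, together with a direct compression argument in the other direction. The argument is written for the pair $(n,\mathcal{N})$; the case $(q,\mathcal{Q})$ is word-for-word the same, since $\mathcal{Q}=2^q$ is computable from $q$ just as $\mathcal{N}=2^n$ is from $n$.

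\textbf{Part 1 (upper bound on $\m$).} We show that every $y$ with $\m(y \cnd n)$ large has many simple extensions. Fix $y$ and let $k=\K(y \cnd n)$, so that by the Coding Theorem (Theorem~\ref{theorem:coding_theorem}) $\m(y\cnd n)\stackrel{*}{=}2^{-k}$. It suffices to show $|T_n(y)|\stackrel{*}{\ge}2^{\mathcal{N}-k}$.

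We build a plain decompressor $D$. On input $p$, it first reads a self-delimiting program $r$ for the universal prefix machine conditional on $n$. The input $n$ is recovered from the total length $|p|$: we pad so that $|p|=\mathcal{N}-c$, and $n$ is the unique integer with $2^n$ close to $|p|+c$. From $r$ and $n$ it computes $y$. It then reads the remaining $\mathcal{N}-c-|r|$ bits $t$ and outputs the pair $(y,t)$.

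Every string $t$ of length $\mathcal{N}-c-k$ yields a $D$-description of $(y,t)$ of length $\mathcal{N}-c$. Hence $\C_V(y,t)\le\mathcal{N}-c+c_D\le\mathcal{N}$ once $c\ge c_D$. This gives $2^{\mathcal{N}-c-k}$ distinct extensions $s=t$, i.e.\ $|T_n(y)|\ge 2^{-c}\,2^{\mathcal{N}}\m(y\cnd n)$.

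The pairing function must not inflate lengths by more than $O(1)$ beyond the encoding of $y$. We use a pairing in which the second component is appended raw after a self-delimiting code of $y$. The hypothesis $|y|<\mathcal{N}/2-O(1)$ guarantees that $k$ is well below $\mathcal{N}$, so there is room for the padding and $\mathcal{N}-c-k$ is positive.

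\textbf{Part 2 (lower bound on $\m$).} Define
\[ \mu(y\cnd n)=\sum_{\delta\ge0}2^{-2\log(\delta+1)-O(1)}\,|T^\delta_n(y)|\,2^{-\mathcal{N}-\delta-1}. \]
We need three facts about $\mu$. It is lower semicomputable given $n$, since each $T^\delta_n(y)$ is enumerable from $n$. It is a semimeasure: for fixed $\delta$ we have $\sum_y|T^\delta_n(y)|\le\#\{z:\C_V(z)\le\mathcal{N}+\delta\}<2^{\mathcal{N}+\delta+1}$, because pairs $(y,s)$ are distinct strings; summing the weights $2^{-2\log(\delta+1)}$ over $\delta$ then gives a total $\le1$. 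Finally, by maximality of $\m$,
\[ \m(y\cnd n)\stackrel{*}{\ge}\mu(y\cnd n)\ge|T^\delta_n(y)|\,2^{-\mathcal{N}-\delta-2\log(\delta+1)-O(1)}, \]
which is the claimed bound $2^{-\mathcal{N}-O(\delta)}$.

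\textbf{Main obstacle.} The delicate step is Part 1. The plain machine must use the exact length $\mathcal{N}$ of the description both to recover $n$ and to fix the free suffix. We must also make sure the pairing $(y,s)$ charges only $O(1)$ beyond the prefix code for $y$, so that constants do not depend on $y$. I would fix the pairing convention $[y,s]=\bar y\,s$, where $\bar y$ is a prefix code, and verify that the $O(1)$ from the pairing and the length-recovery trick are absorbed into the multiplicative constant.
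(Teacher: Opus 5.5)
Your proof is correct. Part~2 is essentially the paper's argument. The paper shows $\m(y \cnd n,\delta) \stackrel{*}{\ge} |T^\delta_n(y)|\,2^{-\mathcal{N}-\delta-1}$ for a semimeasure conditional on $\langle n,\delta\rangle$, and then removes $\delta$ from the condition at a cost of $2^{-\K(\delta)}$. You instead fold the weights $2^{-2\log(\delta+1)}$ into a single semimeasure conditional on $n$ alone, which packages the same idea slightly more cleanly.

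Part~1 takes a genuinely different route. The paper never passes through $\K$ there. It quantizes the lower semicomputable $\m(\cdot \cnd n)$ into quanta of size $\epsilon = 2^{-(\mathcal{N}-c)}$, lets an index $r$ of length $\mathcal{N}-c$ name the $r$-th quantum event $(y',k)$, and counts the $\lfloor \m(y\cnd n)/\epsilon\rfloor$ quanta assigned to $y$. That argument uses only lower semicomputability and the total-mass bound, and it is the exact converse of the Part~2 construction. You instead use the Coding Theorem to get a shortest prefix program $r$ of length $\K(y\cnd n)$, and append $\mathcal{N}-c-\K(y\cnd n)$ free bits. This exhibits the descriptions explicitly and makes the count $2^{\mathcal{N}-\K(y\cnd n)-O(1)}$ transparent. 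The cost is a dependence on the Coding Theorem, which the paper uses elsewhere anyway.

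Two remarks on your details. First, your worry about the pairing function is unnecessary. $\C_V(y,t)$ is bounded by the length of the $D$-input $p$ plus $c_D$, no matter how long the output string $[y,t]$ is, so any injective computable pairing works and you need not change the paper's convention. Second, to avoid circularity between $c$ and $c_D$, $D$ should not depend on $c$. Let $D$ recover $n$ as $\lceil \log_2 |p| \rceil$, which is valid whenever $0\le c<2^{n-1}$, rather than ``from $|p|+c$''. Then locate the end of $r$ using prefix-freeness of the domain of the prefix machine with condition $n$. The paper's quantization argument glosses over the same length-recovery point, and the same fix applies there.
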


\paragraph{Proof of Item 1.}
We construct a description procedure based on the measure $\m$. Let $c$ be a sufficiently large constant. Discretize the probability space into ``quanta'' of size $\epsilon = 2^{-(\mathcal{N} - c)}$.

Consider an algorithm that simulates the enumeration of the lower semicomputable semimeasure $\m(\cdot \cnd n)$. The algorithm takes an input $r$ (interpreted as an index $0 \le r < 2^{\mathcal{N} - c}$) and searches for the $r$-th ``quantum'' of probability mass.
Specifically, it monitors the enumeration of $\m(\cdot \cnd n)$ and waits for the moment when for some string $y'$, the cumulative value $\m(y' \cnd n)$ reaches at least $k \cdot \epsilon$ for some new integer $k$. The algorithm matches the global index $r$ to this specific event and outputs the pair $(y', k)$.

Since the input $r$ has length $\mathcal{N} - c$, the complexity of the output pair is bounded:
\[ \C_V(y', k) \le |r| + O(1) = \mathcal{N} - c + O(1). \]
By choosing $c$ large enough to cover the $O(1)$ overhead, we ensure $\C_V(y', k) \le \mathcal{N}$. Thus, $k$ (interpreted as a string extension) belongs to the set $T_n(y')$.

Now consider our fixed string $y$. From the length condition $|y| < \mathcal{N}/2 - O(1)$ we have:
\[ \m(y \cnd n) \ge 2^{-(\mathcal{N}-c)} = \epsilon. \]
Since the measure $\m(y \cnd n)$ strictly exceeds $\epsilon$, the number of full quanta allocated to $y$ is strictly positive.
The exact number of such distinct quanta $k$ is $\lfloor \m(y \cnd n) / \epsilon \rfloor$.
Each such quantum $k$ corresponds to a distinct valid extension in $T_n(y)$.
Therefore, $|T_n(y)| \ge \frac{1}{2} \m(y \cnd n) \cdot 2^{\mathcal{N} - c}$, which implies:
\[ \m(y \cnd n) \stackrel{*}{\le} |T_n(y)| \cdot 2^{-\mathcal{N}}. \]

\paragraph{Proof of Item 2.}
Let $K = \mathcal{N} + \delta$. We define a specific semimeasure $P_\delta$ relative to the threshold $K$.
\[ P_\delta(y) = |T^\delta_n(y)| \cdot 2^{-K - 1}. \]
Since the set of pairs with complexity $\le K$ is computably enumerable, $P_\delta(y)$ is lower semicomputable (given $n$ and $\delta$). The sum of $P_\delta(y)$ over all $y$ is bounded by the total number of pairs with complexity $\le K$, normalized by $2^{-K}$:
\[ \sum_y P_\delta(y) \le \sum_{y, s: \C_V(y, s) \le K} 2^{-K-1} \le 2^{K+1} \cdot 2^{-K-1} = 1. \]
Thus, $P_\delta$ is a valid semimeasure relative to the condition $\langle n, \delta \rangle$.
By the maximality property of the universal semimeasure $\m$, we have:
\[ \m(y \cnd n, \delta) \stackrel{*}{\ge} P_\delta(y) = |T^\delta_n(y)| \cdot 2^{-\mathcal{N} - \delta - 1}. \]
Using the Coding Theorem (Theorem~\ref{theorem:coding_theorem}), 
\[ -\log \m(y \cnd n) = \K(y \cnd n) + O(1) \le \K(y \cnd n, \delta) + \K(\delta) + O(1). \]
In terms of measures, this translates to:
\[ \m(y \cnd n) \stackrel{*}{\ge} \m(y \cnd n, \delta) \cdot 2^{-\K(\delta)} \ge \m(y \cnd n, \delta) \cdot 2^{-O(\delta)}. \]
Combining these inequalities and absorbing the multiplicative constant into the $O(\delta)$ term in the exponent, we obtain:
\[ \m(y \cnd n) \ge |T^\delta_n(y)| \cdot 2^{-\mathcal{N}-O(\delta)}. \]

This yields the following immediate consequence:
\begin{corollary}[Relaxed vs. Strict Volume]
\label{cor:relaxed_strict}
For any string $y$ with length $|y| < \mathcal{N}/2 - O(1)$ and integer $\delta \ge 0$:
\begin{align*}
    |T_n(y)| &\ge |T^\delta_n(y)| \cdot 2^{-O(\delta)}, \\
    |T_q(y)| &\ge |T^\delta_q(y)| \cdot 2^{-O(\delta)}.
\end{align*}
\end{corollary}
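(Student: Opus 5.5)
The corollary is a direct chaining of the two items of the Measure Density Lemma (Lemma~\ref{lemma:measure_density}) applied to the same string $y$; no new construction is needed. For the $n$-scale, Item~2 with the given $\delta$ bounds the relaxed volume from above by the measure:
\[ |T^\delta_n(y)| \cdot 2^{-\mathcal{N}-O(\delta)} \stackrel{*}{\le} \m(y \cnd n). \]
Item~1 bounds the measure by the strict volume, $\m(y \cnd n) \stackrel{*}{\le} |T_n(y)| \cdot 2^{-\mathcal{N}}$. The length hypothesis $|y| < \mathcal{N}/2 - O(1)$ is exactly the one under which Item~1 was proved, so both items apply to $y$. Concatenating the two inequalities, the factors $2^{-\mathcal{N}}$ cancel. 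This leaves
\[ |T_n(y)| \stackrel{*}{\ge} |T^\delta_n(y)| \cdot 2^{-O(\delta)}. \]

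The remaining step is to absorb the multiplicative constant hidden in $\stackrel{*}{\ge}$ into the exponent, turning it into an exact inequality. Write the constant as $2^{-c}$ and note that $2^{-c-O(\delta)} \ge 2^{-O(\delta)}$ once the implicit constant in $O(\delta)$ is enlarged to cover $c$. Here I would interpret $O(\delta)$ as $O(\delta+1)$, following the paper's convention in the proof of Item~2, where $\K(\delta)$ was bounded by $O(\delta)$. The case $\delta = 0$ is trivial anyway, since then $T^0_n(y) = T_n(y)$.

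The $q$-scale statement is proved identically, using the second lines of Items~1 and~2. One subtlety needs checking: the length condition in Lemma~\ref{lemma:measure_density} is phrased relative to $\mathcal{N}$, while Item~1 for scale $q$ requires $\m(y \cnd q) \ge 2^{-(\mathcal{Q}-c)}$. Since $q \ge n$ gives $\mathcal{Q} \ge \mathcal{N}$, the bound $|y| < \mathcal{N}/2 - O(1)$ implies $|y| < \mathcal{Q}/2 - O(1)$. The Item~1 argument then goes through verbatim, so I expect no real obstacle. The only point deserving care is making sure the constants in the two items are independent of $y$ and $\delta$, which holds because they depend only on $V$.
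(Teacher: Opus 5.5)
Your proposal is correct and matches the paper, which states this corollary without a separate proof, as an immediate consequence of chaining Item~2 and Item~1 of Lemma~\ref{lemma:measure_density} so that the $2^{-\mathcal{N}}$ (resp.\ $2^{-\mathcal{Q}}$) factors cancel. Your extra details are sound points the paper leaves implicit: absorbing the hidden multiplicative constant by reading $O(\delta)$ as $O(\delta+1)$ (with $\delta=0$ trivial), and checking that $\mathcal{Q}\ge\mathcal{N}$ makes the length hypothesis sufficient at scale $q$.
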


\begin{corollary}[Conditioning Shift]
\label{cor:conditioning_shift}
Let $y$ and $z$ be strings such that $\C_V(y \cnd z) = \delta$ and 
$|y|, |z| < \mathcal{N}/2 - O(1)$. Then:
\begin{align*}
    |T_n(y)| &\ge |T_n(z)| \cdot 2^{-O(\delta)}, \\
    |T_q(y)| &\ge |T_q(z)| \cdot 2^{-O(\delta)}.
\end{align*}
\end{corollary}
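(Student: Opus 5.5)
The statement is Corollary~\ref{cor:conditioning_shift}. I would derive it by sandwiching both extension counts between values of the universal semimeasure, using Lemma~\ref{lemma:measure_density}.

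For the first inequality, I would chain three comparisons. First, Item~1 of Lemma~\ref{lemma:measure_density} applied to $y$ gives $|T_n(y)| \stackrel{*}{\ge} \m(y \cnd n)\, 2^{\mathcal{N}}$. Second, I would compare $\m(y \cnd n)$ with $\m(z \cnd n)$ via prefix complexity and Corollary~\ref{cor:ch}: $\K(y \cnd n) \le \K(z \cnd n) + \K(y \cnd z) + O(1)$. Third, Item~2 of the same lemma with $\delta = 0$, applied to $z$, gives $\m(z \cnd n) \stackrel{*}{\ge} |T_n(z)|\, 2^{-\mathcal{N}}$. Both applications of the lemma are legitimate because $|y|, |z| < \mathcal{N}/2 - O(1)$.

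To finish the second comparison I need $\K(y \cnd z) = O(\delta)$. The Basic properties proposition gives $\K(y \cnd z) \le \C_V(y \cnd z) + O(\log \C_V(y \cnd z)) = \delta + O(\log \delta) = O(\delta)$. For $\delta = 0$ the $O(\cdot)$ should be read as $O(1+\delta)$; the additive constant is absorbed by the $\stackrel{*}{}$ conventions. By the Coding Theorem (Theorem~\ref{theorem:coding_theorem}) this yields $\m(y \cnd n) \stackrel{*}{\ge} \m(z \cnd n)\, 2^{-O(\delta)}$.

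Chaining the three comparisons gives $|T_n(y)| \stackrel{*}{\ge} 2^{\mathcal{N}}\, 2^{-O(\delta)}\, 2^{-\mathcal{N}} |T_n(z)|$. The multiplicative constant is absorbed into $2^{-O(\delta)}$, again read as $2^{-O(1+\delta)}$.

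The $q$-version is word-for-word the same, using the $T_q$ lines of Lemma~\ref{lemma:measure_density} and conditioning everything on $q$ instead of $n$. One point needs care: the lemma's length hypothesis is phrased with $\mathcal{N}$. Since $q \ge n$, we have $\mathcal{Q} \ge \mathcal{N}$, so the hypothesis needed for the $T_q$ lines, $|y| < \mathcal{Q}/2 - O(1)$, is implied. If the proof of Item~1 for $q$ needs $\m(y \cnd q) \ge 2^{-(\mathcal{Q}-c)}$, that also holds, because $\K(y \cnd q) \le 2|y| + O(1) < \mathcal{Q} - c$.

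There is no real obstacle in this argument. The only delicate step is the conversion from the plain conditional complexity $\C_V(y \cnd z)$ in the hypothesis to the prefix complexity $\K(y \cnd z)$ that the semimeasure argument requires. It costs only a logarithmic overhead, which is swallowed by $O(\delta)$.
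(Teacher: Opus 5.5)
Your argument is correct, but it follows a different route from the paper's. The paper works directly with plain descriptions. For each $s \in T_n(z)$ it turns a description of $(z,s)$ into one of $(y,s)$ by appending a self-delimiting conditional program for $y$ given $z$. This gives the inclusion $T_n(z) \subseteq T^{\delta'}_n(y)$ with $\delta' = O(\delta)$, and the paper then invokes Corollary~\ref{cor:relaxed_strict} to pass from the relaxed set back to $T_n(y)$.

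You never relax the threshold. You apply Lemma~\ref{lemma:measure_density} at the exact threshold $\mathcal{N}$ to both strings: Item~1 to $y$, and Item~2 with $\delta=0$ to $z$. The $\delta$-dependence is carried at the level of a priori probability, through Corollary~\ref{cor:ch} and the Coding Theorem.

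Corollary~\ref{cor:relaxed_strict} is itself obtained by passing through $\m(y \cnd n)$, so both proofs rest on the same lemma. What each buys:
\begin{itemize}
\item The paper's version keeps the reasoning at the level of explicit description transformations, the same manipulation reused later in the Volume Decomposition Lemma.
\item Yours avoids the relaxed sets and the extra $\K(\delta)$ conditioning shift in the proof of Item~2. It also yields the slightly sharper and more natural bound $|T_n(y)| \stackrel{*}{\ge} 2^{-\K(y \cnd z)}\,|T_n(z)|$. The stated $2^{-O(1+\delta)}$ form then follows from $\K(y \cnd z) \le \delta + O(\log \delta)$.
\end{itemize}
Your remark that $O(\delta)$ must be read as $O(1+\delta)$ is appropriate, and the paper's statement implicitly needs the same reading.
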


\begin{proof}
We prove the statement for the parameter $n$.
Consider any $s$ such that $\C_V(z, s) \le \mathcal{N}$.
We can compute the pair $(y, s)$ by first extracting $z$ from the description of $(z, s)$, and then applying the conditional program for $y$ given $z$. To concatenate these two plain descriptions, we encode the conditional program in a self-delimiting way,
which increases its length to $\delta + O(\log \delta)$ bits.

Thus, the combined description length is bounded by:
\[ \C_V(y, s) \le \C_V(z, s) + \delta + O(\log \delta) \le \mathcal{N} + O(\delta). \]
Thus, $s \in T^{\delta'}_n(y)$ for $\delta' = O(\delta)$.
Consequently, $|T^{\delta'}_n(y)| \ge |T_n(z)|$.
Applying Corollary~\ref{cor:relaxed_strict} with parameter $\delta'$, we obtain:
\[ |T_n(y)| \ge |T^{\delta'}_n(y)| \cdot 2^{-O(\delta')} \ge |T_n(z)| \cdot 2^{-O(\delta)}. \]
\end{proof}

\subsection{The Construction of \texorpdfstring{$\tilde{q}$}{q-tilde}}

We require a lemma that constructs a specific string $\tilde{q}$ (related to the threshold $q$) with precise complexity properties.

\begin{lemma}[The $\tilde{q}$-Structure Lemma]
\label{lemma:q_tilde}
There exists a deterministic algorithm $\mathcal{B}$ with the following properties.

\textbf{Input:}
\begin{itemize}
    \item An integer $n$.
    \item A binary string $w$.
    \item An integer $q \le n^{\log n}$.
\end{itemize}

\textbf{Output:} A binary string $\tilde{q}$.

\textbf{Properties:} The algorithm $\mathcal{B}$ always halts. Furthermore, if $w$ is the \emph{last} string in the enumeration of the set $\{ y \mid \C_V(y) \le n \}$, then the output $\tilde{q}$ satisfies the following conditions:

\begin{enumerate}
    \item \textbf{Small Length:} The length of the string $\tilde{q}$ is logarithmic in $q$.
    \[ |\tilde{q}| = O(\log q). \]

    \item \textbf{Encodes $q$:} The threshold $q$ is effectively computable from $\tilde{q}$.
    \[ \C_V(q \cnd \tilde{q}) = O(1). \]

    \item \textbf{Complexity Decomposition:}
    Let $\alpha := \K(\tilde{q} \cnd n)$. Then:
    \[
        \K(\tilde{q} \cnd n) = \K\Big( \tilde{q}, \, \alpha, \, \K\big(\tilde{q}, \alpha \cnd q\big) \Bigm| n \Big) + O(1).
    \]
\end{enumerate}
\end{lemma}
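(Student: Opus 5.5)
The plan is to take $\tilde q = \langle q, l\rangle$ for a suitably chosen integer $l$, and to pick $l$ by a discrete fixed-point argument so that $l$ already encodes, up to $O(1)$ bits, the quantity $\beta := \K(\tilde q, \alpha \cnd q)$. Property~2 is then immediate, since $q$ is a component of $\tilde q$. Property~1 will follow once we show that the chosen $l$ is $O(\log q)$.

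\textbf{Reducing Property~3 to recovering $\beta$.} The inequality ``$\le$'' holds because projection to the first component costs $O(1)$. For ``$\ge$'', Proposition~\ref{prop:self_sufficiency} gives $\K(\tilde q, \alpha \cnd n) = \alpha + O(1)$. By the chain rule (Theorem~\ref{theorem:chain_rule}) it then suffices to show
\[
\K(\beta \cnd \tilde q, \alpha, n) = O(1),
\]
that is, that $\beta$ can be read off from $\tilde q$ using $O(1)$ extra bits.

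\textbf{Choice of $l$.} Define
\[
f(l) := \K\big(\langle q,l\rangle, \K(\langle q,l\rangle \cnd n) \bigm| q\big).
\]
Two facts about $f$ are needed.
\begin{enumerate}[label=(\roman*)]
\item There is a constant $c$ with $|f(l+1) - f(l)| \le c$. This holds because $\langle q,l\rangle$ and $\langle q,l+1\rangle$ are computable from each other. Their complexities given $n$ therefore differ by $O(1)$, so each of the two $\alpha$-values is obtained from the other by adding an offset of bounded size. The joint objects thus transform into each other with $O(1)$ advice.
\item $f(l) \le 2\log l + O(\log\log q)$. The string $\langle q,l\rangle$ costs $\K(l)$ bits given $q$, and $\alpha \le O(\log q)$ costs another $O(\log\log q)$ bits. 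Consequently $f(l) \le l$ for all $l \ge l^* = O(\log\log q)$, while $f(0) \ge 0$.
\end{enumerate}
Let $l$ be the least integer with $f(l) \le l$ (or $l = 0$ if $f(0) \le 0$). Minimality gives $f(l-1) > l-1$, and (i) then gives $l - c - 1 < f(l) \le l$. Hence $\beta = f(l) = l - j$ for some $j \in [0, c]$, and $j$ is specified by $O(1)$ bits. This proves the displayed claim above, and therefore Property~3. The length of $\tilde q$ is $O(\log q)$, since $l = O(\log\log q)$.

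\textbf{Computability via $w$.} The main obstacle is that $\mathcal B$ must actually find this $l$, which requires computing the values $\K(\cdot \cnd n)$ and $\K(\cdot \cnd q)$ exactly. All objects involved have length $O(\log q) = O(\log^2 n)$, so every relevant shortest program has length $\ll n$. I would prove a standard ``busy beaver'' fact. Let $t_w$ be the step at which $w$ appears in the enumeration. If $w = w_n$, then every prefix program of length at most $n/2$ (applied to conditions of length $O(\log^2 n)$) that halts at all halts within $t_w$ steps. Otherwise, from the program, its condition, and its halting time one could compute a time after $t_w$, run the enumeration up to that time, and output a string outside its first $t_w$ steps. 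Such a string would have $V$-complexity at most $n/2 + O(\log^2 n) < n$ yet appear after $w_n$, or else would give a description of $w_n$ given $n$ of length $< n - O(1)$; the latter contradicts Proposition~\ref{prop:wn_complexity}.

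Given this fact, $\mathcal B$ runs all short programs for $t_w$ steps, computes $f(0), f(1), \dots$ exactly, and returns the first $l$ with $f(l) \le l$. If $w$ is not the true last string, $\mathcal B$ still halts, because it caps the search at $l^*$ and outputs $\langle q, l^*\rangle$. The delicate points to check carefully are the constant in the busy-beaver bound relative to the $O(\log^2 n)$ sizes, and the exact-constant bookkeeping in~(i).
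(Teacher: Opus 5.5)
Your proposal is correct and follows essentially the paper's route: the same witness $\tilde q=\langle q,l\rangle$, the same function $\beta(l)=\K(\langle q,l\rangle,\alpha(l)\cnd q)$ with $|\beta(l+1)-\beta(l)|=O(1)$, and a discrete fixed-point choice of $l$; you take the least $l$ with $\beta(l)\le l$ where the paper takes the largest $l$ with $\beta(l)\ge l$. Property~3 then comes from self-sufficiency, and the complexities are computed exactly using the time at which $w$ appears. Your justification of that time bound, via a late halting yielding a short description of $w_n$ and contradicting Proposition~\ref{prop:wn_complexity}, is more careful than the paper's Proposition~\ref{prop:computability}, which only asserts that all short computations finish before $w$ appears.
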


We explain the meaning of the Complexity Decomposition condition in the next subsection.

The proof of this lemma is given in Section~\ref{subsec:qstr}.

\subsection{Proof of the Complexity Symmetry Shift}

We now prove the key corollary of Lemma~\ref{lemma:q_tilde}.

\begin{lemma}[Complexity Symmetry Shift]
\label{cor:symmetry}
Assume $\K(n \cnd q) = O(1)$. Let $\tilde{q}$ be a string satisfying the properties of Lemma~\ref{lemma:q_tilde}. Then for any string $x$, the following identity holds up to an $O(1)$ additive term:
\[
    \K(x, \tilde{q}, \K(\tilde{q} \cnd n) \cnd q) - \K(\tilde{q}, \K(\tilde{q} \cnd n) \cnd q) = \K(\tilde{q}, x \cnd n) - \K(\tilde{q} \cnd n).
\]
\end{lemma}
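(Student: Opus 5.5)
Write $\alpha=\K(\tilde q\cnd n)$ and $\beta=\K(\tilde q,\alpha\cnd q)$. The plan is to expand both differences with the Chain Rule (Theorem~\ref{theorem:chain_rule}) until each becomes a conditional complexity of $x$. We then use the assumption $\K(n\cnd q)=O(1)$, together with Property~2 of Lemma~\ref{lemma:q_tilde} ($q$ is computable from $\tilde q$), to show that the two conditions are interchangeable up to $O(1)$. Property~3 (Complexity Decomposition) is what lets $\beta$ be added to or removed from a condition at cost $O(1)$.

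\textbf{Left-hand side.} Apply the Chain Rule with the pair $(\tilde q,\alpha)$ first, conditional on $q$:
\[
\K(x,\tilde q,\alpha\cnd q)=\beta+\K(x\cnd \tilde q,\alpha,\beta,q)+O(1).
\]
Since $q$ is computable from $\tilde q$, it can be dropped from the condition. The left-hand side therefore equals $\K(x\cnd\tilde q,\alpha,\beta)$.

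\textbf{Right-hand side.} Apply the Chain Rule conditional on $n$:
\[
\K(\tilde q,x\cnd n)=\alpha+\K(x\cnd\tilde q,\alpha,n)+O(1).
\]
Since $n$ is computable from $q$, and $q$ from $\tilde q$, the condition $n$ can be removed. The right-hand side therefore equals $\K(x\cnd\tilde q,\alpha)$.

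\textbf{Removing $\beta$.} It remains to show $\K(x\cnd\tilde q,\alpha,\beta)=\K(x\cnd\tilde q,\alpha)+O(1)$. The direction $\le$ holds because extra conditions only help. For $\ge$, it suffices to prove $\K(\beta\cnd\tilde q,\alpha)=O(1)$; then $\beta$ can be recomputed from the condition. This is exactly where Property~3 enters. Combine Proposition~\ref{prop:self_sufficiency}, which gives $\K(\tilde q,\alpha\cnd n)=\alpha+O(1)$, with the Chain Rule applied to the triple $(\tilde q,\alpha,\beta)$ given $n$:
\[
\K(\tilde q,\alpha,\beta\cnd n)=\K(\tilde q,\alpha\cnd n)+\K(\beta\cnd \tilde q,\alpha,\K(\tilde q,\alpha\cnd n),n)+O(1).
\]
Property~3 says the left side is $\alpha+O(1)$. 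Hence
\[
\K(\beta\cnd\tilde q,\alpha,\K(\tilde q,\alpha\cnd n),n)=O(1).
\]
In this condition, $\K(\tilde q,\alpha\cnd n)$ equals $\alpha+O(1)$, so it is recoverable from $\alpha$ with $O(1)$ advice. Also, $n$ is computable from $\tilde q$. So $\K(\beta\cnd\tilde q,\alpha)=O(1)$, which closes the argument.

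\textbf{Main obstacle.} The delicate step is this last one: using Property~3 to conclude that $\beta$ carries no information beyond $(\tilde q,\alpha)$. One must check that the self-sufficiency term in the Chain Rule is indeed redundant given $\alpha$. One must also check that every condition swap ($q\leftrightarrow n$, adding or removing $q$) costs only $O(1)$ and introduces no logarithmic terms. The Chain Rule is exact up to $O(1)$ only when the complexity term appears in the condition, and the argument relies on this throughout.
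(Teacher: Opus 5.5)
Your proof is correct and is essentially the paper's argument. Both of you use the Chain Rule to reduce each side to a conditional complexity of $x$. Both get $\text{LHS}\le\text{RHS}$ because the left-hand condition is richer, and both get the reverse inequality from Property~3.

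The only difference is cosmetic. You first turn Property~3, together with Proposition~\ref{prop:self_sufficiency}, into the statement $\K(\beta\cnd\tilde q,\alpha)=O(1)$. The paper instead substitutes the invariant $\K(T\cnd n)=\K(\tilde q\cnd n)+O(1)$, for the carrier tuple $T=(\tilde q,\alpha,\beta)$, directly into the right-hand difference.
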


\begin{proof}
For convenience we denote the Left Hand Side (relative to $q$) as \textbf{LHS} and the Right Hand Side (relative to $n$) as \textbf{RHS}.

\subsection*{Part 1: The ``Easy'' Direction ($\text{LHS} \le \text{RHS}$)}

First, we establish that the inequality holds in one direction purely based on information content, regardless of the internal structure of $\tilde{q}$.

\paragraph{Analyzing the RHS.}
We expand the RHS using the conditional chain rule relative to $n$:
\[ \text{RHS} = \K(\tilde{q}, x \cnd n) - \K(\tilde{q} \cnd n) = \K(x \cnd \tilde{q}, \K(\tilde{q} \cnd n), n) + O(1). \]
The condition set for the RHS is effectively:
\[ C_{RHS} = \{ \tilde{q}, \, \K(\tilde{q} \cnd n), \, n \}. \]

\paragraph{Analyzing the LHS.}
Let $Y$ be the pair consisting of $\tilde{q}$ and its complexity relative to $n$:
\[ Y = (\tilde{q}, \, \K(\tilde{q} \cnd n)). \]
The LHS can be rewritten using the chain rule relative to $q$:
\[ \text{LHS} = \K(x, Y \cnd q) - \K(Y \cnd q) = \K(x \cnd Y, \K(Y \cnd q), q) + O(1). \]
Expanding $Y$, the condition set for the LHS is:
\[ C_{LHS} = \{ \tilde{q}, \, \K(\tilde{q} \cnd n), \, \K(Y \cnd q), \, q \}. \]

\paragraph{Comparison.}
Notice that $C_{LHS}$ contains the threshold $q$. Since $\K(n \cnd q) = O(1)$, knowledge of $q$ allows the reconstruction of $n$. Furthermore, $C_{LHS}$ explicitly contains $\tilde{q}$ and $\K(\tilde{q} \cnd n)$, which are the core components of $C_{RHS}$.
Thus, the oracle in the LHS has access to strictly more information than the oracle in the RHS.
Therefore, $\text{LHS} \le \text{RHS} + O(1)$.

\subsection*{Part 2: Deriving the Necessary Structure of $\tilde{q}$}

The reverse inequality, $\text{RHS} \le \text{LHS}$, is not true for an arbitrary string. We show that the structure of $\tilde{q}$ guaranteed by Lemma~\ref{lemma:q_tilde} is sufficient to make this inequality hold.

We introduce a ``carrier'' tuple $T$ corresponding to the decomposition property. We require the following invariant (which is exactly Property 3 of Lemma~\ref{lemma:q_tilde}):
\begin{equation}
\label{eq:T_invariant}
\K(T \cnd n) = \K(\tilde{q} \cnd n) + O(1).
\end{equation}

\paragraph{Step A: Upper Bounding the RHS.}
We upper bound the RHS by replacing $\tilde{q}$ with the more informative tuple $T$. Since $T$ contains $\tilde{q}$, the pair $(\tilde{q}, x)$ is trivially recoverable from $(T, x)$. Using the invariant \eqref{eq:T_invariant}, we get:
\begin{align*}
    \text{RHS} &= \K(\tilde{q}, x \cnd n) - \K(\tilde{q} \cnd n) \\
               &\le \K(T, x \cnd n) - \K(T \cnd n) + O(1).
\end{align*}
Now we apply the chain rule to the term $\K(T, x \cnd n)$:
\[ \K(T, x \cnd n) = \K(T \cnd n) + \K(x \cnd T, \K(T \cnd n), n) + O(1). \]
Substituting this back, the term $\K(T \cnd n)$ cancels out:
\begin{equation}
\label{eq:rhs_bound}
    \text{RHS} \le \K(x \cnd T, \K(T \cnd n), n) + O(1).
\end{equation}

\paragraph{Step B: Collapsing the LHS.}
Recall the expansion of the LHS:
\[ \text{LHS} = \K(x \cnd \tilde{q}, \, \K(\tilde{q} \cnd n), \, \K(Y \cnd q), \, q) + O(1), \]
where
\[ Y = (\tilde{q}, \K(\tilde{q} \cnd n)). \]

To enable the bound $\text{RHS} \le \text{LHS}$, we define $T$ to explicitly capture all components of the LHS condition:
\[ T = \Big( \tilde{q}, \, \underbrace{\K(\tilde{q} \cnd n)}_{\alpha}, \, \underbrace{\K\big( (\tilde{q}, \alpha) \cnd q \big)}_{\beta} \Big). \]
With this specific definition:
\begin{enumerate}
    \item $T$ contains $\tilde{q}$ and $\alpha = \K(\tilde{q} \cnd n)$.
    \item $T$ contains $\beta$, which is the complexity of the first two components relative to $q$. Thus, the term $\K(Y \cnd q)$ in the LHS condition is computable from $T$.
    \item Since $T$ contains $\tilde{q}$ (and $\C_V(q \cnd \tilde{q}) = O(1)$), $q$ is computable from $T$.
\end{enumerate}
Therefore, given $T$, all other conditions in the LHS are redundant:
\begin{equation}
\label{eq:lhs_bound}
    \text{LHS} = \K(x \cnd T) + O(1).
\end{equation}

\paragraph{Step C: Final Comparison.}
We now compare the derived bounds.
\begin{itemize}
    \item From \eqref{eq:rhs_bound}: $\text{RHS} \le \K(x \cnd T, \K(T \cnd n), n) + O(1)$.
    \item From \eqref{eq:lhs_bound}: $\text{LHS} = \K(x \cnd T)+ O(1)$.
\end{itemize}
The condition in the RHS bound includes $T$. Since $T$ allows the computation of $\K(T \cnd n)$ (via $\alpha$) and $n$ (via $q$), the condition sets are informationally equivalent (up to $O(1)$), ensuring the inequality holds.
Combining results:
\[ \text{RHS} \le \K(x \cnd T, \dots) + O(1) \le \K(x \cnd T) + O(1) = \text{LHS} + O(1). \]
Thus, the Corollary is proved.
\end{proof}

\subsection{Computability: The Oracle Power of \texorpdfstring{$w$}{w}}
\label{subsec:computability}

The definitions of the algorithms in this section (both the threshold computation $\mathcal{A}$ and the structural search $\mathcal{B}$) rely on the prefix complexity function $\K(\cdot)$, which is generally uncomputable. However, these algorithms receive as input the string $w$, defined as the \emph{last} string generated in the canonical enumeration of the set $\{ y \mid \C_V(y) \le n \}$.

We show that $w$ acts as a powerful oracle. Let $T_w$ denote the number of computation steps performed by the optimal machine $V$ until it outputs the string $w$. Since $w$ marks the completion of the enumeration for the threshold $n$, the time $T_w$ acts as a universal cutoff for all computations described by simple programs.

\begin{proposition}[Computability for Simple Objects]
\label{prop:computability}
For any strings $a$ and $b$, if their lengths satisfy
\[ |a| + |b| < n/2, \]
then the value $\K(a \cnd b)$ is computable given inputs $n, a, b$ and $w$.
\end{proposition}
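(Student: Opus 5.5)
The plan is to turn the time $T_w$ into a cutoff that lets us evaluate every short program in the computation of $\K(a \cnd b)$. The key claim is that, given $n$ and $w$, we can compute a number $T \ge T_w$. The idea is to run the enumeration of $\{y : \C_V(y) \le n\}$ until $w$ appears; the step count is $T_w$. We then argue that every halting computation $D(p,b)$ with $|p| + |b| \le n/2 + O(1)$ halts within $T_w$ steps. Suppose instead that some such program runs longer. Then its running time $t > T_w$ has a short description: $p$ and $b$ together take about $n/2 + O(\log n)$ bits, which is less than $n - O(1)$. From $t$, or equivalently from $p$ and $b$ (run $D(p,b)$ and measure its halting time), we could run the enumeration of $V$ for $t$ steps. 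We could then output the last string of complexity $\le n$ appearing within those steps, which is $w$, provided we also know $n$.

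This would give $\C_V(w) \le \C_V(w \cnd n) + O(\log n)$, and more carefully $\C_V(w \cnd n) \le |p| + |b| + O(\log(|p|+|b|)) < n/2 + O(\log n)$. That contradicts Proposition~\ref{prop:wn_complexity}, which says $\C_V(w_n \cnd n) = n - O(1)$. So all such computations halt by time $T_w$.

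The algorithm itself is as follows. Given $n, a, b, w$, compute $T_w$ by simulating the enumeration. Then, for each prefix-free program $p$ of length at most $|a| + 2\log|a| + O(1)$, run $D(p,b)$ for $T_w$ steps. This length bound is a computable upper bound on $\K(a \cnd b)$; it is at most $n/2 - |b|$ up to lower-order terms, so we may need to slightly adjust the constant or use the slack in $|a|+|b|<n/2$. The output is the length of the shortest $p$ whose run outputs $a$. By the halting-time claim, this value equals $\K(a \cnd b)$ exactly.

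The main obstacle is the bookkeeping of the length budget. The relevant programs have length up to $|a| + O(\log|a|)$, and we also need $|b|$ and self-delimiting overheads, so that the description of $w$ given $n$ stays below $n - c$ for the constant $c$ of Proposition~\ref{prop:wn_complexity}. The hypothesis $|a| + |b| < n/2$ leaves a margin of roughly $n/2$, which easily absorbs the $O(\log n)$ overheads once $n$ is large. For small $n$ the claim can be handled by hardcoding finitely many cases.

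A second subtle point is how to define "the step at which $w$ appears" when the enumeration dovetails computations. We will fix the canonical dovetailed enumeration and let $T_w$ be the total number of simulation steps. If $D(p,b)$ halts at step $t$, then all computations with simulated time at most $t$ have been completed by the corresponding dovetail stage. This makes the contradiction argument go through.
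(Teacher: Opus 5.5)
Your proposal is correct and follows the same route as the paper: compute $T_w$ from $n$ and $w$, run every candidate program on $b$ for $T_w$ steps, and return the length of the shortest one that outputs $a$. Your justification of the cutoff is more careful than the paper's, as is your search bound $|a|+2\log|a|+O(1)$; you turn a late halting time into a short description of $w$ given $n$ and contradict Proposition~\ref{prop:wn_complexity}, whereas the paper simply asserts that any computation with a description shorter than $n$ finishes by time $T_w$ (this does not follow from $w$ being the last new output alone, since a late short program may print an already-enumerated string) and searches only up to $|a|+O(1)$, which is not an upper bound on prefix complexity in general.
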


\begin{proof}
First, we determine $T_w$ by running $V$ until it outputs $w$.
The following procedure computes $\K(a \cnd b)$:
\begin{enumerate}
    \item Simulate the universal prefix decompressor on input $(p, b)$ for all candidate programs $p$ with length up to $|a| + O(1)$, running each for at most $T_w$ steps.
    \item Collect all programs that halt and output $a$.
    \item Output the length of the shortest such program.
\end{enumerate}

\paragraph{Correctness.}
Let $p^*$ be a shortest program for the prefix decompressor such that it outputs $a$ given $b$. We must verify that this computation halts within $T_w$ steps.

Consider the description for the plain machine $V$ corresponding to the execution of $p^*$ on $b$. The length of this description is bounded by $|p^*| + |b| + O(1)$. Since $|p^*| \le |a| + O(1)$, the total length is bounded by $|a| + |b| + O(1)$.
By the hypothesis $|a| + |b| < n/2$, this length is strictly less than $n$ (for sufficiently large $n$).

Since $w$ is the last string generated by $V$ with complexity $\le n$, any computation corresponding to a shorter description must complete within $T_w$ steps. Thus, the simulation of the prefix decompressor on $p^*$ and $b$ completes within the time bound.
\end{proof}

\subsection{Proof of the \texorpdfstring{$\tilde{q}$}{q-tilde} Structure Lemma}
\label{subsec:qstr}
\begin{proof}
We search for $\tilde{q}$ among pairs of the form $T_l = (q, l)$, where $l$ is an integer in the range $0 \le l \le q$.
For any such candidate $T_l$, we define its structural complexity components relative to the scales $n$ and $q$:
\begin{align*}
    \alpha(l) &= \K(T_l \cnd n), \\
    \beta(l) &= \K(T_l, \alpha(l) \cnd q).
\end{align*}
Our goal is to find an index $l^*$ such that $\beta(l^*) = l^* + O(1)$.
Note that for every $l$ it holds that $\beta(l+1) = \beta(l) + O(1)$.

\begin{lemma}
\label{lemma:fixed_point}
There exists an integer $l^* \in [0, q-1]$ such that:
\[ \beta(l^*) = l^* + O(1). \]
\end{lemma}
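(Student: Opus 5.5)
We will prove the lemma by a discrete intermediate value argument applied to the function $f(l) = \beta(l) - l$ on the integer interval $[0, q-1]$. The already noted Lipschitz property $\beta(l+1) = \beta(l) + O(1)$ gives $f(l+1) - f(l) = O(1)$. Here is why it holds: $T_{l+1}$ is computable from $T_l$, and conversely. The value $\alpha(l+1)$ differs from $\alpha(l)$ by $O(1)$, so the pair $(T_{l+1},\alpha(l+1))$ is obtained from $(T_l,\alpha(l))$ with $O(1)$ advice (namely the bounded difference). Hence $\K(T_{l+1},\alpha(l+1)\cnd q) \le \K(T_l,\alpha(l)\cnd q) + O(1)$, and symmetrically the reverse. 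So $f$ changes by at most a constant $c$ per step. It therefore suffices to show that $f$ is nonnegative (up to $O(1)$) at one end of the interval and nonpositive at the other. Some integer $l^*$ in between then has $|f(l^*)| \le c$, i.e. $\beta(l^*) = l^* + O(1)$.

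\emph{Left endpoint.} At $l = 0$ we have $\beta(0) \ge 0$, so $f(0) \ge 0$.

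\emph{Right endpoint.} At $l = q-1$ we bound $\beta(q-1)$ from above. Since $q$ is in the condition, $T_{q-1} = (q, q-1)$ is computable from $q$ with $O(1)$ bits. The value $\alpha(q-1) = \K(T_{q-1}\cnd n) \le O(\log q)$ can be given by a self-delimiting code of length $O(\log\log q)$. Thus $\beta(q-1) \le O(\log\log q) < q-1$ for large $q$ (recall $q \ge n$ is huge), and so $f(q-1) < 0$. The sign change then yields the desired $l^*$. If one prefers to avoid relying on the lower end, note that $\beta(l) \ge \K(l \cnd q) - O(1)$ as well, but the trivial bound $\beta \ge 0$ already suffices.

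\emph{Main obstacle.} The main care point is the Lipschitz step: the increment $\alpha(l+1)-\alpha(l)$ must be bounded by a constant independent of $l$, $q$, $n$. I would justify it by mutual $O(1)$-computability of $T_l$ and $T_{l+1}$, which gives $|\K(T_{l+1}\cnd n) - \K(T_l\cnd n)| = O(1)$. Only then can the difference be encoded in constantly many bits. A second, minor point is that the constant in ``$\beta(l^*) = l^* + O(1)$'' equals the step constant $c$, which is uniform. Computability of $l^*$ is not required here; it is handled separately via Proposition~\ref{prop:computability}, since all objects involved have length $O(\log q) \ll n/2$.
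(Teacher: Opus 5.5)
Your proof is correct and is essentially the paper's argument. The paper sets $l^* = \max\{l \in [0,q] : \beta(l) \ge l\}$, using $\beta(0) \ge 0$ at the bottom, $\beta(q) = O(\log q) < q$ at the top, and the step bound $\beta(l+1) = \beta(l) + O(1)$, which is exactly your discrete intermediate-value argument; your explicit justification of the step bound via the $O(1)$ difference $\alpha(l+1)-\alpha(l)$ fills in a point the paper only asserts.
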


\begin{proof}
Consider the set of indices where the complexity exceeds the index value:
\[ S = \{ l \in [0, q] \mid \beta(l) \ge l \}. \]

\paragraph{1. Boundary Conditions.}
\begin{itemize}
    \item \textbf{Lower Bound ($l=0$):} The value $\beta(0) = \K((q, 0), \dots \cnd q)$ is non-negative. Thus, $\beta(0) \ge 0$, so $0 \in S$, and the set $S$ is non-empty.
    \item \textbf{Upper Bound ($l=q$):} The value $\beta(q) = \K((q, q), \dots \cnd q) = O(\log q)$. Since $\beta(q) = O(\log q)$, for all sufficiently large $q$ we strictly have $\beta(q) < q$. Thus, $q \notin S$.
\end{itemize}

\paragraph{2. The Max Element.}
Define:
\[ l^* = \max S. \]
By the definition of the maximum:
\begin{enumerate}
    \item Since $l^* \in S$, we have $\beta(l^*) \ge l^*$.
    \item Since $l^*$ is the maximum, the next integer $l^*+1$ is not in $S$. Therefore, $\beta(l^*+1) < l^* + 1$.
\end{enumerate}
Since $\beta(l^*+1) = \beta(l^*) + O(1)$, we have:
\[ l^* \le \beta(l^*) \le \beta(l^*+1) + O(1) < l^* + 1 + O(1). \]
This implies $\beta(l^*) = l^* + O(1)$.
\end{proof}

We define our target string as $\tilde{q} = T_{l^*} = (q, l^*)$.

\subsubsection*{Verification of Properties}

We now verify that this choice of $\tilde{q}$ satisfies the three conditions of the Lemma.

\paragraph{1. Small Length.}
Since $\tilde{q} = (q, l^*)$ and $l^* \le q$, the string is composed of binary representations of two numbers bounded by $q$.
\[ |\tilde{q}| = O(\log q). \]

\paragraph{2. Encodes $q$.}
Since $\tilde{q} = (q, l^*)$, the value $q$ is explicitly the first element of the pair.
\[ \C_V(q \cnd \tilde{q}) = O(1). \]

\paragraph{3. Complexity Decomposition.}
We must prove:
\[ \K(\tilde{q} \cnd n) = \K\Big( \tilde{q}, \, \K(\tilde{q} \cnd n), \, \K\big(\tilde{q}, \K(\tilde{q} \cnd n) \cnd q\big) \Bigm| n \Big) + O(1). \]
Let $\alpha = \K(\tilde{q} \cnd n)$ and $\beta = \beta(l^*) = \K(\tilde{q}, \alpha \cnd q)$. The RHS becomes $\K(\tilde{q}, \alpha, \beta \cnd n)$.

By Lemma~\ref{lemma:fixed_point}, we have $|\beta - l^*| \le O(1)$. Since $\tilde{q} = (q, l^*)$, the value $l^*$ is contained in $\tilde{q}$. Therefore, given $\tilde{q}$, the value $\beta$ can be described using only $O(1)$ bits.
This implies that adding $\beta$ to the condition provides no new information up to a constant:
\[ \K(\tilde{q}, \alpha, \beta \cnd n) = \K(\tilde{q}, \alpha \cnd n) + O(1). \]
Now we apply Proposition~\ref{prop:self_sufficiency} (Complexity Self-Sufficiency), which states $\K(x, \K(x \cnd y) \cnd y) = \K(x \cnd y) + O(1)$. With $x=\tilde{q}$ and $y=n$, this yields:
\[ \K(\tilde{q}, \alpha \cnd n) = \K(\tilde{q} \cnd n) + O(1). \]
Combining these steps proves the decomposition identity.

\subsubsection*{Computability of Algorithm $\mathcal{B}$}

Algorithm $\mathcal{B}$ operates by iterating $l$ from $0$ to $q$, computing the values $\alpha(l)$ and $\beta(l)$ at each step, and halting when the condition of Lemma~\ref{lemma:fixed_point} is met.

The validity of this procedure relies on the computability of the prefix complexities involved. 
For the computation of $\beta(l) = \K(T_l, \alpha(l) \cnd q)$, the condition is $b = q$ and the target is $a = (T_l, \alpha(l))$. The lengths are $|T_l| = O(\log q)$ and $|\alpha(l)| = O(\log \log q)$, so the total length of the input parameters is bounded by $O(\log q)$. By the premise of the Lemma, $q \le n^{\log n}$, meaning $\log q \le (\log n)^2$. For sufficiently large $n$, we have $O((\log n)^2) \ll n/2$.

Therefore, the length requirement $|a| + |b| < n/2$ of \textbf{Proposition~\ref{prop:computability}} is strictly satisfied for both $\alpha(l)$ and $\beta(l)$. This allows the algorithm to compute the exact values of $\alpha(l)$ and $\beta(l)$ by simulating the universal machine with the time bound $T_w$ derived from the input $w$.
Since the existence of $l^*$ is guaranteed, the algorithm always halts.
\end{proof}

\subsection{Volume Decomposition}

We decompose the set of extensions based on the moment the string $w$ appears. Let $w$ be the last string in the canonical enumeration of strings with complexity at most $n$.

We define the \textbf{``Past'' sets} $St^w$ as the sets of extensions discovered by the universal machine prior to the appearance of $w$. Specifically, for any string $y$:
\begin{align*}
    St^{w}_q(y) &= \{ s : \C_V(y, s) \le \mathcal{Q} \text{ and } (y, s) \text{ is enumerated before } w \}, \\
    St^{w}_n(y) &= \{ s : \C_V(y, s) \le \mathcal{N} \text{ and } (y, s) \text{ is enumerated before } w \}.
\end{align*}
\begin{lemma}[Volume Decomposition]
\label{lemma:volume_decomposition}
Let $\tilde{q}$ and $\alpha$ be values defined in Lemma~\ref{lemma:q_tilde}. Then:

\begin{enumerate}
    \item $|T_q(x)| \stackrel{*}{=} |\mathrm{St}^{w}_q(x)| + |T_q(x, \tilde{q}, \alpha)|$.
    \item $|T_n(\tilde{q}, x)| \stackrel{*}{=} |\mathrm{St}^{w}_n(\tilde{q}, x)| + |T_n(x, q, w)|.$
\end{enumerate}
\end{lemma}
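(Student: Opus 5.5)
We split each extension set into the part enumerated before $w$ and the part enumerated after $w$. The first part is by definition the ``Past'' set $\mathrm{St}^w$. For the second part we aim for a two-sided comparison, up to a multiplicative constant, with the extension set of a tuple that additionally carries $(\tilde q,\alpha)$ (for Item 1) or $(q,w)$ (for Item 2). The upper bound $|T_q(x)| \stackrel{*}{\le} |\mathrm{St}^w_q(x)|+|\text{Late}_q(x)|$ is trivial, where $\text{Late}_q(x)$ denotes the extensions of $x$ enumerated after $w$. The work is to show $|\text{Late}_q(x)| \stackrel{*}{=} |T_q(x,\tilde q,\alpha)|$, and likewise for Item 2.

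\textbf{Late extensions determine the extra data.} Let $(x,s)$ be a late extension with $\C_V(x,s)\le\mathcal{Q}$. Running the enumeration until $(x,s)$ appears is a computation that passes the moment $T_w$. Given $q$ (hence $n$, since $\K(n\cnd q)=O(1)$), we can stop exactly at the last string of complexity $\le n$ that appears before $(x,s)$; this is $w$. Proposition~\ref{prop:computability} then lets us compute $\alpha(l)$ and $\beta(l)$, and hence $\tilde q$ and $\alpha$ via algorithm $\mathcal{B}$. So from $(x,s)$ and $q$, and with $\mathcal{Q}$ fixed by $q$, the tuple $(x,\tilde q,\alpha,s)$ is computable. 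The step count is encoded implicitly by ``the enumeration position of this description,'' which costs $O(1)$ extra bits once the description of $(x,s)$ is given. Thus each late $s$ lies in $T^{O(1)}_q(x,\tilde q,\alpha)$. Corollary~\ref{cor:relaxed_strict} then gives $|\text{Late}_q(x)|\stackrel{*}{\le}|T_q(x,\tilde q,\alpha)|$.

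Conversely, $(x,\tilde q,\alpha,s)\mapsto(x,s)$ is computable. So every $s\in T_q(x,\tilde q,\alpha)$ lies in $T^{O(1)}_q(x)$, and we get $|T_q(x,\tilde q,\alpha)|\stackrel{*}{\le}|T_q(x)|$ after another application of Corollary~\ref{cor:relaxed_strict}, or directly via the Measure Density Lemma~\ref{lemma:measure_density}. This completes Item 1.

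Item 2 is handled the same way with scale $\mathcal{N}$ and the roles reversed. A late extension of $(\tilde q,x)$ at threshold $\mathcal{N}$ reveals $w$ as above, and $q$ is read off $\tilde q$, so it yields $(x,q,w,s)$. Conversely, from $(x,q,w,s)$ we recover $\tilde q=\mathcal{B}(n,w,q)$, with $n$ obtained from $q$, and then $(\tilde q,x,s)$. Both maps add only $O(1)$ to complexity, so the set sizes agree up to the factor $2^{O(1)}$.

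\textbf{Main obstacle.} The delicate point is the claim that a late extension ``knows'' $w$ with only $O(1)$ overhead. The description of $(x,s)$ is just a program output. To identify $w$, the decoder must know which enumeration step of the set $\{(y,k):\C_V(y)\le k\}$ corresponds to this output, and it must run that enumeration relative to the right threshold. I would handle this through the lower-semicomputable semimeasure formulation instead of individual descriptions. Define a semimeasure on tuples $(x,\tilde q,\alpha)$ that charges the mass of late pairs $(x,s)$ to the tuple computed at that moment. This is lower semicomputable given $q$, so maximality of $\m$ and the Measure Density Lemma transfer the counting bound, with the length hypotheses $|y|<\mathcal{N}/2$ checked from $|x|\le n^{\log n}$. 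If the direct simulation argument fails, this semimeasure route is the fallback.
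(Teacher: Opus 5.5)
Your direct argument for Item 1 has a gap. The decoder that outputs $(x,\tilde q,\alpha,s)$ must know $q$: it needs $n$ to recognize $w$ in the enumeration, and it needs $q$ to run $\mathcal{B}$. But the output $(x,s)$ of the given description does not contain $q$, and $T_q(\cdot)$ is defined through \emph{unconditional} $\C_V$. Supplying $q$ explicitly costs about $\K(q)$ bits, which can be of order $(\log n)^2$. You would then only get $s\in T_q^{\delta}(x,\tilde q,\alpha)$ with unbounded $\delta$, and Corollary~\ref{cor:relaxed_strict} would lose a non-constant factor $2^{O(\delta)}$, destroying the $\stackrel{*}{=}$. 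Item 2 does not have this problem, because $\tilde q$ appears in the output and yields $q$ and hence $n$.

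The paper closes the gap by reading the threshold off the length of the description $p$ of $(x,s)$. If $|p|=\mathcal{Q}$, then $q=\log|p|$. If $|p|<\mathcal{Q}$, one prepends a self-delimiting code for $\Delta=\mathcal{Q}-|p|$, costing $O(\log\Delta)\le\Delta+O(1)$, so the new description still has length $\mathcal{Q}+O(1)$.

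Your semimeasure fallback does repair the gap, by a genuinely different route. Conditioning on $q$ is harmless there, because Lemma~\ref{lemma:measure_density}, Item 1, converts $\m(\cdot\cnd q)$ into the unconditional count $|T_q(\cdot)|\,2^{-\mathcal{Q}}$. Its proof encodes the threshold in the length of the index, which is the same trick.

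To make the fallback precise, charge $2^{-\mathcal{Q}-1}$ for \emph{every} pair enumerated with complexity $\le\mathcal{Q}$. Assign it to the tuple computed from the current last $n$-simple string $w'$. Lateness is not decidable, so you need $\mathcal{B}$ and the $\alpha$-procedure to be total even for wrong candidates $w'$. Late pairs then land on $(x,\tilde q,\alpha)$, giving $|\mathrm{Late}_q(x)|\stackrel{*}{\le}2^{\mathcal{Q}}\m((x,\tilde q,\alpha)\cnd q)\stackrel{*}{\le}|T_q(x,\tilde q,\alpha)|$.

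Compared with the paper, this replaces explicit description-building and the case split on $|p|$ by maximality of $\m$, reusing the quanta argument. The paper's route is more elementary and works with individual programs.

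Finally, your converse maps only give $|T_q(x,\tilde q,\alpha)|\stackrel{*}{\le}|T_q(x)|$ and $|T_n(x,q,w)|\stackrel{*}{\le}|T_n(\tilde q,x)|$. They do not compare these sets with the late parts, so the claims that $|\mathrm{Late}_q(x)|\stackrel{*}{=}|T_q(x,\tilde q,\alpha)|$ and that ``the set sizes agree'' are overstated. The weaker inequalities, combined with $\mathrm{St}^w\subseteq T$, are exactly what the lemma needs, and they match the paper's lower-bound paragraph.
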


\begin{proof}
We prove the equality up to a multiplicative constant ($\stackrel{*}{=}$) by establishing both the lower ($\stackrel{*}{\ge}$) and upper ($\stackrel{*}{\le}$) bounds for each item.

\paragraph{Lower Bounds ($\stackrel{*}{\ge}$):}
By definition, the ``Past'' sets are subsets of the total extension sets, so $|T_q(x)| \ge |\mathrm{St}^{w}_q(x)|$ and $|T_n(\tilde{q}, x)| \ge |\mathrm{St}^{w}_n(\tilde{q}, x)|$.

For the remaining terms, we apply the Conditioning Shift (Corollary~\ref{cor:conditioning_shift}):
\begin{itemize}
    \item For Item 1: The tuple $(x, \tilde{q}, \alpha)$ trivially determines $x$ (so $\C_V(x \cnd x, \tilde{q}, \alpha) = O(1)$). Thus, dropping the extra conditions only decreases complexity, giving $|T_q(x)| \stackrel{*}{\ge} |T_q(x, \tilde{q}, \alpha)|$.
    \item For Item 2: The tuple $(\tilde{q}, x)$ can be algorithmically reconstructed from $(x, q, w)$ (since $n$ is determined by $q$, and $\tilde{q}$ is computed from $q$ and $w$ via Lemma~\ref{lemma:q_tilde}). Thus, $\C_V(\tilde{q}, x \cnd x, q, w) = O(1)$, which implies $|T_n(\tilde{q}, x)| \stackrel{*}{\ge} |T_n(x, q, w)|$.
\end{itemize}
Combining these (since the sets are disjoint or one dominates), the lower bounds hold.

\paragraph{Upper Bounds ($\stackrel{*}{\le}$):}
We must bound the number of ``Future'' extensions—those enumerated \textit{after} $w$.

\textbf{Proof for Item 1:}
Consider any string $s \in T_q(x) \setminus \mathrm{St}^{w}_q(x)$.
Since $(x, s)$ has complexity $\le \mathcal{Q}$ but is enumerated after $w$, any optimal plain program $p$ outputting $(x, s)$ has length $|p| \le \mathcal{Q}$ and halts after $w$ has appeared.

To upper bound the number of such extensions, we construct a new program that outputs the tuple $(x, \tilde{q}, \alpha, s)$. To do this, our algorithm needs to know the threshold $\mathcal{Q}$ (and thus $q$). We consider two cases based on the length of $p$:

\textit{Case 1: The simple case ($|p| = \mathcal{Q}$).}
Assume the length of $p$ is exactly $\mathcal{Q}$. The algorithm can deduce $\mathcal{Q}$ simply by measuring the length of its input $p$. By running $p$, we observe its halting state. Since $w$ is the absolute last string of complexity $\le n$ to ever be enumerated by $V$, and $p$ halts after $w$ has appeared, the last such string enumerated up to the halting time of $p$ is guaranteed to be exactly $w$.
Having identified $w$ and knowing $q$, we can compute $\tilde{q}$ using the algorithm from Lemma~\ref{lemma:q_tilde}. Furthermore, since $\tilde{q}$ is simple relative to $n$, we compute $\alpha = \K(\tilde{q} \cnd n)$ using Proposition~\ref{prop:computability}. The algorithm then outputs $(x, \tilde{q}, \alpha, s)$. The length of this new description is exactly $|p| + O(1) = \mathcal{Q} + O(1)$.

\textit{Case 2: The general case ($|p| < \mathcal{Q}$).}
If $|p| = L < \mathcal{Q}$, let $\Delta = \mathcal{Q} - L$ be the difference. We can provide $\Delta$ to our algorithm using a self-delimiting prefix code, which takes $O(\log \Delta)$ bits, followed by the raw plain program $p$. The algorithm decodes $\Delta$, reads the remaining $L$ bits as $p$, calculates the threshold $\mathcal{Q} = L + \Delta$, and proceeds exactly as in Case 1. The total length of this description is:
\[ L + O(\log \Delta) + O(1) \le L + \Delta + O(1) = \mathcal{Q} + O(1). \]

In both cases, the description length is bounded by $\mathcal{Q} + O(1)$. Therefore, every such ``Future'' extension $s$ belongs to the relaxed set $T^{O(1)}_q(x, \tilde{q}, \alpha)$.
The total number of such extensions is bounded by $|T^{O(1)}_q(x, \tilde{q}, \alpha)|$. Applying the Relaxed vs. Strict Volume property (Corollary~\ref{cor:relaxed_strict}), we get:
\[ |T^{O(1)}_q(x, \tilde{q}, \alpha)| \stackrel{*}{\le} |T_q(x, \tilde{q}, \alpha)|. \]
Summing the ``Past'' and ``Future'' components yields the upper bound for Item 1.

\textbf{Proof for Item 2:}
Consider any string $s \in T_n(\tilde{q}, x) \setminus \mathrm{St}^{w}_n(\tilde{q}, x)$.
Similarly, the optimal program of length $\le \mathcal{N}$ outputting $(\tilde{q}, x, s)$ halts after $w$ has appeared. During its decompression, we can identify $w$ exactly as described above.
Since $\tilde{q}$ structurally encodes $q$, we can trivially extract $q$ from $\tilde{q}$ in $O(1)$ steps.
We can then output the tuple $(x, q, w, s)$.
The complexity of this pair is bounded by $\mathcal{N} + O(1)$. Therefore, all such ``Future'' extensions belong to $T^{O(1)}_n(x, q, w)$.
By Corollary~\ref{cor:relaxed_strict}, the number of these extensions is bounded by $|T_n(x, q, w)| \cdot 2^{O(1)}$. Summing the Past and Future components yields the required upper bound for Item 2.
\end{proof}

\subsection{Summary of Structural Properties}
\label{subsec:structure_summary}

In this subsection, we consolidate the results established in the previous lemmas. These properties serve as the foundation for the algorithm $\mathcal{A}$ and the final counting argument.

\begin{lemma}[Structural Volume Lemma]
\label{lemma:structural_summary}
Let $n, x, q$ satisfy the assumptions of the Key Lemma~\ref{lemma:unified_gap}. Let $w$ be the last string in the canonical enumeration of strings with complexity at most $n$.
Let $\tilde{q}$ be the structural witness derived from $n, w, q$ via the $\tilde{q}$-Structure Lemma \ref{lemma:q_tilde}, and let $\alpha := \K(\tilde{q} \cnd n)$.

Then the following properties hold:

\begin{enumerate}
    \item \textbf{Low Complexity Implies Large Volume:}
    \[ \C_V(x) \le q \implies |T_q(x)| \stackrel{*}{\ge} 2^{-q} \cdot 2^{\mathcal{Q}}. \]

    \item \textbf{Volume Decomposition (Target Level):}
    \[ |T_q(x)| \stackrel{*}{=} |\mathrm{St}^{w}_q(x)| + |T_q(x, \tilde{q}, \alpha)|. \]

    \item \textbf{Log-Volume of the Remainder:}
    \[
    -\log |T_q(x, \tilde{q}, \alpha)| = \K(\tilde{q}, \alpha \cnd q) + \K(\tilde{q}, x \cnd n) - \K(\tilde{q} \cnd n) - \mathcal{Q} + O(1).
    \]

    \item \textbf{Computability:}
    The terms $\K(\tilde{q}, \alpha \cnd q)$ and $\K(\tilde{q} \cnd n)$ can be effectively computed given $n, x, q, w$.

    \item \textbf{Complexity Decomposition (Resource Level):}
    The complexity at level $n$ is determined by the dominant volume component:
    \[
    \K(\tilde{q}, x \cnd n) = \mathcal{N} - \log \max\left( |\mathrm{St}^{w}_n(\tilde{q}, x)|, |T_n(x, q, w)| \right) + O(1).
    \]
\end{enumerate}
\end{lemma}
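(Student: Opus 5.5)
The plan is to verify the five items one at a time. Each is essentially a repackaging of an earlier lemma, with the Measure Density Lemma (Lemma~\ref{lemma:measure_density}) doing the translation between volumes and a priori probabilities. Throughout, the tuples involved have length $O(\log q + |x| + |w|)$, and $|x| \le n^{\log n}$, $|w| = O(n)$, so these lengths are far below $\mathcal{N}/2$ and $\mathcal{Q}/2$. This means the length hypotheses of Lemma~\ref{lemma:measure_density} and its corollaries are satisfied in every application below. The assumption $\K(n \cnd q) = O(1)$ holds because $n = a_i$ with $a_i \le q < a_{i+1}$.

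\textbf{Item 1.} Assume $\C_V(x) \le q$. By Corollary~\ref{cor:equivalence_bounds}, $\K(x \cnd q) \le q + O(1)$. The Coding Theorem then gives $\m(x \cnd q) \stackrel{*}{\ge} 2^{-q}$. Item~1 of Lemma~\ref{lemma:measure_density} now yields $|T_q(x)| \stackrel{*}{\ge} \m(x \cnd q)\, 2^{\mathcal{Q}} \stackrel{*}{\ge} 2^{\mathcal{Q}-q}$.

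\textbf{Items 2 and 4.} Item~2 is exactly Item~1 of Lemma~\ref{lemma:volume_decomposition}. For Item~4, note that $\tilde{q}$ is computed from $(n,w,q)$ by algorithm $\mathcal{B}$. The quantities $\alpha = \K(\tilde{q} \cnd n)$ and $\K(\tilde{q},\alpha \cnd q)$ involve objects of total length $O(\log q) = O((\log n)^2) < n/2$. Proposition~\ref{prop:computability} therefore makes both of them computable from $w$.

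\textbf{Item 3.} This is the heart of the lemma. Combining both parts of Lemma~\ref{lemma:measure_density} with $\delta = 0$ gives $|T_q(y)| \stackrel{*}{=} \m(y \cnd q)\, 2^{\mathcal{Q}}$. Hence $-\log|T_q(x,\tilde{q},\alpha)| = \K(x,\tilde{q},\alpha \cnd q) - \mathcal{Q} + O(1)$. By the Complexity Symmetry Shift (Lemma~\ref{cor:symmetry}),
\[
\K(x,\tilde{q},\alpha \cnd q) = \K(\tilde{q},\alpha \cnd q) + \K(\tilde{q},x \cnd n) - \K(\tilde{q} \cnd n) + O(1),
\]
and Item~3 follows after substitution. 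The one detail to check is the reordering of tuple components, which only costs $O(1)$.

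\textbf{Item 5.} Apply the same two-sided identity at scale $n$: $\K(\tilde{q},x \cnd n) = \mathcal{N} - \log|T_n(\tilde{q},x)| + O(1)$. Then substitute Item~2 of Lemma~\ref{lemma:volume_decomposition}. Since the sum of two nonnegative numbers is within a factor~$2$ of their maximum, the logarithm of the sum equals the logarithm of the maximum up to $O(1)$.

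\textbf{Main obstacle.} I expect the weakest link to be the use of Measure Density as a genuine two-sided $\stackrel{*}{=}$. Item~1 of Lemma~\ref{lemma:measure_density} has only a floor-rounding loss. The lower bound in Item~2 with $\delta = 0$ gives $\m(y \cnd n) \stackrel{*}{\ge} |T_n(y)|\, 2^{-\mathcal{N}}$, because $\K(0) = O(1)$, so I would state the $\delta = 0$ case explicitly. The other point needing care is making sure the hypothesis $\K(n \cnd q) = O(1)$ of Lemma~\ref{cor:symmetry} is satisfied, which is where the tower-sequence range for $q$ is used.
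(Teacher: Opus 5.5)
Your proposal is correct and follows the paper's proof item by item. You use Corollary~\ref{cor:equivalence_bounds}, the Coding Theorem and Item~1 of Lemma~\ref{lemma:measure_density} for Item~1; a direct citation of Lemma~\ref{lemma:volume_decomposition} for Items~2 and~5, with $\log(A+B)=\log\max(A,B)+O(1)$; the two-sided Measure Density identity combined with Lemma~\ref{cor:symmetry} for Item~3; and Proposition~\ref{prop:computability} for Item~4.

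The only blemish is your side remark $|w| = O(n)$, which does not follow from $\C_V(w)\le n$. However, none of your Measure Density applications in this lemma involves $w$ (they involve only $x$, $(x,\tilde{q},\alpha)$ and $(\tilde{q},x)$), so nothing here depends on it.
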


\begin{proof}
\textbf{Item 1.}
By Corollary~\ref{cor:equivalence_bounds}, $\C_V(x) \le q$ implies $\K(x \cnd q) \le q + O(1)$.
By the Coding Theorem \ref{theorem:coding_theorem}, $\K(x \cnd q) = -\log \m(x \cnd q) + O(1)$. Thus:
\[ -\log \m(x \cnd q) \le q + O(1) \implies \m(x \cnd q) \stackrel{*}{\ge} 2^{-q}. \]
Finally, by the Measure Density Lemma \ref{lemma:measure_density} (applied to threshold $\mathcal{Q}$):
\[ |T_q(x)| \cdot 2^{-\mathcal{Q}} \stackrel{*}{\ge} \m(x \cnd q) \stackrel{*}{\ge} 2^{-q}. \]
Multiplying by $2^{\mathcal{Q}}$ yields the result.

\textbf{Item 2} follows directly from the definition of the decomposition sets in Lemma \ref{lemma:volume_decomposition}.

\textbf{Item 3.} We first express the volume of the set in terms of its complexity. By the Measure Density Lemma \ref{lemma:measure_density} (applied to threshold $\mathcal{Q}$) and the Coding Theorem:
\[ \log |T_q(x, \tilde{q}, \alpha)| = \mathcal{Q} - \K(x, \tilde{q}, \alpha \cnd q) + O(1). \]
Equivalently:
\[ -\log |T_q(x, \tilde{q}, \alpha)| = \K(x, \tilde{q}, \alpha \cnd q) - \mathcal{Q} + O(1). \]
Next, we expand the complexity term using the Complexity Symmetry Shift (Lemma \ref{cor:symmetry}):
\[ \K(x, \tilde{q}, \alpha \cnd q) - \K(\tilde{q}, \alpha \cnd q) = \K(\tilde{q}, x \cnd n) - \K(\tilde{q} \cnd n). \]
Solving for $\K(x, \tilde{q}, \alpha \cnd q)$ and substituting into the previous expression yields:
\[ -\log |T_q(x, \tilde{q}, \alpha)| = \left( \K(\tilde{q}, \alpha \cnd q) + \K(\tilde{q}, x \cnd n) - \K(\tilde{q} \cnd n) \right) - \mathcal{Q} + O(1). \]

\textbf{Item 4} follows from Proposition \ref{prop:computability}, since the lengths of $\tilde{q}$, $\alpha$, and $q$ satisfy the condition of that Proposition.

\textbf{Item 5.}
We apply the Volume Decomposition Lemma \ref{lemma:volume_decomposition} to level $n$ (Item 2). The total volume decomposes (up to multiplicative constants) as:
\[ |T_n(\tilde{q}, x)| \stackrel{*}{=} |\mathrm{St}^{w}_n(\tilde{q}, x)| + |T_n(x, q, w)|. \]
Taking the logarithm:
\[ \log |T_n(\tilde{q}, x)| = \log \left( |\mathrm{St}^{w}_n(\tilde{q}, x)| + |T_n(x, q, w)| \right) + O(1). \]
Since $\log(A+B) = \max(\log A, \log B) + O(1)$, we have:
\[ \log |T_n(\tilde{q}, x)| = \max \left( \log |\mathrm{St}^{w}_n(\tilde{q}, x)|, \log |T_n(x, q, w)| \right) + O(1). \]
By the Measure Density Lemma (applied to threshold $\mathcal{N}$), the complexity relates to the total volume as:
\[ \K(\tilde{q}, x \cnd n) = \mathcal{N} - \log |T_n(\tilde{q}, x)| + O(1). \]
Substituting the max-expression yields the required formula.
\end{proof}

\subsection{Proof of the Key Lemma}

\subsubsection*{The Algorithm $\mathcal{A}$}

We introduce a sufficiently large constant $\lambda$ (to cover precision gaps in the structural lemmas). The algorithm receives inputs $n, w, x, q$ and operates as follows:

\begin{enumerate}
    \item \textbf{Check Triviality at $q$ (High Past Volume):} 
    Compute the volume of extensions of $x$ at scale $q$ accumulated in the structure set:
    \[ V_{check} = |\mathrm{St}^{w}_q(x)|. \]
    If $V_{check} \ge 2^{-q - \lambda} \cdot 2^{\mathcal{Q}}$, output $B=0$ and terminate.

    \item \textbf{Compute Structural Correction:} 
    Calculate the correction factor $\Delta$ using the computable terms for $\tilde{q}$ (as guaranteed by Item 4 of Lemma~\ref{lemma:structural_summary}):
    \[ \Delta = \K(\tilde{q}, \alpha \cnd q) - \K(\tilde{q} \cnd n). \]

    \item \textbf{Compute Structure Volume at $n$:} 
    Compute the volume of extensions of the pair $(\tilde{q}, x)$ at scale $n$ accumulated so far:
    \[ V_{str} = |\mathrm{St}^{w}_n(\tilde{q}, x)|. \]

    \item \textbf{Output Threshold:} 
    Define the target count:
    \[ \mathrm{Num} = 2^{\mathcal{N} - q + \Delta - \lambda}. \]
    If $V_{str} \ge \mathrm{Num}$, output $B=0$ (the structure is already ``heavy'' enough).
    Otherwise, output $B = \mathrm{Num}$.
\end{enumerate}

\subsubsection*{Proof of Property 1: Low Complexity implies High Count}

We must show that if $\C_V(x) \le q$, then $|T_n(x, q, w)| \ge B$.

\paragraph{Step 1: Complexity to Volume.}
Assume $\C_V(x) \le q$. By \textbf{Item 1} of Lemma~\ref{lemma:structural_summary} (Low Complexity Implies Large Volume), the total volume of extensions at scale $q$ is large:
\[ |T_q(x)| \stackrel{*}{\ge} 2^{-q} \cdot 2^{\mathcal{Q}}. \]

\paragraph{Step 2: Decomposition at Scale $q$.}
By \textbf{Item 2} (Volume Decomposition), the total set partitions into ``Past'' and ``Future'':
\[ |T_q(x)| \stackrel{*}{=} |\mathrm{St}^{w}_q(x)| + |T_q(x, \tilde{q}, \alpha)|. \]
If the ``Past'' term $|\mathrm{St}^{w}_q(x)|$ is large enough to satisfy the condition in Step 1 of Algorithm $\mathcal{A}$, then $B=0$, and the condition $|T_n| \ge 0$ holds trivially. 
Otherwise, since the constant $\lambda$ in Algorithm $\mathcal{A}$ is chosen sufficiently large to strictly absorb any implicit multiplicative constants from the $\stackrel{*}{\ge}$ notation, the ``Future'' term must dominate the volume:
\[ |T_q(x, \tilde{q}, \alpha)| \stackrel{*}{\ge} 2^{-q} \cdot 2^{\mathcal{Q}}. \]

\paragraph{Step 3: Symmetry Application.}
We use \textbf{Item 3} of Lemma~\ref{lemma:structural_summary} to convert the volume bound at scale $q$ into a complexity bound at scale $n$:
\[ -\log |T_q(x, \tilde{q}, \alpha)| = \Delta + \K(\tilde{q}, x \cnd n) - \mathcal{Q} + O(1). \]
Substituting the lower bound from Step 2 (which implies $-\log |T_q| \le q - \mathcal{Q} + O(1)$):
\[ q - \mathcal{Q} \ge \Delta + \K(\tilde{q}, x \cnd n) - \mathcal{Q} - O(1). \]
Simplifying $\mathcal{Q}$:
\begin{equation}
\label{eq:compl_bound_final}
\K(\tilde{q}, x \cnd n) \le q - \Delta + O(1).
\end{equation}

\paragraph{Step 4: Decomposition at Scale $n$.}
We map this complexity bound back to volumes at level $n$ using \textbf{Item 5} of Lemma~\ref{lemma:structural_summary}. Recall that the volume is determined by the maximum of its components. Since $\max(A, B) \le A+B \le 2\max(A, B)$, we can safely replace the maximum with a sum by absorbing a $1$-bit difference into the $O(1)$ term:
\[ \K(\tilde{q}, x \cnd n) = \mathcal{N} - \log \left( V_{str} + |T_n(x, q, w)| \right) + O(1). \]
Substituting this into inequality~\eqref{eq:compl_bound_final}:
\[ \mathcal{N} - \log \left( V_{str} + |T_n(x, q, w)| \right) \le q - \Delta + O(1). \]
Rearranging to solve for the total volume:
\[ \log \left( V_{str} + |T_n(x, q, w)| \right) \ge \mathcal{N} - q + \Delta - O(1). \]

\paragraph{Step 5: Final Comparison.}
Exponentiating the result from Step 4, we get:
\begin{equation}
\label{eq:total_vol_bound}
V_{str} + |T_n(x, q, w)| \ge 2^{\mathcal{N} - q + \Delta - O(1)}.
\end{equation}
Recall that $\mathrm{Num} = 2^{\mathcal{N} - q + \Delta - \lambda}$. The bound~\eqref{eq:total_vol_bound} can be rewritten as:
\[ V_{str} + |T_n(x, q, w)| \ge \mathrm{Num} \cdot 2^{\lambda - O(1)}. \]
Let $C = 2^{\lambda - O(1)}$. Since $\lambda$ is chosen sufficiently large, $C \ge 2$. Thus:
\[ V_{str} + |T_n(x, q, w)| \ge 2 \cdot \mathrm{Num}. \]

We now analyze the two cases of the algorithm's output:
\begin{itemize}
    \item \textbf{Case A:} $V_{str} \ge \mathrm{Num}$.
    The algorithm outputs $B=0$. Since $|T_n(x, q, w)| \ge 0$, the condition holds.

    \item \textbf{Case B:} $V_{str} < \mathrm{Num}$.
    The algorithm outputs $B = \mathrm{Num}$.
    From our derived inequality:
    \[ |T_n(x, q, w)| \ge 2 \cdot \mathrm{Num} - V_{str}. \]
    Since $V_{str} < \mathrm{Num}$, we have:
    \[ |T_n(x, q, w)| > 2 \cdot \mathrm{Num} - \mathrm{Num} = \mathrm{Num} \ge B. \]
\end{itemize}
Thus, in all cases, $|T_n(x, q, w)| \ge B$.

\subsubsection*{Proof of Property 2: High Count implies Low Complexity}

We must show that if the relaxed extension count is large, i.e.,
\[ |T^{\sqrt{d}}_n (x, q, w)| \ge B, \]
then the complexity is bounded by:
\[ \C_V(x) \le q + \frac{d}{2}. \]

\paragraph{Strategy.}
To prove that $\C_V(x) \le q + d/2$, it suffices to establish the following lower bound on the volume at scale $q$:
\begin{equation}
\label{goal}
|T_q(x)| \ge 2^{\mathcal{Q} - q - O(\sqrt{d})}.
\end{equation}
Indeed, applying the \textbf{Measure Density Lemma} (Lemma~\ref{lemma:measure_density}) and the \textbf{Coding Theorem} (Theorem~\ref{theorem:coding_theorem}), this volume bound implies:
\[ \K(x \cnd q) \le q + O(\sqrt{d}). \]
Thus:
\[ \K(x \cnd q + d/3) \le q + O(\sqrt{d}). \]
For sufficiently large $d$, the term $O(\sqrt{d})$ is strictly less than $d/3$. Therefore:
\[ \K(x \cnd q + d/3) \le q + d/3. \]
By Corollary~\ref{cor:equivalence_bounds}, this implies:
\[ \C_V(x) \le q + d/3 + O(1). \]
Finally, for sufficiently large $d$, we obtain the required bound 
\[ \C_V(x) \le q + d/2. \]

We consider the execution path of Algorithm $\mathcal{A}$ that determined the value $B$.

\paragraph{Case 1: Trivial Termination at Step 1 (High Past Volume at $q$).}
Suppose the algorithm terminated at Step 1. This implies the check passed:
\[ |\mathrm{St}^{w}_q(x)| \ge 2^{-q - \lambda} \cdot 2^{\mathcal{Q}}. \]
Since the structure set is a subset of the total set ($\mathrm{St}^{w}_q(x) \subseteq T_q(x)$), we immediately have:
\[ |T_q(x)| \ge |\mathrm{St}^{w}_q(x)| \ge 2^{\mathcal{Q} - q - \lambda}. \]
Since $\lambda$ is a constant and $d$ is sufficiently large, condition (\ref{goal}) is satisfied.

\paragraph{Case 2: Trivial Termination at Step 4 (High Structure Volume).}
Suppose the algorithm terminated at Step 4 because the condition $|\mathrm{St}^{w}_n(\tilde{q}, x)| \ge \mathrm{Num}$ was met.
In this case $B=0$. Although the condition $|T^{\sqrt{d}}_n| \ge B$ holds trivially, the state of the algorithm (large structure volume) is sufficient to upper bound the complexity $\K(\tilde{q}, x \cnd n)$.

We start with the condition:
\[ |\mathrm{St}^{w}_n(\tilde{q}, x)| \ge \mathrm{Num} = 2^{\mathcal{N} - q + \Delta - \lambda}. \]
We apply \textbf{Item 5} of Lemma \ref{lemma:structural_summary} (Complexity Decomposition). Since the maximum is at least the first term:
\[ \K(\tilde{q}, x \cnd n) = \mathcal{N} - \log \max\left( |\mathrm{St}^{w}_n(\tilde{q}, x)|, |T_n(x, q, w)| \right) + O(1). \]
\[ \K(\tilde{q}, x \cnd n) \le \mathcal{N} - \log |\mathrm{St}^{w}_n(\tilde{q}, x)| + O(1). \]
Substituting the lower bound $|\mathrm{St}^{w}_n(\tilde{q}, x)| \ge \mathrm{Num}$:
\[ \K(\tilde{q}, x \cnd n) \le \mathcal{N} - \log(\mathrm{Num}) + O(1). \]
Substituting $\mathrm{Num} = 2^{\mathcal{N} - q + \Delta - \lambda}$:
\begin{equation}
\label{eq:compl_upper_case2}
\K(\tilde{q}, x \cnd n) \le q - \Delta + O(\lambda).
\end{equation}
(Note that $O(\lambda) \subset O(\sqrt{d})$ for large $d$).

Now we transfer this bound to the volume at scale $q$. We use \textbf{Item 3} of Lemma \ref{lemma:structural_summary}:
\[ -\log |T_q(x, \tilde{q}, \alpha)| = \Delta + \K(\tilde{q}, x \cnd n) - \mathcal{Q} + O(1). \]
Substituting the upper bound from (\ref{eq:compl_upper_case2}):
\begin{align*}
-\log |T_q(x, \tilde{q}, \alpha)| &\le q - \mathcal{Q} + O(\sqrt{d}).
\end{align*}
So:
\[ |T_q(x, \tilde{q}, \alpha)| \ge 2^{\mathcal{Q} - q - O(\sqrt{d})}. \]
By using Corollary~\ref{cor:conditioning_shift} (dropping the simple conditions $\tilde{q}, \alpha$), we conclude:
\[ |T_q(x)| \ge |T_q(x, \tilde{q}, \alpha)| \cdot 2^{-O(1)} \ge 2^{\mathcal{Q} - q - O(\sqrt{d})}. \]

\paragraph{Case 3: Main Case (Step 4, Non-trivial).}
Suppose the algorithm computed the non-trivial threshold $B$ in the `Otherwise` branch of Step 4. This implies that $B = \mathrm{Num}$.
Assume the premise of Property 2 holds: $|T^{\sqrt{d}}_n (x, q, w)| \ge B$. We derive the same complexity bound as in Case 2, but relying on the volume of extensions.

First, by Lemma~\ref{lemma:measure_density} (Item 2), the strict count is related to the relaxed count as:
\[ |T_n(x, q, w)| \ge |T^{\sqrt{d}}_n (x, q, w)| \cdot 2^{-O(\sqrt{d})} \ge \mathrm{Num} \cdot 2^{-O(\sqrt{d})}. \]
Next, we again apply \textbf{Item 5} of Lemma \ref{lemma:structural_summary}. The complexity is bounded by the logarithm of the maximum volume. In this case, we use the fact that the maximum is at least the second term:
\[ \K(\tilde{q}, x \cnd n) = \mathcal{N} - \log \max\left( |\mathrm{St}^{w}_n(\tilde{q}, x)|, |T_n(x, q, w)| \right) + O(1). \]
\[ \K(\tilde{q}, x \cnd n) \le \mathcal{N} - \log |T_n(x, q, w)| + O(1). \]
Substituting the strict volume bound derived above:
\[ \K(\tilde{q}, x \cnd n) \le \mathcal{N} - \log(\mathrm{Num} \cdot 2^{-O(\sqrt{d})}) + O(1). \]
Substituting $\mathrm{Num}$:
\[ \K(\tilde{q}, x \cnd n) \le \mathcal{N} - (\mathcal{N} - q + \Delta - \lambda) + O(\sqrt{d}). \]
Simplifying:
\[ \K(\tilde{q}, x \cnd n) \le q - \Delta + O(\sqrt{d}).\]

The rest of the derivation (transferring to scale $q$) is identical to Case 2, leading to the required bound on $|T_q(x)|$.

\appendix

\section{Omitted Proofs}
\label{sec:appendix_proofs}

\subsection*{Proof of Lemma \ref{lem:impl}}

\begin{proof}[Proof of Lemma \ref{lem:impl}]
    Assume $H \in \PP^{F_{\C_U}}$.
    Consider the predicate $Q(m, k)$: ``Are there at least $k$ distinct strings $x$ such that $\C_U(x) \le m$?''
    
    This predicate is computably enumerable (c.e.), as one can enumerate all programs of length up to $m$ and count their distinct outputs. Since it is c.e., $Q(m, k)$ is many-one reducible to the Halting Problem $H$.
    By our assumption $H \in \PP^{F_{\C_U}}$, there exists a polynomial-time procedure to decide $Q(m, k)$ using the oracle $F_{\C_U}$.
    
    We can now compute the exact value of $N_m$ using binary search on the range $[0, 2^{m+1}]$:
    \begin{enumerate}
        \item Initialize the search range with $L=0$ and $R=2^{m+1}$. Initialize the answer $ans = 0$.
        \item While $L \le R$, perform the following steps:
        \begin{itemize}
            \item Let $mid = \lfloor (L+R)/2 \rfloor$.
            \item Query the oracle to check if $Q(m, mid)$ is true (i.e., is $N_m \ge mid$?).
            \item If the answer is \textbf{yes}: The true count is at least $mid$. Update the tentative answer $ans \gets mid$ and search the upper half by setting $L = mid + 1$.
            \item If the answer is \textbf{no}: The true count is strictly less than $mid$. Search the lower half by setting $R = mid - 1$.
        \end{itemize}
        \item Return $ans$.
    \end{enumerate}
    This algorithm performs $O(m)$ queries. Since the input length is $m$, and each query takes polynomial time, the total running time is polynomial in $m$. Thus, a machine $M$ solving \texttt{COUNT-SIMPLE} exists.
\end{proof}

\subsection*{Proof of Proposition \ref{prop:wn_complexity}}

\begin{proof}[Proof of Proposition \ref{prop:wn_complexity}]
The upper bound is immediate; we prove the lower bound.

Let $p$ be a shortest program for $w_n$ given $n$. Let $|p| = n - c$ for some $c$.

Observe that the value $n$ can be reconstructed from the pair $( p, c )$ (specifically, $n = |p| + c$). Consider the following constructive procedure given $p$ and $c$:
\begin{enumerate}
    \item Compute $n = |p| + c$.
    \item Compute $w_n = U(p, n)$.
    \item Run the enumeration of the set $S$ until the pair $( w_n, n )$ appears.
    \item At this point, the list of all strings with complexity $\le n$ is complete. Output the lexicographically first string $z$ not present in this list.
\end{enumerate}
By definition, the resulting string $z$ must satisfy $\C_V(z) > n$.
On the other hand, $z$ is specified by the pair $( p, c )$. The complexity of this description is:
\[ \C_V(z) \le |p| + O(\log c) = (n - c) + O(\log c). \]
Combining these inequalities, we get:
\[ n < \C_V(z) \le n - c + O(\log c). \]
This implies $c \le O(\log c)$, which is only possible if $c = O(1)$. Thus, $|p| \ge n - O(1)$.
\end{proof}

\subsection*{Proof of Proposition \ref{prop:tail_bound_overview}}

\begin{proof}[Proof of Proposition \ref{prop:tail_bound_overview}]
Let $R = \operatorname{Rest}(n, u)$. Let $h$ be the maximal integer satisfying:
\[ R \ge 2^{u - n + h}. \]
Our goal is to prove that $h = O(1)$.

Let $N_u$ be the number of strings of complexity at most $u$. Note that
$ N_u = 2^{u + O(1)}.$

We claim that $w_n$ can be reconstructed given $n, u$, and the first $k = n - h + O(1)$ bits of the binary representation of $N_u$.
Indeed, by using this information we can enumerate all strings of complexity at most $u$ except the last portion of size at most $2^{u-n+h}$.

By definition of $h$, all strings of complexity at most $n$ will be enumerated within the known portion. We identify $w_n$ as the last string in this enumeration.

\paragraph{Complexity Analysis.}
Since $u$ is computable from $n$ in $O(1)$, the description length of $w_n$ given $n$ is dominated by the prefix of $N_u$:
\[ \C_V(w_n \cnd n) \le k + O(1) = n - h + O(1). \]
We now apply the lower bound from Proposition \ref{prop:wn_complexity}, which states $\C_V(w_n \cnd n) \ge n - O(1)$. Combining these inequalities we get that $h = O(1)$.
\end{proof}

\end{document}